\newtheorem{theorem}{Theorem}
\newtheorem{lemma}{Lemma}
\newtheorem{proposition}{Proposition}
\let\emptyset\varnothing
\DeclarePairedDelimiter\ceil{\lceil}{\rceil}
\def\BibTeX{{\rm B\kern-.05em{\sc i\kern-.025em b}\kern-.08em
    T\kern-.1667em\lower.7ex\hbox{E}\kern-.125emX}}
\title{How to Read and Update Coded Distributed Storage Robustly and Optimally?}
\author{
    Haobo~Jia
    and Zhuqing~Jia,~\IEEEmembership{Member,~IEEE}
    \thanks{H. Jia and Z. Jia are with the School of Artificial Intelligence, Beijing University of Posts and Telecommunications, Beijing, 100086 China (e-mail: jiahaobo@bupt.edu.cn; zhuqingj@bupt.edu.cn).}
}
\begin{document}
\maketitle
\begin{abstract}
    We consider the question of how to robustly and optimally read and update coded distributed storage, namely the problem of robust dynamic coded distributed storage (RDCDS) that is associated with the $(K_c,R_r,N)$-coded distributed storage of a message with $N$ servers where 1) it suffices to recover the message from the storage at any $R_r$ servers; and 2) each of the servers stores a coded portion of the message that is at most $\frac{1}{K_c}$ the size of the message. The goal is to enable two main functionalities: the read operation and the update operation of the message. Specifically, at time slot $t$, the user may execute either the read operation or the update operation, where the read operation allows the user to recover the message from the servers by downloading symbols, and the update operation allows the user to update the message to the servers in the form of an additive increment by uploading $X^{(t)}$-securely coded symbols so that any up to $X^{(t)}$ colluding servers reveal nothing about the increment. The two functionalities are robust if 1) they tolerate temporarily dropout servers up to certain thresholds at any time slot $t$, i.e., the read/update operation remains feasible at any time slot $t$ as long as the number of available servers exceeds a certain threshold (the read threshold is $R_r$ by definition and the update threshold is denoted as $R_u^{(t)}$); and 2) the user may remain oblivious to prior server states at any time slot $t$, i.e., no history information about server states is required during read/update operation. The communication efficiency of the two functionalities at time slot $t$ is measured by the download cost $C_r^{(t)}$ (i.e., the number of message symbols recovered per downloaded symbol) of the read operation and the upload cost $C_u^{(t)}$ (i.e., the number of message symbols updated per uploaded symbol) of the update operation. Given the storage cost factor $K_c$ and the read threshold $R_r$ where $0<K_c\leq R_r\leq N$, we are curious about the following question: what is the optimal $(R_u^{(t)},C_r^{(t)},C_u^{(t)})$ tuple? In this work, we answer the question and thus settle the fundamental limits of RDCDS. In particular, denoting the number of dropout servers at time slot $t$ as $|\mathcal{D}^{(t)}|$, we first show that 1) $R_u^{(t)}\geq N-R_r+\ceil*{K_c}+X^{(t)}$; and 2) $C_r^{(t)}\geq \frac{N-|\mathcal{D}^{(t)}|}{N-R_r+\ceil*{K_c}-|\mathcal{D}^{(t)}|}, C_u^{(t)}\geq \frac{N-|\mathcal{D}^{(t)}|}{R_r-X^{(t)}-|\mathcal{D}^{(t)}|}$. Then, inspired by the idea of staircase codes, we construct an RDCDS scheme that simultaneously achieves the above lower bounds on $R_u^{(t)},C_r^{(t)}$ and $C_u^{(t)}$. The technical aspects of our achievability scheme build upon the following novelties: 1) a novel staircase structure that minimizes the upload cost; 2) a nullspace design that optimally tolerates dropout servers during update operations; and 3) a memoryless update mechanism that requires no knowledge on prior server states or additional transactions/server-to-server communications to maintain storage consistency.

\end{abstract}

\section{Introduction}

Coded distributed storage refers to a distributed storage system (DSS) in which each server stores a codeword of the message according to a selected storage code so that the DSS is 1) efficient in terms of storage and communication costs, 2) resilient against server dropouts and failures\cite{Dimakis2006,Blaum1995,weatherspoon2002erasure,Wang2018}, and even 3) capable of efficient failure repair
\cite{dimakis2010network,Dimakis2011,Rashmi2011,Cadambe2013,Pawar2011,gopalan2012locality,Tamo2014,Cadambe2015}. 
In this work, we focus on the $(K_c,R_r,N)$-coded distributed storage, which consists of $N$ distributed servers. In this setup, a user can recover the message from any $R_r$ servers, while the storage cost at each server is limited to $\frac{1}{K_c}$ of the message size, where $0<K_c\leq N$. We address the fundamental challenge of enabling simultaneously read and update functionalities in the $(K_c,R_r,N)$-coded distributed storage. Specifically, our goal is to construct a $(K_c,R_r,N)$-coded distributed storage and associated schemes that allow a user, associated with time slot $t$, to either read/recover the message from the downloaded symbols or update the message with an additive increment by uploading $X^{(t)}$-securely coded increment symbols that disclose no information about the increment to any up to $X^{(t)}$ colluding servers (hence the {\it dynamic} aspect of the coded distributed storage accounts for ongoing message updates). Given the inherent uncertainties in DSS, we assume that at any time slot $t$, there may be a set of ``dropout'' servers that are temporarily unavailable to fulfill the user's request -- whether for reading or updating. The set of dropout servers is randomly determined at each time slot (so dropout servers at time slot $t$ may become available again in the future) but, once established, is considered globally known (to both the user and the available servers) and remains unchanged during that time slot. Since keeping track of past server states adds storage and communication costs, it is preferable for the user to remain unaware of prior server states. Incorporating robustness considerations, the problem of {\it robust} dynamic coded distributed storage (RDCDS) requires that the read and update schemes remain feasible as long as the number of available servers exceeds certain thresholds (referred to as the read threshold $R_r$ and the update threshold $R_u^{(t)}$, respectively), with no need for historical server state information. The objective of RDCDS is to enable the most robust and efficient read and update functionalities, allowing the scheme to fully exploit and adapt to the available servers to minimize communication costs, including the download cost for the read operation $C_r^{(t)}$ and the upload cost for the update operation $C_u^{(t)}$.

It is important to momentarily distinguish between updating a coded distributed storage by re-encoding the updated message and using an additive increment update. When a user has the updated message, there is a quite straightforward scheme to update the DSS: the user simply re-encodes the message and uploads it to the available servers. However, this work explores a scenario where the user wishes to update the message through an additive increment. Specifically, let the current message be denoted as \(\mathbf{W} \in \mathbb{F}_q^{L}\) and the increment as \(\boldsymbol{\Delta} \in \mathbb{F}_q^{L}\). The updated message is then defined as the sum \(\mathbf{W} + \boldsymbol{\Delta}\). Moreover, we consider the case where the user generates coded increments without prior knowledge of the current message \(\mathbf{W}\) or the server storage. Therefore, our results, including both converse bounds and achievability schemes, are applicable in scenarios where the user can generate the increment independently of the current message, thereby eliminating the cost to read or recover it. Additionally, these results are relevant when the user seeks to update the DSS via additive increment, where the procedure for generating coded increments depends solely on the increment and optional noise symbols used to protect it. There are numerous scenarios where independent increments can be applied without knowledge of the previous value. For instance, in financial systems, a series of independent transactions can increment a counter (such as an account balance) by varying amounts, with each increment applied without needing to know the prior balance. Similarly, in data aggregation within sensor networks, each increment represents new sensor data, and the message corresponds to the aggregated value of all collected data. In federated learning, updates to the global machine learning model are computed based on local data at each client. These updates (gradients) are aggregated across a set of asynchronous clients, where within each batch, each client can update the global model independently, without requiring the model state from previous clients. Moreover, generating coded increments without needing access to the current or updated message may provide additional security benefits, as, in the worst case, the update process reveals nothing beyond the increment itself.

The main result of this work is a comprehensive resolution of the following question: Given the storage cost factor $K_c$ and the read threshold $R_r$, what is the optimal $(R_u^{(t)}, C_r^{(t)}, C_u^{(t)})$ tuple? Specifically, denoting the number of dropout servers at time slot $t$ as $|\mathcal{D}^{(t)}|$, we establish that the following three lower bounds hold: $R_u^{(t)} \geq N - R_r + \lceil K_c \rceil + X^{(t)}$, $C_r^{(t)} \geq \frac{N - |\mathcal{D}^{(t)}|}{N - R_r + \lceil K_c \rceil - |\mathcal{D}^{(t)}|}$, and $C_u^{(t)} \geq \frac{N - |\mathcal{D}^{(t)}|}{R_r - X^{(t)} - |\mathcal{D}^{(t)}|}$. Then, inspired by the concept of staircase codes \cite{bitar2017staircase}, we construct an RDCDS scheme that achieves these lower bounds. An intuitive explanation of these optimality results is provided in Section \ref{sec:competing}, and the intuition behind the converse proof is outlined in Section \ref{sec:cvrseintui}.

While several related works exist, such as the adaptive cross-subspace alignment read and write (ACSA-RW) scheme \cite{Jia_Jafar_XSTPFSL} for {\it private} read and update, and the staircase code \cite{bitar2017staircase} for communication-efficient read in securely-coded distributed storage, our contribution is novel in two key aspects. First, although the ACSA-RW scheme supports private read and update, ensuring that the user's query is kept private from any up to $T$ servers, and our problem can be viewed as a special case of private read and update with $T=0$, however, no complete converse results have been established for private read and update thus far. Therefore, the optimality results in this work represent a significant step towards the settlement of the private read and update problem. Moreover, our results strictly improve upon the ACSA-RW scheme when applied to the $T=0$ case. Further discussion on this can be found in Section \ref{sec:competing}. Second, although our achievability scheme draws inspiration from the staircase code \cite{bitar2017staircase}, it turns out that a novel staircase structure is necessary to optimize the upload cost. Details of this construction are discussed in Section \ref{sec:comparesc}. Additionally, the mechanism of nullspace design for tolerating dropout servers, specifically the ``ACSA Null-shaper'' building block in \cite{Jia_Jafar_XSTPFSL}, is generalized for the staircase structure. We refer the readers to Section \ref{sec:compareacsa} for more information on this generalization.

The remainder of this paper is organized as follows. In Section \ref{sec:ps}, we formally define the problem of RDCDS, and Section \ref{sec:mainres} presents the main results of this work as Theorem \ref{thm:main}, along with several key observations. Section \ref{sec:converse} is dedicated to the converse proof of Theorem \ref{thm:main}, and in Section \ref{sec:achiv}, we present our achievability scheme for the theorem. Finally, we conclude the paper in Section \ref{sec:conclu}. 

\emph{Notation:} Bold symbols are used to denote vectors and matrices, while calligraphic symbols denote sets. Following the convention, let the empty product be the multiplicative identity, and the empty sum be the additive identity. For two positive integers $M,N$ such that $M\le N,[M:N]$ denotes the set $\{M,M+1,\cdots,N\}$. We use the shorthand notation $[N]$ for $[1:N]$. $\mathbb{N}$ denotes the set of non-negative integers $\{0,1,2,\cdots\}$, and $\mathbb{N}^*$ denotes the set $\mathbb{N}\setminus\{0\}$. For a subset of integers $\mathcal{C}$, $\mathcal{C}(i),i\in[|\mathcal{C}|]$ denotes its $i^{th}$ element in ascending order. For a row (column) vector $\mathbf{v}$ of dimension $n$, $\mathbf{v}(\mathcal{I}), \mathcal{I}\subset[n]$ denotes the row (column) vector formed by the entries indexed by $\mathcal{I}$. We use the shorthand notation $\mathbf{v}(i)$ for $\mathbf{v}(\{i\})$, i.e., the $i^{th}$ entry of $\mathbf{v}$. For an $m\times n$ matrix $\mathbf{V}$ and two sets $\mathcal{A}\subset[m],\mathcal{B}\subset[n]$, $\mathbf{V}(\mathcal{A},\mathcal{B})$ denotes the submatrix of $\mathbf{V}$ formed by selecting rows indexed by $\mathcal{A}$ and columns indexed by $\mathcal{B}$. If $\mathcal{A}=[m]$ (or $\mathcal{B}=[n]$), it is abbreviated as the colon operator ($:$) in this context. We use the shorthand notation $\mathbf{V}(a,\mathcal{B}), \mathbf{V}(\mathcal{A},b), \mathbf{V}(a,b)$ for $\mathbf{V}(\{a\},\mathcal{B}),\mathbf{V}(\mathcal{A},\{b\}), \mathbf{V}(\{a\},\{b\})$ respectively. $\mathbf{0}_{m\times n}$ denotes the zero matrix of size $m\times n$.

\section{Problem Statement: RDCDS}\label{sec:ps}

\begin{figure}[!htbp]
    \centering
    \begin{subfigure}{1\columnwidth}
        \centering
        \begin{tikzpicture}[xscale=0.7,yscale=1]

        \node [draw, rectangle,fill=teal!10, text=black, inner sep =0.2cm, rounded corners=0.5em] (S1) at (-7.5cm, 0cm) {\small\begin{tabular}{c}Server $1$ \\ $\mathbf{S}^{(t)}_{1}$\end{tabular}};

        \node [draw, rectangle,fill=teal!10, text=black, inner sep =0.2cm, rounded corners=0.5em] (S2) at (-2.5cm, 0cm) {\small\begin{tabular}{c} Server $2$\\ $\mathbf{S}^{(t)}_{2}$\end{tabular}};

        \node [rectangle, inner sep =0.2cm, rounded corners=0.5em] (Ddots1) at (0.5cm, 0cm) {$\cdots$};

        \node [draw, rectangle,fill=teal!10, text=black, inner sep =0.2cm, rounded corners=0.5em] (S3) at (3.5cm, 0cm) {\small\begin{tabular}{c}Server $i$ \\ $\mathbf{S}^{(t)}_{i}$\end{tabular}};

        \node [rectangle, inner sep =0.2cm, rounded corners=0.5em] (Ddots2) at (6.5cm, 0cm) {$\cdots$};

        \node [draw, rectangle,fill=teal!10, text=black, inner sep =0.2cm, rounded corners=0.5em] (S4) at (9.5cm, 0cm) {\small\begin{tabular}{c}Server $N$ \\ $\mathbf{S}^{(t)}_{N}$\end{tabular}};

        \node[bob,minimum size=1cm] (User) at (1cm, -4cm) {User};

        \draw [black, thick, ->] (S1.south)to node [left=0.5cm] {\small $ \mathbf{A}_{1}^{(t)}$} (User);
        \draw [black, thick, ->] (S2.south)to node [left=0.2cm] {\small $ \mathbf{A}_{2}^{(t)}$} (User);
        \draw [red, dashed, ->] (S3.south)to node [left=0.1cm] {\large $\varnothing$} (User);
        \draw [black, thick, ->] (S4.south)to node [left=0.5cm] {\small $ \mathbf{A}_{N}^{(t)}$} (User);

        \node[below=0.2cm of User] (bk){};
        
        \node[minimum size=0.3cm, inner sep=0.1cm] (W) at (1cm, -6cm) {$\mathbf{W}^{(t)}$};
        \draw [black, thick, ->] (bk)--(W);
    
    \end{tikzpicture}
    \caption{The read operation is executed at time slot $t$, where Server $i$ is unavailable, i.e., $i\in\mathcal{D}^{(t)}$.}
    \vspace{0.5cm}
    \end{subfigure}
    
    \begin{subfigure}{1\columnwidth}
        \centering
        \begin{tikzpicture}[xscale=0.7,yscale=1]

        \node [draw, rectangle,fill=teal!10, text=black, inner sep =0.2cm, rounded corners=0.5em] (S1) at (-7.5cm, 0cm) {\small\begin{tabular}{c}Server $1$ \\ $\mathbf{S}^{(t+1)}_{1}$\\$\uparrow$\\$\mathbf{S}^{(t)}_{1}$\end{tabular}};

        \node [draw, rectangle,fill=teal!10, text=black, inner sep =0.2cm, rounded corners=0.5em] (S2) at (-2.5cm, 0cm) {\small\begin{tabular}{c} Server $2$\\ $\mathbf{S}^{(t+1)}_{2}$\\$\uparrow$\\ $\mathbf{S}^{(t)}_{2}$\end{tabular}};

        \node [rectangle, inner sep =0.2cm, rounded corners=0.5em] (Ddots1) at (0.5cm, 0cm) {$\cdots$};

        \node [draw, rectangle,fill=teal!10, text=black, inner sep =0.2cm, rounded corners=0.5em] (S3) at (3.5cm, 0cm) {\small\begin{tabular}{c}Server $i$\\ $\mathbf{S}^{(t+1)}_{i}$\\$\parallel$ \\ $\mathbf{S}^{(t)}_{i}$\end{tabular}};

        \node [rectangle, inner sep =0.2cm, rounded corners=0.5em] (Ddots2) at (6.5cm, 0cm) {$\cdots$};

        \node [draw, rectangle,fill=teal!10, text=black, inner sep =0.2cm, rounded corners=0.5em] (S4) at (9.5cm, 0cm) {\small\begin{tabular}{c}Server $N$\\ $\mathbf{S}^{(t+1)}_{N}$\\$\uparrow$ \\ $\mathbf{S}^{(t)}_{N}$\end{tabular}};

        \node[bob,minimum size=1cm] (User) at (1cm, -4cm) {User};
        
        \draw [black, thick,  ->] (User)to node [right=0.5cm] {\small $ \mathbf{Q}_{1}^{(t)}$}(S1.south);
        \draw [black, thick,  ->] (User)to node [right=0.2cm] {\small $ \mathbf{Q}_{2}^{(t)}$}(S2.south); %
        \draw [red, dashed,  ->] (User)to node [right=0.1cm] {\large $\varnothing$}(S3.south);
        \draw [black, thick,  ->] (User)to node [right=0.5cm] {\small $ \mathbf{Q}_{N}^{(t)}$}(S4.south);
        
        \node[below=0.2cm of User] (bk){};
        
        \node[minimum size=0.3cm, inner sep=0.1cm] (W) at (1cm, -6cm) {$\boldsymbol{\Delta}^{(t)}$};
        \draw [black, thick, ->] (W)--(bk);
        
        \end{tikzpicture}
        \caption{The update operation is executed at time slot $t$, where Server $i$ is unavailable, i.e, $i\in\mathcal{D}^{(t)}$. Note that the storage at Server $i$ cannot be updated, $\mathbf{S}_i^{(t+1)}=\mathbf{S}_i^{(t)}$.}

    \end{subfigure}
    
    \caption{The problem of robust dynamic coded distributed storage (RDCDS).}
    \label{fig:RDCDS}
\end{figure}
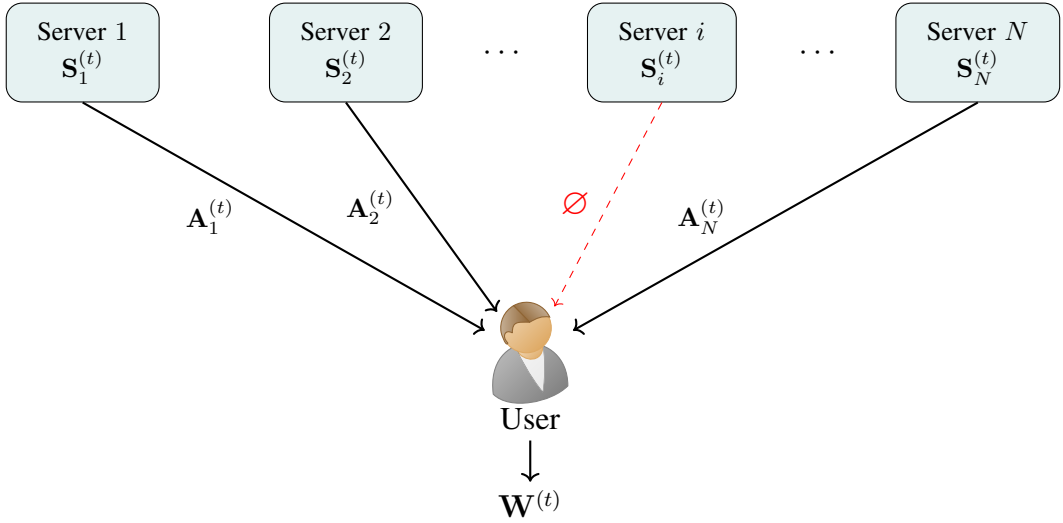
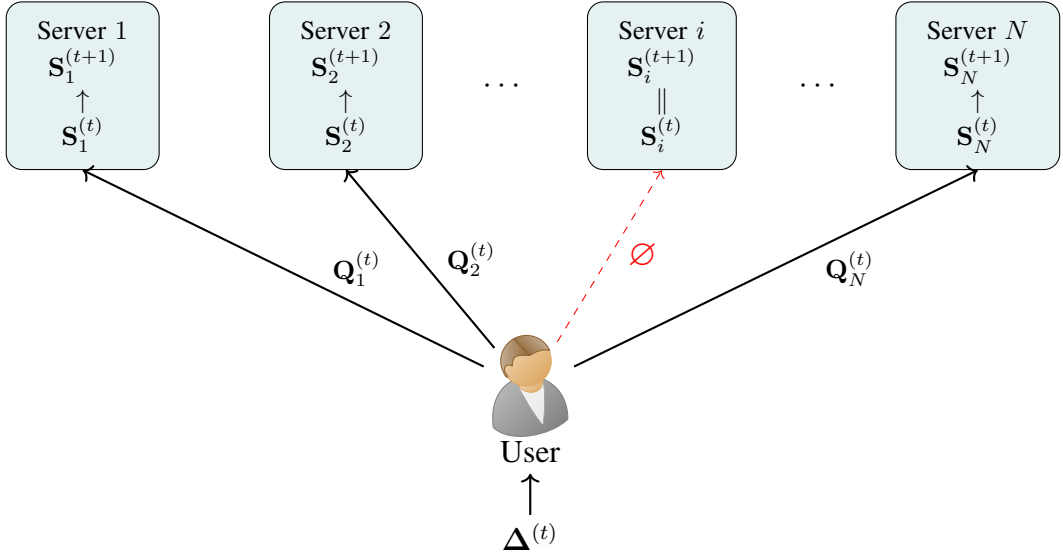

Consider a distributed storage system (DSS) of $N$ servers. As shown in Figure \ref{fig:RDCDS}, the problem of RDCDS is associated with the coded distributed storage of a message over time slots $t$, i.e., for all $t\in\mathbb{N}$, the collection of the storage at all servers must represent an $(K_c,R_r,N)$-secure storage of the message $\mathbf{W}^{(t)}$ that consists of $L$ (i.i.d.) symbols from the finite field $\mathbb{F}_q$, where $0<K_c\leq R_r\leq N, K_c\in\mathbb{R}$. In other words, denoting the storage at Server $n$ as $\mathbf{S}_n^{(t)}$, for all $t\in\mathbb{N}$, we have
\begin{itemize}
    \item {\bf $R_r$-recoverability: }The message must be a deterministic function of the storage at any $R_r$ servers, i.e., for all $\mathcal{R}\subset [N]$ such that $|\mathcal{R}|=R_r$,
    \begin{align}
        H(\mathbf{W}^{(t)}\mid (\mathbf{S}_r^{(t)})_{r\in\mathcal{R}})=0.\label{def:Rr-rec}
    \end{align}
    \item {\bf $K_c$-storage cost: }The storage at any server is at most $\frac{1}{K_c}$ the size of the message, i.e., for all $n\in[N]$,
    \begin{align}\label{eq:minstor}
        H(\mathbf{S}_n^{(t)})\leq \frac{L}{K_c}
    \end{align}
    in $q$-ary units.
\end{itemize}
Note that the initial secure storage $(\mathbf{S}_n^{(0)})_{n\in[N]}$ and the corresponding message $\mathbf{W}^{(0)}$ is initialized {\it a priori}, e.g., by the global coordinator, etc.

There is a series of users, and for each time slot $t, t\in\mathbb{N}^*$, there is one user associated with it. At any time slot $t$, the user may wish to execute either the read operation or the update operation. Due to various uncertainties in the DSS, a subset of servers may be temporarily unavailable to respond, which are referred to as dropout servers. We assume that the set of dropout servers, denoted as $\mathcal{D}^{(t)}$, is globally known prior to the operation and remains constant during the time slot. Besides, we also assume that the servers experience all possible states for read and update operations within finite time slots, i.e., there exists a positive integer $t_0$ such that at time slots $t_1<t_2<\cdots<t_0$, the user executes a series of read and update operations with all possible constraints and the servers experience all possible dropouts $\mathcal{D}^{(t)}, t=t_1,t_2,\cdots,t_0$ during these operations. Recall that robustness requires that the user must remain oblivious to prior server states (including the current time index $t$ which corresponds to the number of read and update operations that have already been executed), therefore it only makes sense if we focus on the steady state of the system, i.e., for sufficiently large $t>t_0$.

If the read operation is executed at time slot $t$, the user downloads symbols from the available servers $[N]\setminus\mathcal{D}^{(t)}$ to recover the message $\mathbf{W}^{(t)}$. Denoting downloaded symbols from Server $n$ as $\mathbf{A}_n^{(t)}$, we have
\begin{itemize}
    \item {\bf Determinacy: }The downloaded symbols from each of the servers must be a deterministic function of its storage, i.e., for all $n\in[N]\setminus\mathcal{D}^{(t)}$,
    \begin{align}
        H(\mathbf{A}_n^{(t)}\mid \mathbf{S}_n^{(t)})=0.\label{def:detmin}
    \end{align}
    \item {\bf Correctness: }The message must be recoverable from the downloads, i.e.,
    \begin{align}
        H(\mathbf{W}^{(t)}\mid (\mathbf{A}_{n}^{(t)})_{n\in [N]\setminus\mathcal{D}^{(t)}})=0.\label{def:dlcrec}
    \end{align}
    \item {\bf Storage transition: }The storage at each server is untouched by the read operation, i.e., for all $n\in[N]$,
    \begin{align}\label{def:stortranrd}
        \mathbf{S}_n^{(t+1)}=\mathbf{S}_n^{(t)}
    \end{align}
    and
    \begin{align}\label{def:wtranrd}
        \mathbf{W}^{(t+1)}=\mathbf{W}^{(t)}.
    \end{align}
\end{itemize}
For an RDCDS scheme, the communication efficiency of the read operation is characterized by the normalized download cost $C_r^{(t)}$, defined as
\begin{align}
    C_r^{(t)}=\frac{\sum_{n\in [N]\setminus\mathcal{D}^{(t)}} H(\mathbf{A}_n^{(t)})}{L}.
\end{align}

On the other hand, if the update operation is executed at time slot $t$, the message stored in the DSS is updated by the user-generated increment $\boldsymbol{\Delta}^{(t)}$ consisting of $L$ (i.i.d.) symbols from the finite field $\mathbb{F}_q$. To this end, each of the available servers updates its storage according to the $X^{(t)}$-securely coded increment uploaded by the user, denoted as $\mathbf{Q}_n^{(t)}, n\in[N]\setminus\mathcal{D}^{(t)}$, such that
\begin{itemize}
    \item {\bf Correctness: }The message must be additively updated by the increment, i.e.,
    \begin{align}
        \mathbf{W}^{(t+1)}=\mathbf{W}^{(t)}+\boldsymbol{\Delta}^{(t)}.\label{def:up-correct}
    \end{align}
    \item {\bf $X^{(t)}$-security: }The increment must be independent of any $X^{(t)}$ coded increments, $0\leq X^{(t)}\leq N$, i.e., for all $\mathcal{X}\subset [N]\setminus\mathcal{D}^{(t)}$ such that $|\mathcal{X}|=X^{(t)}$,
    \begin{align}
        I(\boldsymbol{\Delta}^{(t)}; (\mathbf{Q}_n^{(t)})_{n\in\mathcal{X}})=0.\label{def:Xsec}
    \end{align}
    \item {\bf Storage transition: }The updated storage at each of the available servers must be a deterministic function of the current storage and the coded increment, i.e., for all $n\in [N]\setminus\mathcal{D}^{(t)}$
    \begin{align}
        H(\mathbf{S}_n^{(t+1)}\mid \mathbf{S}_n^{(t)}, \mathbf{Q}_n^{(t)})=0,\label{def:stotrans}
    \end{align}
    On the other hand, the storage at the dropout servers must left untouched, i.e., for all $n\in\mathcal{D}^{(t)}$,
    \begin{align}\label{eq:uddropstor}
        \mathbf{S}_n^{(t+1)}=\mathbf{S}_n^{(t)}.
    \end{align}
    \item {\bf Independence: }The increment is independent of the current message, and the user has no prior information on the server storage, i.e.,
    \begin{align}
        I(\boldsymbol{\Delta}^{(t)},(\mathbf{Q}_n^{(t)})_{n\in[N]\setminus\mathcal{D}^{(t)}};(\mathbf{S}_n^{(t)})_{n\in[N]})=0.\label{def:ind}
    \end{align}
    
\end{itemize}
For an RDCDS scheme, let us define the update threshold $R_u^{(t)}$ as the minimum number of available servers required by the scheme such that the update operation is feasible at time slot $t$. Besides, the communication efficiency of the update operation is measured by the normalized upload cost $C_u^{(t)}$, defined as
\begin{align}
    C_u^{(t)}=\frac{\sum_{n\in [N]\setminus\mathcal{D}^{(t)}} H(\mathbf{Q}_n^{(t)})}{L}.
\end{align}

\section{Main Results}\label{sec:mainres}
The main result of this work is the complete characterization of the best possible update threshold $R_u^{(t)}$, download cost $C_r^{(t)}$ and upload cost $C_u^{(t)}$ for the problem of RDCDS, as formally stated in the following theorem.
\begin{theorem}\label{thm:main}
{\bf (Converse) }For any RDCDS scheme, at any time slot $t\in\mathbb{N}^*, t>t_0$, the following bounds holds.
\begin{align}
    R_u^{(t)}&\geq N-R_r+\ceil*{K_c}+X^{(t)}\\
    C_r^{(t)}&\geq \frac{N-|\mathcal{D}^{(t)}|}{N-R_r+\ceil*{K_c}-|\mathcal{D}^{(t)}|}\\
    C_u^{(t)}&\geq \frac{N-|\mathcal{D}^{(t)}|}{R_r-X^{(t)}-|\mathcal{D}^{(t)}|}.
\end{align}
{\bf (Achievability) }The RDCDS scheme presented in Section \ref{sec:achiv} achieves the following update threshold, download cost and upload cost at any time slot $t\in\mathbb{N}^*$.
\begin{align}
    R_u^{(t)}&= N-R_r+\ceil*{K_c}+X^{(t)}\\
    C_r^{(t)}&= \frac{N-|\mathcal{D}^{(t)}|}{N-R_r+\ceil*{K_c}-|\mathcal{D}^{(t)}|}\\
    C_u^{(t)}&= \frac{N-|\mathcal{D}^{(t)}|}{R_r-X^{(t)}-|\mathcal{D}^{(t)}|}.
\end{align}
\end{theorem}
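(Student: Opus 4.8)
The plan is to prove the converse and the achievability separately. For the converse I would establish the three inequalities one at a time, each by isolating the structural constraint it really tests; for the achievability I would exhibit the scheme of Section~\ref{sec:achiv} and verify a single invariant by induction on the time slot. Fix throughout a steady-state slot $t>t_0$ and the dropout set $\mathcal{D}^{(t)}$, and write $\mathcal{A}=[N]\setminus\mathcal{D}^{(t)}$ and $A=|\mathcal{A}|=N-|\mathcal{D}^{(t)}|$. The fact that powers the whole converse is that obliviousness to prior server states forces the scheme's read and update maps (for a fixed $\mathcal{D}^{(t)}$) to succeed on \emph{every} reachable storage configuration simultaneously, while the steady-state assumption makes the worst such configuration genuinely reachable; concretely, a server that was unavailable during the most recent update keeps ``stale'' storage which, because each increment is fresh uniform noise by \eqref{def:ind}, is independent of $\mathbf{W}^{(t)}$ and hence useless for recovering it, even though \eqref{def:Rr-rec} must still hold for $\mathbf{W}^{(t)}$. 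I would first package this as a lemma: along a suitable history one realizes a storage in which a prescribed set of up to $R_r-\ceil*{K_c}$ available servers are behind, so that the rest must carry the entire recovery burden.

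Given the lemma, the update-threshold bound $R_u^{(t)}\ge N-R_r+\ceil*{K_c}+X^{(t)}$ becomes a counting argument: the $|\mathcal{D}^{(t)}|$ servers that drop out during the update keep stale storage by \eqref{eq:uddropstor}, $X^{(t)}$-security \eqref{def:Xsec} makes the update spend $X^{(t)}$ servers' worth of freedom on noise, and the per-server budget \eqref{eq:minstor} means the up-to-date servers can hold the $L$-symbol message only if there are at least $\ceil*{K_c}$ of them (the ceiling, not $K_c$, because a server stores an integer number of symbols); since recoverability \eqref{def:Rr-rec}--\eqref{def:up-correct} of $\mathbf{W}^{(t+1)}$ must survive, $A<N-R_r+\ceil*{K_c}+X^{(t)}$ is infeasible. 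The download bound is then a Han-type averaging: the lemma shows that in the worst case the downloads from any $A-(R_r-\ceil*{K_c})=N-R_r+\ceil*{K_c}-|\mathcal{D}^{(t)}|$ of the available servers already determine $\mathbf{W}^{(t)}$, so $L=H(\mathbf{W}^{(t)})\le\sum_{n\in\mathcal{S}}H(\mathbf{A}_n^{(t)})$ for every such $\mathcal{S}\subseteq\mathcal{A}$; averaging over all $\mathcal{S}$ of that size gives $\sum_{n\in\mathcal{A}}H(\mathbf{A}_n^{(t)})\ge\frac{N-|\mathcal{D}^{(t)}|}{N-R_r+\ceil*{K_c}-|\mathcal{D}^{(t)}|}L$, i.e.\ the bound on $C_r^{(t)}$. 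The upload bound is the dual: $X^{(t)}$-security forces any $X^{(t)}$ uploads to carry pure noise, while \eqref{def:stotrans}, \eqref{def:up-correct} and recoverability of $\mathbf{W}^{(t+1)}$ force the uploads, seen through the storage transition, to deliver $\boldsymbol{\Delta}^{(t)}$ so that any $R_r-X^{(t)}-|\mathcal{D}^{(t)}|$ effective servers reconstruct it; the same averaging then yields $\sum_{n\in\mathcal{A}}H(\mathbf{Q}_n^{(t)})\ge\frac{N-|\mathcal{D}^{(t)}|}{R_r-X^{(t)}-|\mathcal{D}^{(t)}|}L$.

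For the achievability I would present the staircase-based scheme and prove, by induction on $t$, the invariant that the storage $(\mathbf{S}_n^{(t)})_{n}$ always has a prescribed layered (``staircase'') form that (a) is a valid $(K_c,R_r,N)$-secure storage of $\mathbf{W}^{(t)}$ with a bounded staleness budget, (b) admits a cross-subspace-alignment read in which each available server returns exactly $L/(N-R_r+\ceil*{K_c}-|\mathcal{D}^{(t)}|)$ symbols, and (c) admits a memoryless, nullspace-shaped, $X^{(t)}$-secure update in which each available server receives $L/(R_r-X^{(t)}-|\mathcal{D}^{(t)}|)$ symbols and which, whenever at least $N-R_r+\ceil*{K_c}+X^{(t)}$ servers are available, re-establishes the invariant by ageing out the layers that a returning stale server had missed. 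The three ingredients flagged in the introduction are exactly what make the inductive step go through: the novel staircase shape drives the upload cost (not merely the download cost) down to its lower bound; the generalized ACSA null-shaper lets the same uploaded symbols be correct for every admissible dropout set; and the memoryless update map reconciles stale servers without the user knowing their state and without any server-to-server transaction.

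I expect the main obstacle to be twofold. On the converse side it is formalizing ``worst-case staleness in the steady state'': the time-$t$ storage is a history-dependent object, so the delicate point is to argue that the dynamics genuinely realize the staleness patterns the entropy bounds need, and to package obliviousness as a single ``one map, all configurations'' quantifier before the Han-type averaging. On the achievability side it is designing one update map that is at once oblivious to past dropouts, self-reconciling onto the staircase, $X^{(t)}$-secure, dropout-tolerant up to the largest admissible $\mathcal{D}^{(t)}$, and exactly cost-optimal; carrying the inductive invariant --- in particular $R_r$-recoverability and the per-server storage budget --- through this map is the technical heart of the construction.
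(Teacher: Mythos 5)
Your overall route is the paper's: a steady-state independence lemma followed by Han-type averaging for the three converse bounds, and a staircase/Cauchy storage with a nullspace-shaped, memoryless update verified through an inductive invariant for achievability. However, there is one genuine gap, in the download-cost converse. The mechanism you give --- ``a server that was unavailable during the most recent update keeps stale storage which \ldots is independent of $\mathbf{W}^{(t)}$'' --- does not yield the lemma you need. The steady-state assumption only guarantees that each worst-case pattern of size $R_r-\ceil*{K_c}$ (dropouts together with, in general, $X^{(t')}$ colluding servers shielded by \eqref{def:Xsec}, cf.\ Lemma \ref{lemma:sdindcondw}) occurred at some slot $t'\le t_0$, possibly far in the past; at the current slot $t>t_0$ those servers have since participated in arbitrarily many updates, so they are not ``behind'' at time $t$ and one cannot appeal to the most recent update. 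What the bound actually requires --- and what the paper proves as its longest step, Lemma \ref{lemma:indeinh} --- is a persistence statement: once $(\mathbf{S}_n^{(t'+1)})_{n\in\mathcal{X}\cup\mathcal{D}^{(t')}}$ is independent of $\mathbf{W}^{(t'+1)}$, this independence survives every subsequent read and update, because by \eqref{def:ind} the coded increments are generated independently of the storage, and by Lemma \ref{lemma:increcover} the uploads determine $\boldsymbol{\Delta}^{(t+\tau)}$ while being independent of $\mathbf{W}^{(t+\tau)}+\boldsymbol{\Delta}^{(t+\tau)}$, so conditioning on them cannot re-correlate the watched servers' storage with the new message. You flag this as ``the delicate point'' but give no argument for it; without it the inequality $L\le\sum_{n\in\mathcal{S}}H(\mathbf{A}_n^{(t)})$ has no justification for the independence \eqref{eq:scxsec} it rests on. Your other two bounds are fine in outline: the counting argument for $R_u^{(t)}$ is the paper's entropy argument (with the ceiling coming from integrality of the number of dropout servers, not of per-server symbols), and the upload bound is Lemma \ref{lemma:increcover} plus $X^{(t)}$-security and averaging.

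On achievability your plan matches the paper's, with one conceptual correction: no ``ageing out'' or reconciliation of layers missed by a returning stale server is needed (nor possible under memorylessness). In the paper's scheme the recursive construction of $\mathbf{H}^{(t)}_1,\dots,\mathbf{H}^{(t)}_{G^{(t)}}$ forces the coded increment of every dropout server to be exactly zero, so its untouched storage \eqref{eq:uddropstor} already equals $\mathbf{C}(n,:)\mathbf{M}^{(t+1)}$; the staircase invariant then holds verbatim by Proposition \ref{prop:add}, and dropout servers are never ``behind'' in any sense that would require repair.
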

\subsection{Observations}
\subsubsection{The Competing $R_r, R_u^{(t)}, C_r^{(t)}$ and $C_u^{(t)}$}\label{sec:competing}
\begin{figure}[!h]
  \centering
  \begin{tikzpicture}[xscale=0.9,yscale=0.9]

    \coordinate (P) at (0, 0);
    \coordinate (Xd) at (2, 0);
    \coordinate (Dr) at (5, 0);
    
    \coordinate (X) at (8.5, 0);
    \coordinate (R) at (10.5, 0);
    \coordinate (NsubR) at (14, 0);
    \coordinate (N) at (17, 0);

    \draw[-latex] (P)--(N) node[thick]{};
    \draw (P) node [above,yshift=7pt,thick]{$0$} -- ++(0, 6pt) ;
    \draw (Dr) node [above,yshift=7pt,thick]{$X^{(t)}+|\mathcal{D}^{(t)}|$} -- ++(0, 6pt);
    \draw (Xd) node [above,yshift=7pt,thick]{$X^{(t)}$} -- ++(0, 6pt) ;
    
    \draw (X) node [above,yshift=7pt,thick]{$R_r-\ceil*{K_c}$} -- ++(0, 6pt) ;
    \draw (R) node [above,yshift=7pt,thick]{$R_r$} -- ++(0, 6pt) ;
    \draw (NsubR) node [above,yshift=7pt,thick]{$N-|\mathcal{D}^{(t)}|$} -- ++(0, 6pt) ;
    \draw (N) node [above,yshift=7pt,thick]{$N$} (0, 6pt) ;

    \draw[dashed] (P) -- ++(0, -4);
    \draw[dashed] (Dr) -- ++(0, -4);
    \draw[dashed] (Xd) -- ++(0, -4);
    \draw[dashed] (X) -- ++(0, -4);
    \draw[dashed] (R) -- ++(0, -4);
    \draw[dashed] (NsubR) -- ++(0, -4);
    \draw[dashed] (N) -- ++(0, -4);

    \draw[color=black,thick,pattern=north west lines,pattern color=red!70!black, rounded corners=0.2em] ($(X)+(0, -0.5)$) rectangle ($(NsubR)+(0, -1.3)$) node[fill=white, pos=.5] {\footnotesize Carry info. during read, $D_1$};

    \draw[color=black,thick,pattern=north west lines,pattern color=blue!70!black, rounded corners=0.2em] ($(Dr)+(0, -1.4)$) rectangle ($(R)+(0, -2.2)$) node[fill=white, pos=.5] {\footnotesize Carry info. during update, $D_2$};

    \draw[color=black,thick,pattern= north west lines,pattern color=green!70!black, rounded corners=0.2em] ($(R)+(0, -2.3)$) rectangle ($(N)+(0, -3.1)$) node[fill=white, pos=.5] {\footnotesize Tolerate read dropouts, $D_3$};

    \draw[color=black,thick,pattern= north west lines,pattern color=yellow!70!black, rounded corners=0.2em] ($(Xd)+(0, -3.2)$) rectangle ($(X)+(0, -4.0)$) node[fill=white, pos=.5] {\footnotesize Tolerate update dropouts, $D_4$};

  \end{tikzpicture}
  \caption{At any time slot $t$, the total of $N$ servers are represented as an $N$ dimensional space, partitioned according to the parameters as shown in the axis. The $4$ horizontal bars, from top to bottom, illustrate the maximum possible number of dimensions that can be exploited to carry desired information during the read operation, the update operation and to tolerate dropout servers during the read operation, the update operation and respectively.}
  \label{fig:dims}
\end{figure}
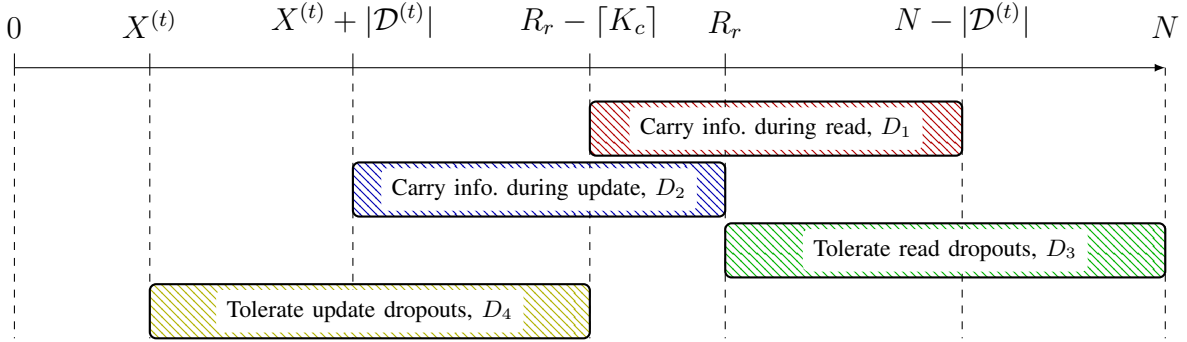

Recall that the results in Theorem \ref{thm:main} are interpreted as the minimum possible update threshold $R_u^{(t)}$, download cost $C_r^{(t)}$, and upload cost $C_u^{(t)}$ for the RDCDS problem. The numerators of the optimal $C_r^{(t)}$ and $C_u^{(t)}$ are both $N-|\mathcal{D}^{(t)}|$, representing the number of available servers at time slot $t$. To further understand the optimality results in Theorem \ref{thm:main}, let us focus on the denominators of the optimal $C_r^{(t)}$ and $C_u^{(t)}$. These denominators represent the maximum number of dimensions (out of the $N-|\mathcal{D}^{(t)}|$ dimensions of available servers) that can be used to carry desired information during the read and update operations. For example, out of $N-|\mathcal{D}^{(t)}|$ dimensions, the maximum number of dimensions that can carry desired information during the read operation is given by $D_1=N-R_r+\ceil*{K_c}-|\mathcal{D}^{(t)}|$. Similarly, during the update operation, this number is $D_2=R_r-X^{(t)}-|\mathcal{D}^{(t)}|$. On the other hand, since the minimum possible $R_u^{(t)}$ is $N-R_r+\ceil*{K_c}+X^{(t)}$, if the user executes the update operation at time slot $t$, the maximum number of dropout servers one can tolerate is $D_4=R_r-\ceil*{K_c}-X^{(t)}$. Similarly, this number for the read operation is $D_3=N-R_r$.

Now, let us conceptually illustrate the total of $N$ servers as an $N$-dimensional space partitioned by the parameters $\ceil*{K_c}$, $R_r$, $X^{(t)}$, and $|\mathcal{D}^{(t)}|$. The quantities $D_1$, $D_2$, $D_3$, and $D_4$ are accordingly represented in Figure \ref{fig:dims}. Since the communication efficiency (the ability to tolerate dropout servers) during the read and update operations improves as $D_1$ and $D_2$ ($D_3$ and $D_4$) increase, respectively, the trade-off among $R_r$, $R_u^{(t)}$, $C_r^{(t)}$, and $C_u^{(t)}$ is now evident from Figure \ref{fig:dims}, which reveals an interesting symmetry in the trade-off among $R_r, R_u^{(t)}, C_r^{(t)}$ and $C_u^{(t)}$. To see this, let us fix the number of servers, the security threshold for coded increment and the number of dropout servers, i.e., $N, X^{(t)}$ and $|\mathcal{D}^{(t)}|$. Then it is clear according to Figure \ref{fig:dims} that the upload cost $C_u^{(t)}$ and the read threshold $R_r$ turn out to be an opposed pair once $R_r-\ceil*{K_c}$ is fixed. On the other hand, we can trade-off $C_r^{(t)}$ and $R_u^{(t)}$ with various $\ceil*{K_c}$ by fixing $R_r$.

It is also of interest to compare the result of Theorem \ref{thm:main} to the ACSA-RW scheme in \cite{Jia_Jafar_XSTPFSL}, where the goal is to allow {\it private} read and update, i.e., the secure storage consists of $K$ messages and any up to $T, T\geq 1$ colluding servers must reveal nothing about the index of the target message (referred to as $T$-privacy). While \cite[Theorem 1]{Jia_Jafar_XSTPFSL} is an achievability result, surprisingly, the read and update thresholds in \cite[Theorem 1]{Jia_Jafar_XSTPFSL} have an analogous form to that of our results. Specifically, the maximum number of dropout servers during the read and the update operation in \cite[Theorem 1]{Jia_Jafar_XSTPFSL} is $N-R_r-T$ and $R_r-K_c-X^{(t)}-T$, respectively (note that $K_c$ in \cite[Theorem 1]{Jia_Jafar_XSTPFSL} must be integer valued). Therefore, it is straightforward to notice the analogy by considering $T$ as the penalty of $T$-privacy. Recall that our result characterizes the maximum possible number of dropout servers, thus this analogy bodes well for the information-theoretic optimality of the threshold values in \cite[Theorem 1]{Jia_Jafar_XSTPFSL}. In terms of communication efficiency, however, the analogy disappears as the achievability scheme in \cite[Theorem 1]{Jia_Jafar_XSTPFSL} fails to exploit the $R_r-(R_r-\ceil*{K_c})=\ceil*{K_c}$ dimensions in the center of the axis to carry desired symbols. Indeed, whether it is possible or not to exploit these dimensions for private read and update is widely open up-to-date as the capacity (i.e., the reciprocal of the minimum possible download cost) of private read operation remains unsolved even for asymptotic settings (i.e., large $K$)\cite{Sun_Jafar_MDSTPIR}. The best achievability result of private read\cite{Jia_Jafar_MDSXSTPIR} fails in this regard and the impossibility is conjectured to be true for asymptotic settings.

\subsubsection{Intuition Behind the Converse Bounds}\label{sec:cvrseintui}
While the formal proof of the converse bounds in Theorem \ref{thm:main} is presented in the form of information-theoretic inequalities in Section \ref{sec:converse} for rigorousness, here let us sketch the proof by explaining the intuition behind it. First of all, the bound on the upload cost $C_u^{(t)}\geq \frac{N-|\mathcal{D}^{(t)}|}{R_r-X^{(t)}-|\mathcal{D}^{(t)}|}$ is perhaps the most intuitive one. Recall that after the update, i.e., at time slot $t+1$, we must be able to recover the message $\mathbf{W}^{(t+1)}$ from {\it any} $R_r$ servers, since the increment $\boldsymbol{\Delta}^{(t)}$ is independent of the current message $\mathbf{W}^{(t)}$ and there are a total of $|\mathcal{D}^{(t)}|$ dropout servers, so in the worst case, one must be able to recover the increment $\boldsymbol{\Delta}^{(t)}$ from {\it any} $R_r-|\mathcal{D}^{(t)}|$ coded increments $\mathbf{Q}_n^{(t)}, n\in[N]\setminus\mathcal{D}^{(t)}$. Therefore, the coded increments must form a threshold secret sharing of the security threshold $X^{(t)}$ and the recovery threshold $R_r-|\mathcal{D}^{(t)}|$, and the bound $C_u^{(t)}\geq \frac{N-|\mathcal{D}^{(t)}|}{R_r-X^{(t)}-|\mathcal{D}^{(t)}|}$ applies according to the standard result of threshold secret sharing (see, e.g., \cite{bitar2017staircase}).

Now let us consider the bound on the update threshold $R_u^{(t)}$, which is essentially explained via the following thought experiment. Assume that the update operation is executed at time slot $t$, and the storage at time slot $t$ at all $N$ servers is made globally known by a genie (so is the current message $\mathbf{W}^{(t)}$). Then at time slot $t+1$, due to the fact that the updated message $\mathbf{W}^{(t)}+\boldsymbol{\Delta}^{(t)}$ must be recoverable from the storage at any $R_r$ servers, and the storage at the dropout servers $\mathcal{D}^{(t)}$ must remain untouched, thus is considered as constant due to the genie, the storage at the servers $[N]\setminus\mathcal{D}^{(t)}$ can be essentially\footnote{The side information, i.e., $\mathbf{S}_n^{(t)}, n\in[N]$, turns out to be useless in terms of reducing the amount of storage required by the secret sharing at the servers $[N]\setminus\mathcal{D}^{(t)}$.} regarded as a threshold secret sharing of the increment $\boldsymbol{\Delta}^{(t)}$, where the security threshold is $X^{(t)}$ and the recovery threshold is $R_r-|\mathcal{D}^{(t)}|$. The standard result of threshold secret sharing shows that the total normalized storage cost of the servers $[N]\setminus\mathcal{D}^{(t)}$ is at least $\frac{N-|\mathcal{D}^{(t)}|}{R_r-X^{(t)}-|\mathcal{D}^{(t)}|}$. On the other hand, according to the storage cost constraint, the total normalized storage cost of the servers $[N]\setminus\mathcal{D}^{(t)}$ is at most $\frac{N-|\mathcal{D}^{(t)}|}{K_c}$, so the number of dropout servers $|\mathcal{D}^{(t)}|$ is at most $R_r-\ceil*{K_c}-X^{(t)}$, from which it applies that $R_u^{(t)}\geq N-R_r+\ceil*{K_c}+X^{(t)}$.

Perhaps the most subtle one is the lower bound on the download cost $C_r^{(t)}$, since the problem setup only requires that the message is recoverable from any $R_r$ servers, which implies only the trivial bound $C_r^{(t)}\geq 1$. Indeed, the lower bound on $C_r^{(t)}$ relies on the lower bound on the update threshold $R_u^{(t)}$ and the assumption that we only consider the steady state of the system, i.e., $t\geq t_0$. First, during the update operations, since the ($X^{(t)}$-securely) coded increment is independent of the current storage and the storage at the dropout servers is untouched, upon the update, for any set $\mathcal{X}\subset[N]$ such that $|\mathcal{X}|=X^{(t)}$ and $\mathcal{D}^{(t)}\cap \mathcal{X}=\emptyset$, the storage at the servers $\mathcal{D}^{(t)}\cup \mathcal{X}$ is independent of the updated message $\mathbf{W}^{(t+1)}=\mathbf{W}^{(t)}+\boldsymbol{\Delta}^{(t)}$. Besides, it turns out that the succeeding read and update operations preserve this independence (which is evident for read operations and for update operations, this is essentially because the user has no prior knowledge on the server storage when generating the coded increments). In other words, if the storage at time slot $t+1$ at the servers $\mathcal{D}^{(t)}\cup \mathcal{X}$ is independent of the message $\mathbf{W}^{(t+1)}$, then for all $\tau>1$, the storage at time slot $t+\tau$ at the servers $\mathcal{D}^{(t)}\cup \mathcal{X}$ is independent of the message $\mathbf{W}^{(t+\tau)}$. Therefore, recall that the maximum number of dropout servers during the update operation is obtained as $R_r-\ceil*{K_c}-X^{(t)}$, once the system enters the steady state, it must have been experienced all possible worst cases in terms of update dropouts, i.e., all possible sets $\mathcal{D}^{(t)}\cup \mathcal{X}\subset[N]$ such that $|\mathcal{D}^{(t)}\cup \mathcal{X}|=R_r-\ceil*{K_c}$. Therefore, for any $t>t_0$, the storage at any up to $R_r-\ceil*{K_c}$ servers is independent of the message $\mathbf{W}^{(t)}$, i.e., the storage at the $N$ servers necessarily form a threshold secret sharing of $\mathbf{W}^{(t)}$ where the security threshold is $X=R_r-\ceil*{K_c}$. The desired bound thus follows from the standard result of threshold secret sharing.

The above argument on the lower bound of the download cost reveals a surprising aspect of the problem of RDCDS, i.e., once the steady state is achieved, the $X=(R_r-\ceil*{K_c})$-security of the storage at the $N$ distributed servers is granted for free. On the one hand, since our focus is on the steady state, from the perspective of the achievability scheme, it is advisable to construct the coded distributed storage that guarantees the $X=(R_r-\ceil*{K_c})$-security over all time slots $t\in\mathbb{N}$. This indeed aligns with our proposed achievability scheme. On the other hand, we note that $R_r-\ceil*{K_c}$ represents the maximum possible value of the security threshold given $K_c$ and $R_r$, as we must have $R_r-X\geq K_c$. Thus, our results also trivially settle the problem of RDCDS with secure storage constraints, i.e., the maximum possible security level is necessarily achieved.

\subsubsection{Comparison to the Staircase Code \cite{bitar2017staircase}}\label{sec:comparesc}
As mentioned, our achievability scheme is inspired by the staircase code \cite{bitar2017staircase}. Intuitively, since in the steady state, the coded distributed storage must satisfy the $(R_r-\ceil*{K_c})$-security, and the scheme must adapt itself to the number of dropout servers to optimize the communication cost during read and update operations, the staircase code for securely coded distributed storage, whose goal is to adaptively achieve the optimal download cost during read operations for all possible number of dropout servers, is considered as a promising starting point. However, recall that the goal of RDCDS is to allow communication-efficient read and update operations simultaneously, it turns out that we cannot treat the two operations separately by solely constructing an update mechanism for the staircase code. This is explained via a motivating example in the following. Consider the setting where $N=4,R_r=2,K_c=1$. According to the construction of the staircase code \cite{bitar2017staircase}, at any time slot $t$, the message $\mathbf{W}^{(t)}$ consists of $L=6$ symbols, and for all $n\in[N]$, the storage is the $n^{th}$ row of the following matrix-product
\begin{align}
    \underbrace{
    \begin{bmatrix}
    1& x_1 &x_1^2 &x_1^3\\
    1& x_2 &x_2^2 &x_2^3\\
    1& x_3 &x_3^2 &x_3^3\\
    1& x_4 &x_4^2 &x_4^3\\
    \end{bmatrix}}_{\mathbf{V}}
    \underbrace{\left[\begin{NiceArray}{cccccc}
        W^{(t)}_1&W^{(t)}_2&\Block[transparent, fill=Plum!20,rounded-corners]{2-1}{}Z_1& \Block[transparent, fill=Cyan!20,rounded-corners]{1-3}{}W^{(t)}_5&W^{(t)}_6&Z_3\\
        W^{(t)}_3&W^{(t)}_4&Z_2&Z_4&Z_5&Z_6\\
        \Block[transparent, fill=Cyan!20,rounded-corners]{1-3}{}W^{(t)}_5&W^{(t)}_6&Z_3&0&0&0\\
        \Block[transparent, fill=Plum!20,rounded-corners]{1-2}{}Z_1&Z_2&0&0&0&0
    \end{NiceArray}\right]}_{\mathbf{M}^{(t)}},
\end{align}
where the matrix $\mathbf{V}$ on the LHS is a Vandermonde matrix, and $Z_i, i\in[6]$ are independent interference/noise symbols. Now assume the user wishes to update the message $\mathbf{W}^{(t)}$ with an increment $\boldsymbol{\Delta}^{(t)}$ to end up with $\mathbf{W}^{(t+1)}=\mathbf{W}^{(t)}+\boldsymbol{\Delta}^{(t)}$. For the sake of simplicity, we assume no dropout servers and set $X^{(t)}=0$, i.e., no security constraint on the coded increment. To preserve the storage structure so that the succeeding users may remain oblivious to history server states, we consider the following construction of the coded increment
\begin{align}
    \underbrace{\begin{bmatrix}
        1& x_1 &x_1^2 &x_1^3\\
    1& x_2 &x_2^2 &x_2^3\\
    1& x_3 &x_3^2 &x_3^3\\
    1& x_4 &x_4^2 &x_4^3\\
    \end{bmatrix}}_{\mathbf{V}}\underbrace{\left[\begin{NiceArray}{cccccc}
        \Delta^{(t)}_1&\Delta^{(t)}_2&\Block[transparent, fill=Plum!20,rounded-corners]{2-1}{}0&\Block[transparent, fill=Cyan!20,rounded-corners]{1-3}{}\Delta^{(t)}_5&\Delta^{(t)}_6&0\\
        \Delta^{(t)}_3&\Delta^{(t)}_4&0&0&0&0\\
        \Block[transparent, fill=Cyan!20,rounded-corners]{1-3}{}\Delta^{(t)}_5&\Delta^{(t)}_6&0&0&0&0\\
        \Block[transparent, fill=Plum!20,rounded-corners]{1-2}{}0&0&0&0&0&0
    \end{NiceArray}\right]}_{\dot{\mathbf{M}}^{(t)}},\label{eq:obs3sc}
\end{align}
where the $n^{th}$ row is uploaded to Server $n$. The updated storage is then simply the summation of the current storage and the received coded increment. While it can be easily seen that the scheme updates the storage correctly as $\mathbf{M}^{(t)}$ and $\dot{\mathbf{M}}^{(t)}$ share the same structure, the normalized upload cost is however $\frac{16}{6}=\frac{8}{3}$ (note that all zero columns produce zero coded symbols that need no upload cost), which does not match the converse bound $C_u^{(t)}\geq \frac{4}{2}=2$ for this case. In fact, a close inspection reveals that the staircase structure, i.e., the matrix $\mathbf{M}^{(t)}$ and $\dot{\mathbf{M}}^{(t)}$, is not optimal in terms of the construction of update mechanism. To understand this, note that since there are no dropout servers and security requirements, the interference/noise symbols in the matrix $\dot{\mathbf{M}}^{(t)}$ of \eqref{eq:obs3sc} are simply zeros. However, the staircase structure fails to completely take advantage to further reduce the communication cost as the message/increment symbols also appear in some of the rightmost few columns of $\mathbf{M}^{(t)}$/$\dot{\mathbf{M}}^{(t)}$. Therefore, our goal now is to construct an alternative staircase structure that restricts the message/increment symbols to the leftmost few columns, illustrated as follows.
\begin{align}
    \mathbf{M}^{(t)}&=
    \left[\begin{NiceArray}{cccccc}
        W^{(t)}_1&W^{(t)}_2&\Block[transparent, fill=Plum!20,rounded-corners]{2-1}{}W^{(t)}_5&\Block[transparent, fill=Cyan!20,rounded-corners]{1-3}{}Z_1&Z_2&Z_3\\
        W^{(t)}_3&W^{(t)}_4&W^{(t)}_6&Z_4&Z_5&Z_6\\
        \Block[transparent, fill=Plum!20,rounded-corners]{1-2}{}W^{(t)}_5&W^{(t)}_6&\Block[transparent, fill=Cyan!20,rounded-corners]{1-1}{}Z_3&0&0&0\\
        \Block[transparent, fill=Cyan!20,rounded-corners]{1-2}{}Z_1&Z_2&0&0&0&0
    \end{NiceArray}\right]\label{eq:obs3newm}\\
    \dot{\mathbf{M}}^{(t)}&=
    \left[\begin{NiceArray}{cccccc}
        \Delta^{(t)}_1&\Delta^{(t)}_2&\Block[transparent, fill=Plum!20,rounded-corners]{2-1}{}\Delta^{(t)}_5&\Block[transparent, fill=Cyan!20,rounded-corners]{1-3}{}0&0&0\\
        \Delta^{(t)}_3&\Delta^{(t)}_4&\Delta^{(t)}_6&0&0&0\\
        \Block[transparent, fill=Plum!20,rounded-corners]{1-2}{}\Delta^{(t)}_5&\Delta^{(t)}_6&\Block[transparent, fill=Cyan!20,rounded-corners]{1-1}{}0&0&0&0\\
        \Block[transparent, fill=Cyan!20,rounded-corners]{1-2}{}0&0&0&0&0&0
    \end{NiceArray}\right].\label{eq:obs3newmd}
\end{align}
Now since the rightmost $3$ columns of $\dot{\mathbf{M}}^{(t)}$ are zeros, the normalized upload cost is calculated as $\frac{12}{6}=2$, which matches the converse bound. The new structure is generalized for arbitrary settings, we refer the readers to Section \ref{sec:achiv} for details.

Indeed, one may notice that this new staircase structure does not preserve decodability when there are read dropout servers if the same Vandermonde encoding matrix is used. This is because a successive interference cancellation decoding strategy is used to recover the message in read operations, and conditioning on previously decoded symbols, the resulting linear system is not necessarily invertible due to non-consecutive powers in the Vandermonde structure. As a workaround, we use Cauchy encoding matrices $\mathbf{C}$ to guarantee the invertibility.

\subsubsection{Comparison to the ACSA-RW Scheme \cite{Jia_Jafar_XSTPFSL}}\label{sec:compareacsa}
Earlier in this section, we compared the performance metrics of the achievability scheme presented in this work with those of the ACSA-RW scheme in \cite{Jia_Jafar_XSTPFSL}. Here, let us explain the primary difference in how each scheme tolerates update dropouts. By and large, the two schemes share the similar idea of exploiting redundant symbols to construct the nullspace such that the coded increment $\mathbf{Q}_n^{(t)}$ for the dropout servers $n\in\mathcal{D}^{(t)}$ is guaranteed to be zero, i.e., the ``update'' of the storage at the dropout servers $\mathbf{S}_n^{(t+1)}=\mathbf{S}_n^{(t)}+\mathbf{Q}_n^{(t)}=\mathbf{S}_n^{(t)}+\mathbf{0}=\mathbf{S}_n^{(t)}$ keeps the storage untouched. Specifically, in the ACSA-RW scheme\cite{Jia_Jafar_XSTPFSL}, the introduction of the nullspace is done by the construction block called {\it null-shaper}, which essentially regards the coded increment as the codewords of an evaluation code (of a rational function), and the {\it null-shaper} is indeed a polynomial to be multiplied with, which evaluates zero at selected points that represent the dropout servers. While the Cauchy encoding matrix used in our achievability scheme can be viewed as an evaluation code, we note that however, the idea of {\it null-shaper} does not apply to the achievability scheme in this work as the staircase structure is not necessarily preserved by the {\it null-shaper}. In this work, the construction of the nullspace is achieved via a recursive strategy that exploits the redundant symbols and the staircase structure. To further elaborate, consider the example in \eqref{eq:obs3newm}, \eqref{eq:obs3newmd}. The complete staircase structure (including redundant symbols that are set to zero for best communication efficiency in \eqref{eq:obs3newmd}) for the increment is as follows.
\begin{align}
    \dot{\mathbf{M}}^{(t)}&=
    \left[\begin{NiceArray}{cccccc}
        \Delta^{(t)}_1&\Delta^{(t)}_2&\Block[transparent, fill=Plum!20,rounded-corners]{2-1}{}\Delta^{(t)}_5&\Block[transparent, fill=Cyan!20,rounded-corners]{1-3}{}H_1&H_2&H_3\\
        \Delta^{(t)}_3&\Delta^{(t)}_4&\Delta^{(t)}_6&H_4&H_5&H_6\\
        \Block[transparent, fill=Plum!20,rounded-corners]{1-2}{}\Delta^{(t)}_5&\Delta^{(t)}_6&\Block[transparent, fill=Cyan!20,rounded-corners]{1-1}{}H_3&0&0&0\\
        \Block[transparent, fill=Cyan!20,rounded-corners]{1-2}{}H_1&H_2&0&0&0&0
    \end{NiceArray}\right]
\end{align}
Now let us assume that there is one dropout server, i.e., $|\mathcal{D}^{(t)}|=1$. Recall that our goal is to construct the nullspace such that the coded increment for the dropout server, $\mathbf{C}(\mathcal{D}^{(t)},:)\dot{\mathbf{M}}^{(t)}$, is zero, by exploiting the redundant symbols $H_1,H_2,\cdots,H_6$. This is equivalent to forcing $\mathbf{C}(\mathcal{D}^{(t)},:)\dot{\mathbf{M}}^{(t)}(:,i)=0$ for all $i=1,2,\cdots, 6$. Starting from the first constraint, i.e., $\mathbf{C}(\mathcal{D}^{(t)},:)\dot{\mathbf{M}}^{(t)}(:,1)=0$, one may notice that this can be achieved by viewing $H_1$ as an unknown and solving the corresponding linear system. The linear system is indeed invertible due to the Cauchy structure. Similarly, $H_2, H_3$ can be solved such that $\mathbf{C}(\mathcal{D}^{(t)},:)\dot{\mathbf{M}}^{(t)}(:,i)=0$ for $i=2,3$. At this point, let us fix $H_1, H_2$ and $H_3$ (i.e., viewed as constants), so that we can correspondingly regard $H_4, H_5$ and $H_6$ as unknowns and solve the linear systems such that $\mathbf{C}(\mathcal{D}^{(t)},:)\dot{\mathbf{M}}^{(t)}(:,i)=0$ for $i=4,5,6$. This is what we refer to as the recursive strategy, and the readers are referred to Section \ref{sec:achiv} for more details.

\section{Proof of Theorem \ref{thm:main}: Converse}\label{sec:converse}
We need the following lemmas to proceed.

\begin{lemma}\label{lemma:increcover}
    Assume that the update operation is executed at time slot $t$. Then for all $\mathcal{R}\subset[N]\setminus\mathcal{D}^{(t)}$ such that $|\mathcal{R}|=R_r-|\mathcal{D}^{(t)}|$, we have $H(\boldsymbol{\Delta}^{(t)}\mid (\mathbf{Q}_n^{(t)})_{n\in\mathcal{R}})=0$.
\end{lemma}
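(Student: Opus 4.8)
The plan is to show that the coded increments $(\mathbf{Q}_n^{(t)})_{n\in\mathcal{R}}$ together with the (globally known) storage at the dropout servers $(\mathbf{S}_n^{(t)})_{n\in\mathcal{D}^{(t)}}$ determine the updated message $\mathbf{W}^{(t+1)}$ restricted to the servers $\mathcal{R}\cup\mathcal{D}^{(t)}$, and then peel off the part that is already fixed by the current storage, which is independent of the increment. First I would invoke the $R_r$-recoverability property \eqref{def:Rr-rec} at time slot $t+1$ applied to the set $\mathcal{R}\cup\mathcal{D}^{(t)}$, which has size exactly $(R_r-|\mathcal{D}^{(t)}|)+|\mathcal{D}^{(t)}|=R_r$, giving $H(\mathbf{W}^{(t+1)}\mid(\mathbf{S}_n^{(t+1)})_{n\in\mathcal{R}\cup\mathcal{D}^{(t)}})=0$. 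Next, by the storage transition rules, for $n\in\mathcal{D}^{(t)}$ we have $\mathbf{S}_n^{(t+1)}=\mathbf{S}_n^{(t)}$ by \eqref{eq:uddropstor}, and for $n\in\mathcal{R}\subseteq[N]\setminus\mathcal{D}^{(t)}$ we have $H(\mathbf{S}_n^{(t+1)}\mid\mathbf{S}_n^{(t)},\mathbf{Q}_n^{(t)})=0$ by \eqref{def:stotrans}. Hence the collection $\big((\mathbf{S}_n^{(t)})_{n\in\mathcal{R}\cup\mathcal{D}^{(t)}},(\mathbf{Q}_n^{(t)})_{n\in\mathcal{R}}\big)$ determines $(\mathbf{S}_n^{(t+1)})_{n\in\mathcal{R}\cup\mathcal{D}^{(t)}}$, and therefore determines $\mathbf{W}^{(t+1)}$.

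Combining this with correctness of the update \eqref{def:up-correct}, namely $\boldsymbol{\Delta}^{(t)}=\mathbf{W}^{(t+1)}-\mathbf{W}^{(t)}$, and with $R_r$-recoverability at time slot $t$ applied to any $R_r$-subset containing $\mathcal{R}\cup\mathcal{D}^{(t)}$ (so that $(\mathbf{S}_n^{(t)})_{n}$ over that set determines $\mathbf{W}^{(t)}$ as well), I get that
\begin{align}
    H\big(\boldsymbol{\Delta}^{(t)}\mid (\mathbf{Q}_n^{(t)})_{n\in\mathcal{R}},(\mathbf{S}_n^{(t)})_{n\in[N]}\big)=0.
\end{align}
The remaining task is to remove the conditioning on the current storage $(\mathbf{S}_n^{(t)})_{n\in[N]}$. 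This is exactly where the independence axiom \eqref{def:ind} enters: it states $I(\boldsymbol{\Delta}^{(t)},(\mathbf{Q}_n^{(t)})_{n\in[N]\setminus\mathcal{D}^{(t)}};(\mathbf{S}_n^{(t)})_{n\in[N]})=0$, so conditioning the pair $(\boldsymbol{\Delta}^{(t)},(\mathbf{Q}_n^{(t)})_{n\in\mathcal{R}})$ on $(\mathbf{S}_n^{(t)})_{n\in[N]}$ does not change its joint distribution. Formally, $H(\boldsymbol{\Delta}^{(t)}\mid (\mathbf{Q}_n^{(t)})_{n\in\mathcal{R}}) = H(\boldsymbol{\Delta}^{(t)}\mid (\mathbf{Q}_n^{(t)})_{n\in\mathcal{R}},(\mathbf{S}_n^{(t)})_{n\in[N]}) = 0$, where the first equality is the standard identity that conditional entropy is unchanged when further conditioning on a variable that is jointly independent of the variables already present.

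I expect the main obstacle to be bookkeeping the independence step carefully rather than any deep difficulty: one must make sure \eqref{def:ind} is strong enough to justify dropping $(\mathbf{S}_n^{(t)})_{n\in[N]}$ from the conditioning of a conditional entropy (not just an unconditional one), which it is, since it asserts joint independence of the entire tuple $(\boldsymbol{\Delta}^{(t)},(\mathbf{Q}_n^{(t)})_n)$ from $(\mathbf{S}_n^{(t)})_n$. A secondary subtlety is that Lemma~\ref{lemma:increcover} should hold uniformly over which $R_r$-superset of $\mathcal{R}\cup\mathcal{D}^{(t)}$ one picks in applying $R_r$-recoverability at time $t$; since $|\mathcal{R}\cup\mathcal{D}^{(t)}|=R_r$ already, no superset is needed and this subtlety disappears — the set $\mathcal{R}\cup\mathcal{D}^{(t)}$ is itself an admissible recovery set at both time slots, which is the clean reason the argument closes.
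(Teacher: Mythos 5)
Your proposal is correct and follows essentially the same route as the paper's proof: both arguments show that $(\mathbf{S}_n^{(t)})_{n\in[N]}$ together with $(\mathbf{Q}_n^{(t)})_{n\in\mathcal{R}}$ determines $(\mathbf{S}_n^{(t+1)})_{n\in\mathcal{R}\cup\mathcal{D}^{(t)}}$ via \eqref{def:stotrans} and \eqref{eq:uddropstor}, hence $\mathbf{W}^{(t+1)}$ by $R_r$-recoverability and then $\boldsymbol{\Delta}^{(t)}=\mathbf{W}^{(t+1)}-\mathbf{W}^{(t)}$, and both remove the conditioning on the current storage using the independence constraint \eqref{def:ind}. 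You merely present the steps in the opposite order (determinism first, then dropping the conditioning), which is an equivalent rearrangement of the paper's chain of inequalities.
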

\begin{proof}
    Let us define $\mathcal{R}'=\mathcal{R}\cup\mathcal{D}^{(t)}$. Note that $|\mathcal{R}'|=R_r$, we thus have
    \begin{align}
        &H(\boldsymbol{\Delta}^{(t)}\mid (\mathbf{Q}_n^{(t)})_{n\in\mathcal{R}})\\
        =&H(\boldsymbol{\Delta}^{(t)}\mid (\mathbf{Q}_n^{(t)})_{n\in\mathcal{R}},(\mathbf{S}_n^{(t)})_{n\in[N]})+I(\boldsymbol{\Delta}^{(t)};(\mathbf{S}_n^{(t)})_{n\in[N]}\mid (\mathbf{Q}_n^{(t)})_{n\in\mathcal{R}})\label{eq:lm1-1}\\
        \le&H(\boldsymbol{\Delta}^{(t)}\mid (\mathbf{Q}_n^{(t)})_{n\in\mathcal{R}},(\mathbf{S}_n^{(t)})_{n\in[N]})+I(\boldsymbol{\Delta}^{(t)},(\mathbf{Q}_n^{(t)})_{n\in\mathcal{R}};(\mathbf{S}_n^{(t)})_{n\in[N]} )\label{eq:lm1-2}\\
        =&H(\boldsymbol{\Delta}^{(t)}\mid (\mathbf{Q}_n^{(t)})_{n\in\mathcal{R}},(\mathbf{S}_n^{(t)})_{n\in[N]})\label{eq:lm1-3}\\
        =&H(\mathbf{W}^{(t+1)}\mid (\mathbf{Q}_n^{(t)})_{n\in\mathcal{R}},(\mathbf{S}_n^{(t)})_{n\in[N]})\label{eq:lm1-4}\\
        =&H(\mathbf{W}^{(t+1)}\mid (\mathbf{Q}_n^{(t)})_{n\in\mathcal{R}},(\mathbf{S}_n^{(t)})_{n\in[N]},(\mathbf{S}_n^{(t+1)})_{n\in\mathcal{R}'})\notag\\
        &+I(\mathbf{W}^{(t+1)};(\mathbf{S}_n^{(t+1)})_{n\in\mathcal{D}^{(t)}},(\mathbf{S}_n^{(t+1)})_{n\in\mathcal{R}}\mid (\mathbf{Q}_n^{(t)})_{n\in\mathcal{R}},(\mathbf{S}_n^{(t)})_{n\in[N]})\label{eq:lm1-5}\\
        =&H(\mathbf{W}^{(t+1)}\mid (\mathbf{Q}_n^{(t)})_{n\in\mathcal{R}},(\mathbf{S}_n^{(t)})_{n\in[N]},(\mathbf{S}_n^{(t+1)})_{n\in\mathcal{R}'})\label{eq:lm1-6}\\
        =&0,\label{eq:lm1-7}
    \end{align}
    where \eqref{eq:lm1-1} follows from the definition of mutual information. \eqref{eq:lm1-2} follows from the chain rule and non-negativity of mutual information. \eqref{eq:lm1-3} holds due to the independence constraint \eqref{def:ind}. \eqref{eq:lm1-4} holds because according to the correctness constraint \eqref{def:up-correct}, $\boldsymbol{\Delta}^{(t)}=\mathbf{W}^{(t+1)}-\mathbf{W}^{(t)}$ and $\mathbf{W}^{(t)}$ is fully determined by $(\mathbf{S}_n^{(t)})_{n\in[N]}$. \eqref{eq:lm1-5} is the definition of mutual information, and \eqref{eq:lm1-6} holds due to the fact that $(\mathbf{S}_n^{(t+1)})_{n\in\mathcal{R}}$ is fully determined by $(\mathbf{S}_n^{(t)})_{n\in\mathcal{R}}$ and $(\mathbf{Q}_n^{(t)})_{n\in\mathcal{R}}$ according to the storage transition constraint \eqref{def:stotrans} and the fact that the storage at dropout servers $\mathcal{D}^{(t)}$ must remain untouched by the update operation according to \eqref{eq:uddropstor}, thus $(\mathbf{S}_n^{(t)})_{n\in\mathcal{D}^{(t)}}$ is fully determined by $(\mathbf{S}_n^{(t)})_{n\in[N]}$. Finally, \eqref{eq:lm1-7} follows from the $R_r$-recoverability constraint \eqref{def:Rr-rec}, i.e., $\mathbf{W}^{(t+1)}$ is fully determined by $(\mathbf{S}_n^{(t+1)})_{n\in\mathcal{R}'}$. This completes the proof.
\end{proof}

\begin{lemma}\label{lemma:condindinc}
    Assume that the update operation is executed at time slot $t$. Then for all $\mathcal{X}\subset[N]$ such that $|\mathcal{X}|=X^{(t)}$, we have $I(\boldsymbol{\Delta}^{(t)};(\mathbf{S}_n^{(t+1)})_{n\in\mathcal{X}}\mid (\mathbf{S}_n^{(t)})_{n\in[N]})=0$.
\end{lemma}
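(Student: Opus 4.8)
The plan is to reduce the claimed conditional independence of the \emph{updated} storage from the increment to the $X^{(t)}$-security constraint \eqref{def:Xsec} on the \emph{coded increments}, using the independence constraint \eqref{def:ind} to strip off the conditioning on the current storage.

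First I would fix $\mathcal{X}\subset[N]$ with $|\mathcal{X}|=X^{(t)}$ and set $\mathcal{X}'=\mathcal{X}\setminus\mathcal{D}^{(t)}$. For a dropout server $n\in\mathcal{X}\cap\mathcal{D}^{(t)}$, \eqref{eq:uddropstor} gives $\mathbf{S}_n^{(t+1)}=\mathbf{S}_n^{(t)}$, which is already determined by the conditioning $(\mathbf{S}_n^{(t)})_{n\in[N]}$ and can be discarded, so it suffices to bound $I(\boldsymbol{\Delta}^{(t)};(\mathbf{S}_n^{(t+1)})_{n\in\mathcal{X}'}\mid(\mathbf{S}_n^{(t)})_{n\in[N]})$. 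By the storage transition constraint \eqref{def:stotrans}, each $\mathbf{S}_n^{(t+1)}$ with $n\in\mathcal{X}'$ is a deterministic function of $(\mathbf{S}_n^{(t)},\mathbf{Q}_n^{(t)})$; hence a chain-rule expansion together with the data-processing fact $I(\boldsymbol{\Delta}^{(t)};(\mathbf{S}_n^{(t+1)})_{n\in\mathcal{X}'}\mid(\mathbf{S}_n^{(t)})_{n\in[N]},(\mathbf{Q}_n^{(t)})_{n\in\mathcal{X}'})=0$ yields
\begin{align}
I(\boldsymbol{\Delta}^{(t)};(\mathbf{S}_n^{(t+1)})_{n\in\mathcal{X}'}\mid(\mathbf{S}_n^{(t)})_{n\in[N]})\le I(\boldsymbol{\Delta}^{(t)};(\mathbf{Q}_n^{(t)})_{n\in\mathcal{X}'}\mid(\mathbf{S}_n^{(t)})_{n\in[N]}).
\end{align}

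Next I would remove the conditioning: since $\mathcal{X}'\subseteq[N]\setminus\mathcal{D}^{(t)}$, the independence constraint \eqref{def:ind} says the pair $(\boldsymbol{\Delta}^{(t)},(\mathbf{Q}_n^{(t)})_{n\in\mathcal{X}'})$ is independent of $(\mathbf{S}_n^{(t)})_{n\in[N]}$, so conditioning on $(\mathbf{S}_n^{(t)})_{n\in[N]}$ leaves $I(\boldsymbol{\Delta}^{(t)};(\mathbf{Q}_n^{(t)})_{n\in\mathcal{X}'})$ unchanged; this is a short chain-rule computation that expands $I(\boldsymbol{\Delta}^{(t)};(\mathbf{S}_n^{(t)})_{n\in[N]},(\mathbf{Q}_n^{(t)})_{n\in\mathcal{X}'})$ in both orders and uses non-negativity of mutual information to kill the two extra terms. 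Finally, if $\mathcal{X}\cap\mathcal{D}^{(t)}=\emptyset$ then $|\mathcal{X}'|=X^{(t)}$ and \eqref{def:Xsec} gives $I(\boldsymbol{\Delta}^{(t)};(\mathbf{Q}_n^{(t)})_{n\in\mathcal{X}'})=0$ outright; otherwise $|\mathcal{X}'|<X^{(t)}$, and since feasibility of the update at time slot $t$ ensures $N-|\mathcal{D}^{(t)}|\ge X^{(t)}$, I can extend $\mathcal{X}'$ to some $\mathcal{X}''\subseteq[N]\setminus\mathcal{D}^{(t)}$ of size $X^{(t)}$ and bound $I(\boldsymbol{\Delta}^{(t)};(\mathbf{Q}_n^{(t)})_{n\in\mathcal{X}'})\le I(\boldsymbol{\Delta}^{(t)};(\mathbf{Q}_n^{(t)})_{n\in\mathcal{X}''})=0$ again by \eqref{def:Xsec}. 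Composing the three steps gives the claim.

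None of the steps is genuinely hard; the only points that need care are the bookkeeping for dropout servers inside $\mathcal{X}$ (handled by \eqref{eq:uddropstor}, which makes those coordinates redundant given the conditioning) and the fact that $X^{(t)}$-security \eqref{def:Xsec} is stated only over available servers, so a colluding set $\mathcal{X}$ meeting $\mathcal{D}^{(t)}$ must first be replaced by a full-size subset of $[N]\setminus\mathcal{D}^{(t)}$. I expect the ``remove the conditioning'' step to be the one most worth writing out explicitly, since it is where the independence constraint \eqref{def:ind} does its work, but it remains a routine chain-rule manipulation.
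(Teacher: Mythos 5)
Your proposal is correct and follows essentially the same route as the paper: bound the updated storage by the coded increments via the storage-transition constraint \eqref{def:stotrans}, strip the conditioning on $(\mathbf{S}_n^{(t)})_{n\in[N]}$ using the independence constraint \eqref{def:ind}, and finish with $X^{(t)}$-security \eqref{def:Xsec}. Your extra bookkeeping for servers in $\mathcal{X}\cap\mathcal{D}^{(t)}$ (discarding them via \eqref{eq:uddropstor} and enlarging $\mathcal{X}'$ to a full-size available set before invoking \eqref{def:Xsec}) is a slightly more careful treatment of an edge case the paper's proof handles implicitly, but it is not a different argument.
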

\begin{proof}
\begin{align}
    &I(\boldsymbol{\Delta}^{(t)};(\mathbf{S}_n^{(t+1)})_{n\in\mathcal{X}}\mid (\mathbf{S}_n^{(t)})_{n\in[N]})\notag\\
    \le&I(\boldsymbol{\Delta}^{(t)};(\mathbf{S}_n^{(t+1)})_{n\in\mathcal{X}},(\mathbf{Q}_n^{(t)})_{n\in\mathcal{X}}\mid (\mathbf{S}_n^{(t)})_{n\in[N]})\label{eq:lm2-1}\\
    =&I(\boldsymbol{\Delta}^{(t)};(\mathbf{S}_n^{(t+1)})_{n\in\mathcal{X}},\mid (\mathbf{S}_n^{(t)})_{n\in[N]},(\mathbf{Q}_n^{(t)})_{n\in\mathcal{X}})+I(\boldsymbol{\Delta}^{(t)};(\mathbf{Q}_n^{(t)})_{n\in\mathcal{X}}\mid (\mathbf{S}_n^{(t)})_{n\in[N]})\label{eq:lm2-2}\\
    \leq&I(\boldsymbol{\Delta}^{(t)};(\mathbf{Q}_n^{(t)})_{n\in\mathcal{X}}, (\mathbf{S}_n^{(t)})_{n\in[N]})\label{eq:lm2-3}\\
    =&I(\boldsymbol{\Delta}^{(t)};(\mathbf{Q}_n^{(t)})_{n\in\mathcal{X}})+I(\boldsymbol{\Delta}^{(t)}; (\mathbf{S}_n^{(t)})_{n\in[N]}\mid (\mathbf{Q}_n^{(t)})_{n\in\mathcal{X}})\label{eq:lm2-4}\\
    \le&I(\boldsymbol{\Delta}^{(t)},(\mathbf{Q}_n^{(t)})_{n\in\mathcal{X}}; (\mathbf{S}_n^{(t)})_{n\in[N]})\label{eq:lm2-5}\\
    =&0,\label{eq:lm2-6}
\end{align}
where \eqref{eq:lm2-1} holds due to the chain rule and non-negativity of mutual information. \eqref{eq:lm2-2} follows from the chain rule of mutual information. \eqref{eq:lm2-3} holds due to the chain rule and non-negativity of mutual information and according to \eqref{def:stotrans}, $(\mathbf{S}_n^{(t+1)})_{n\in\mathcal{X}}$ is fully determined by $(\mathbf{S}_n^{(t)})_{n\in[N]}$ and $(\mathbf{Q}^{(t)})_{n\in\mathcal{X}}$. Again, \eqref{eq:lm2-4} follows from the chain rule of mutual information. \eqref{eq:lm2-5} holds due to the $X^{(t)}$-security constraint \eqref{def:Xsec} and the chain rule and non-negativity of mutual information. \eqref{eq:lm2-6} follows from the independence constraint \eqref{def:ind}. This completes the proof.
\end{proof}

\begin{lemma}\label{lemma:sdindcondw}
    Assume that the update operation is executed at time slot $t$. Then for all $\mathcal{X}\subset[N]$ such that $|\mathcal{X}|=X^{(t)}, \mathcal{X}\cap\mathcal{D}^{(t)}=\emptyset$, we have\footnote{Recall that according to Lemma \ref{lemma:increcover}, the increment $\boldsymbol{\Delta}^{(t)}$ must be recoverable from any $R_r-|\mathcal{D}^{(t)}|$ coded increments $\mathbf{Q}_n^{(t)}, n\in[N]\setminus\mathcal{D}^{(t)}$. Therefore, we must trivially have $X^{(t)}<R_r-|\mathcal{D}^{(t)}|\leq N-|\mathcal{D}^{(t)}|$, otherwise the recoverability contradicts the $X^{(t)}$-security. The existence of the set $\mathcal{X}$ is thus guaranteed. Similarly, the existence of the set $\mathcal{X}$ in Lemma \ref{lemma:indeinh} and the sets $\mathcal{R}, \mathcal{X}$ in the proof of the lower bound on the update threshold $R_u^{(t)}$ is also guaranteed.} $I\left((\mathbf{S}_n^{(t+1)})_{n\in\mathcal{X}\cup\mathcal{D}^{(t)}}; \boldsymbol{\Delta}^{(t)}\mid \mathbf{W}^{(t)}\right)=0$.
\end{lemma}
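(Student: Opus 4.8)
The plan is to upper‑bound the target quantity by a sum of two mutual informations, each of which is already known to be zero: one from Lemma~\ref{lemma:condindinc}, the other from the independence constraint~\eqref{def:ind}. The structural ingredient is~\eqref{eq:uddropstor}: the storage at the dropout servers is frozen across the update, so (using the hypothesis $\mathcal{X}\cap\mathcal{D}^{(t)}=\emptyset$, which guarantees that the $\mathcal{X}$‑block is the genuinely updated storage and the $\mathcal{D}^{(t)}$‑block is the old storage) the collection $(\mathbf{S}_n^{(t+1)})_{n\in\mathcal{X}\cup\mathcal{D}^{(t)}}$ is a deterministic function of $\big((\mathbf{S}_n^{(t+1)})_{n\in\mathcal{X}},(\mathbf{S}_n^{(t)})_{n\in[N]}\big)$. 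Enlarging the first argument and then applying the chain rule therefore yields
\begin{align*}
I\!\left((\mathbf{S}_n^{(t+1)})_{n\in\mathcal{X}\cup\mathcal{D}^{(t)}};\boldsymbol{\Delta}^{(t)}\mid \mathbf{W}^{(t)}\right)
&\le I\!\left((\mathbf{S}_n^{(t)})_{n\in[N]};\boldsymbol{\Delta}^{(t)}\mid \mathbf{W}^{(t)}\right)\\
&\quad+ I\!\left((\mathbf{S}_n^{(t+1)})_{n\in\mathcal{X}};\boldsymbol{\Delta}^{(t)}\mid \mathbf{W}^{(t)},(\mathbf{S}_n^{(t)})_{n\in[N]}\right).
\end{align*}

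\textbf{Handling the two terms.} For the first term, recall that by $R_r$‑recoverability~\eqref{def:Rr-rec} the message $\mathbf{W}^{(t)}$ is a deterministic function of $(\mathbf{S}_n^{(t)})_{n\in[N]}$; hence the chain rule gives $I\!\left(\boldsymbol{\Delta}^{(t)};(\mathbf{S}_n^{(t)})_{n\in[N]}\right)=I\!\left(\boldsymbol{\Delta}^{(t)};\mathbf{W}^{(t)}\right)+I\!\left(\boldsymbol{\Delta}^{(t)};(\mathbf{S}_n^{(t)})_{n\in[N]}\mid \mathbf{W}^{(t)}\right)$, whose left side is zero since it is dominated by the independence constraint~\eqref{def:ind}; by non‑negativity of mutual information the conditional term, i.e.\ the first term above, must vanish. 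For the second term, again because $\mathbf{W}^{(t)}$ is a function of $(\mathbf{S}_n^{(t)})_{n\in[N]}$, the extra conditioning on $\mathbf{W}^{(t)}$ is redundant, so this term equals $I\!\left((\mathbf{S}_n^{(t+1)})_{n\in\mathcal{X}};\boldsymbol{\Delta}^{(t)}\mid (\mathbf{S}_n^{(t)})_{n\in[N]}\right)$, which is zero by Lemma~\ref{lemma:condindinc} (applicable since $|\mathcal{X}|=X^{(t)}$). Consequently the target quantity is squeezed between $0$ and $0$, completing the proof.

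\textbf{Anticipated difficulty.} There is no substantive obstacle here; this is a routine information‑theoretic sandwich argument. The only points that need care are (i) invoking the disjointness $\mathcal{X}\cap\mathcal{D}^{(t)}=\emptyset$ so that $(\mathbf{S}_n^{(t+1)})_{n\in\mathcal{X}\cup\mathcal{D}^{(t)}}$ decomposes cleanly into updated $\mathcal{X}$‑storage and frozen $\mathcal{D}^{(t)}$‑storage, and (ii) consistently exploiting that conditioning on $\mathbf{W}^{(t)}$ is redundant once one conditions on the full storage profile $(\mathbf{S}_n^{(t)})_{n\in[N]}$, which is what lets the second term collapse to the statement of Lemma~\ref{lemma:condindinc}.
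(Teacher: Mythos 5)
Your proof is correct, and it takes a genuinely shorter route than the paper's. The paper splits the target as $I\bigl((\mathbf{S}_n^{(t)})_{n\in\mathcal{D}^{(t)}};\boldsymbol{\Delta}^{(t)}\mid\mathbf{W}^{(t)}\bigr)+I\bigl((\mathbf{S}_n^{(t+1)})_{n\in\mathcal{X}};\boldsymbol{\Delta}^{(t)}\mid(\mathbf{S}_n^{(t)})_{n\in\mathcal{D}^{(t)}},\mathbf{W}^{(t)}\bigr)$, kills the first term via the independence constraint \eqref{def:ind}, and then disposes of the second term by a longer chain that enlarges to $\bigl((\mathbf{Q}_n^{(t)})_{n\in\mathcal{X}},(\mathbf{S}_n^{(t)})_{n\in[N]}\bigr)$ (using \eqref{def:Rr-rec} and \eqref{def:stotrans}) and then invokes the $X^{(t)}$-security constraint \eqref{def:Xsec} together with \eqref{def:ind} --- it never cites Lemma \ref{lemma:condindinc}. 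You instead condition on the \emph{full} storage profile $(\mathbf{S}_n^{(t)})_{n\in[N]}$ from the start (justified because the dropout storage is frozen by \eqref{eq:uddropstor}, so $(\mathbf{S}_n^{(t+1)})_{n\in\mathcal{X}\cup\mathcal{D}^{(t)}}$ is a function of the enlarged tuple), which makes the second term collapse, after removing the redundant conditioning on $\mathbf{W}^{(t)}$, exactly into the statement of Lemma \ref{lemma:condindinc}, and makes the first term vanish by \eqref{def:Rr-rec} plus \eqref{def:ind}. The ingredients are ultimately the same (independence, security, frozen dropout storage, recoverability), but your version reuses Lemma \ref{lemma:condindinc} as a black box and so avoids re-deriving the security argument through the coded increments; the paper's version is more self-contained but longer. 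All your individual steps (the deterministic-function/data-processing bound, the chain-rule split, dropping $\mathbf{W}^{(t)}$ from the conditioning since it is a function of $(\mathbf{S}_n^{(t)})_{n\in[N]}$, and the applicability of Lemma \ref{lemma:condindinc} for your $\mathcal{X}$) check out.
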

\begin{proof}
    \begin{align}
        0&\leq I\left((\mathbf{S}_n^{(t+1)})_{n\in\mathcal{X}\cup\mathcal{D}^{(t)}}; \boldsymbol{\Delta}^{(t)}\mid \mathbf{W}^{(t)}\right)\label{eq:lm3-1}\\
        &=I\left((\mathbf{S}_n^{(t)})_{n\in\mathcal{D}^{(t)}}, (\mathbf{S}_n^{(t+1)})_{n\in\mathcal{X}}; \boldsymbol{\Delta}^{(t)}\mid \mathbf{W}^{(t)}\right)\label{eq:lm3-2}\\
        &=I\left((\mathbf{S}_n^{(t)})_{n\in\mathcal{D}^{(t)}}; \boldsymbol{\Delta}^{(t)}\mid \mathbf{W}^{(t)}\right)+I\left((\mathbf{S}_n^{(t+1)})_{n\in\mathcal{X}}; \boldsymbol{\Delta}^{(t)}\mid (\mathbf{S}_n^{(t)})_{n\in\mathcal{D}^{(t)}},\mathbf{W}^{(t)}\right)\label{eq:lm3-3}\\
        &\leq I\left((\mathbf{S}_n^{(t)})_{n\in\mathcal{D}^{(t)}}, \mathbf{W}^{(t)}; \boldsymbol{\Delta}^{(t)}\right)+I\left((\mathbf{S}_n^{(t+1)})_{n\in\mathcal{X}}; \boldsymbol{\Delta}^{(t)}\mid (\mathbf{S}_n^{(t)})_{n\in\mathcal{D}^{(t)}},\mathbf{W}^{(t)}\right)\label{eq:lm3-4}\\
        &\leq I\left((\mathbf{S}_n^{(t)})_{n\in[N]}; \boldsymbol{\Delta}^{(t)}\right)+I\left((\mathbf{S}_n^{(t+1)})_{n\in\mathcal{X}}; \boldsymbol{\Delta}^{(t)}\mid (\mathbf{S}_n^{(t)})_{n\in\mathcal{D}^{(t)}},\mathbf{W}^{(t)}\right)\label{eq:lm3-5}\\
        &=I\left((\mathbf{S}_n^{(t+1)})_{n\in\mathcal{X}}; \boldsymbol{\Delta}^{(t)}\mid (\mathbf{S}_n^{(t)})_{n\in\mathcal{D}^{(t)}},\mathbf{W}^{(t)}\right)\label{eq:lm3-6}\\
        &\leq I\left((\mathbf{S}_n^{(t+1)})_{n\in\mathcal{X}}, (\mathbf{S}_n^{(t)})_{n\in\mathcal{D}^{(t)}},\mathbf{W}^{(t)}; \boldsymbol{\Delta}^{(t)}\right)\label{eq:lm3-7}\\
        &\leq I\left((\mathbf{Q}_n^{(t)})_{n\in\mathcal{X}}, (\mathbf{S}_n^{(t)})_{n\in[N]}; \boldsymbol{\Delta}^{(t)}\right)\label{eq:lm3-8}\\
        &=I\left((\mathbf{Q}_n^{(t)}\right)_{n\in\mathcal{X}}; \boldsymbol{\Delta}^{(t)})+I\left((\mathbf{S}_n^{(t)})_{n\in[N]}; \boldsymbol{\Delta}^{(t)}\mid (\mathbf{Q}_n^{(t)})_{n\in\mathcal{X}}\right)\label{eq:lm3-9}\\
        &=I\left((\mathbf{S}_n^{(t)})_{n\in[N]}; \boldsymbol{\Delta}^{(t)}\mid (\mathbf{Q}_n^{(t)})_{n\in\mathcal{X}}\right)\label{eq:lm3-10}\\
        &\leq I(\boldsymbol{\Delta}^{(t)}, (\mathbf{Q}_n^{(t)})_{n\in\mathcal{X}};(\mathbf{S}_n^{(t)})_{n\in[N]})\label{eq:lm3-11}\\
        &=0,\label{eq:lm3-12}
    \end{align}
    where \eqref{eq:lm3-1} follows from the non-negativity of mutual information.
    \eqref{eq:lm3-2} holds due to the fact that the storage at the dropout servers $\mathcal{D}^{(t)}$ must remain untouched by the update operation according to \eqref{eq:uddropstor}.
    \eqref{eq:lm3-3} and \eqref{eq:lm3-9} follow from the chain rule of mutual information.
    \eqref{eq:lm3-4}, \eqref{eq:lm3-7} and \eqref{eq:lm3-11} follow from the chain rule and non-negativity of mutual information.
    \eqref{eq:lm3-5} is true because according to the $R_r$-recoverability constraint \eqref{def:Rr-rec}, $\mathbf{W}^{(t)}$ is fully determined by $(\mathbf{S}_n^{(t)})_{n\in[N]}$.
    \eqref{eq:lm3-6} and \eqref{eq:lm3-12} hold due to the independence constraint \eqref{def:ind}.
    \eqref{eq:lm3-8} holds since according to the $R_r$-recoverability constraint \eqref{def:Rr-rec} and the storage transition constraint \eqref{def:stotrans}, $(\mathbf{W}^{(t)}, (\mathbf{S}_n^{(t+1)})_{n\in\mathcal{X}})$ is a deterministic function of $(\mathbf{S}_n^{(t)})_{n\in[N]}$ and  $(\mathbf{Q}_n^{(t)})_{n\in\mathcal{X}}$.
    \eqref{eq:lm3-10} holds due to the $X^{(t)}$-security constraint \eqref{def:Xsec}.
\end{proof}

\begin{lemma}\label{lemma:indeinh}
    Assume that the update operation is executed at time slot $t$. Then for all $\mathcal{X}\subset[N]$ such that $|\mathcal{X}|=X^{(t)}, \mathcal{X}\cap\mathcal{D}^{(t)}=\emptyset$ and all $\tau\in\mathbb{N}^*$, we have $I\left((\mathbf{S}_n^{(t+\tau)})_{n\in\mathcal{X}\cup\mathcal{D}^{(t)}}; \mathbf{W}^{(t+\tau)}\right)=0$.
\end{lemma}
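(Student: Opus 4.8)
The plan is an induction on $\tau$, carried out with the index $t$ and the set $\mathcal{Y}:=\mathcal{X}\cup\mathcal{D}^{(t)}$ held fixed throughout: Lemma \ref{lemma:sdindcondw} delivers the base case $\tau=1$, and the obliviousness constraint \eqref{def:ind} (``the user has no prior information on the server storage'') propagates the property forward. Two elementary facts are used repeatedly: every $\mathbf{W}^{(\sigma)}$ and every $\boldsymbol{\Delta}^{(\sigma)}$ is a block of $L$ i.i.d. uniform $\mathbb{F}_q$-symbols, so $H(\mathbf{W}^{(\sigma)})=H(\boldsymbol{\Delta}^{(\sigma)})=L$; and the sum of a random variable uniform over $\mathbb{F}_q^{L}$ with an independent random variable is again uniform over $\mathbb{F}_q^{L}$ (the one-time-pad fact). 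Existence of a valid $\mathcal{X}$ with $\mathcal{X}\cap\mathcal{D}^{(t)}=\emptyset$ is guaranteed as noted in the footnote to Lemma \ref{lemma:sdindcondw}.

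\emph{Base case $(\tau=1)$.} Using $\mathbf{W}^{(t+1)}=\mathbf{W}^{(t)}+\boldsymbol{\Delta}^{(t)}$, the fact that $I(\boldsymbol{\Delta}^{(t)};\mathbf{W}^{(t)})=0$ (a consequence of \eqref{def:ind} and \eqref{def:Rr-rec}, since $\mathbf{W}^{(t)}$ is a function of $(\mathbf{S}_n^{(t)})_{n\in[N]}$), and Lemma \ref{lemma:sdindcondw}, I would write
\begin{align*}
0\le I\big((\mathbf{S}_n^{(t+1)})_{n\in\mathcal{Y}};\mathbf{W}^{(t+1)}\big)
&\le H(\mathbf{W}^{(t+1)})-H\big(\mathbf{W}^{(t+1)}\mid (\mathbf{S}_n^{(t+1)})_{n\in\mathcal{Y}},\mathbf{W}^{(t)}\big)\\
&=H(\mathbf{W}^{(t+1)})-H\big(\boldsymbol{\Delta}^{(t)}\mid (\mathbf{S}_n^{(t+1)})_{n\in\mathcal{Y}},\mathbf{W}^{(t)}\big)\\
&=H(\mathbf{W}^{(t+1)})-H\big(\boldsymbol{\Delta}^{(t)}\mid\mathbf{W}^{(t)}\big)=H(\mathbf{W}^{(t+1)})-H(\boldsymbol{\Delta}^{(t)})=0,
\end{align*}
where the first equality conditions on $\mathbf{W}^{(t)}$ to turn $\mathbf{W}^{(t+1)}$ into a deterministic shift of $\boldsymbol{\Delta}^{(t)}$, the second is exactly Lemma \ref{lemma:sdindcondw}, and the last uses $H(\mathbf{W}^{(t+1)})=H(\boldsymbol{\Delta}^{(t)})=L$; hence equality holds throughout.

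\emph{Inductive step.} Assume the claim for some $\tau\ge 1$ and set $\sigma=t+\tau$. If a read operation is executed at slot $\sigma$, then \eqref{def:stortranrd}--\eqref{def:wtranrd} give $(\mathbf{S}_n^{(\sigma+1)})_{n\in\mathcal{Y}}=(\mathbf{S}_n^{(\sigma)})_{n\in\mathcal{Y}}$ and $\mathbf{W}^{(\sigma+1)}=\mathbf{W}^{(\sigma)}$, so the $(\tau+1)$-statement is verbatim the $\tau$-statement. If an update operation is executed at slot $\sigma$ (with its own dropout set $\mathcal{D}^{(\sigma)}$, possibly different from $\mathcal{D}^{(t)}$, and increment $\boldsymbol{\Delta}^{(\sigma)}$), then by \eqref{eq:uddropstor} and \eqref{def:stotrans} the tuple $(\mathbf{S}_n^{(\sigma+1)})_{n\in\mathcal{Y}}$ is a deterministic function of $A:=(\mathbf{S}_n^{(\sigma)})_{n\in\mathcal{Y}}$ and $B:=(\mathbf{Q}_n^{(\sigma)})_{n\in\mathcal{Y}\setminus\mathcal{D}^{(\sigma)}}$. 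From \eqref{def:ind} applied at slot $\sigma$ (together with \eqref{def:Rr-rec}) one gets that $(\boldsymbol{\Delta}^{(\sigma)},B)$ is independent of $(A,\mathbf{W}^{(\sigma)})$, while the induction hypothesis gives $A$ independent of $\mathbf{W}^{(\sigma)}$; combining these two factorizations shows that, conditioned on any realization of $(A,B)$, the variable $\mathbf{W}^{(\sigma)}$ is still uniform over $\mathbb{F}_q^{L}$ and independent of $\boldsymbol{\Delta}^{(\sigma)}$. Hence $\mathbf{W}^{(\sigma+1)}=\mathbf{W}^{(\sigma)}+\boldsymbol{\Delta}^{(\sigma)}$ is conditionally uniform given $(A,B)$, so $H\big(\mathbf{W}^{(\sigma+1)}\mid (\mathbf{S}_n^{(\sigma+1)})_{n\in\mathcal{Y}}\big)\ge H\big(\mathbf{W}^{(\sigma+1)}\mid A,B\big)=L=H(\mathbf{W}^{(\sigma+1)})$, i.e. $I\big((\mathbf{S}_n^{(\sigma+1)})_{n\in\mathcal{Y}};\mathbf{W}^{(\sigma+1)}\big)=0$, which is the $(\tau+1)$-statement.

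\emph{Main obstacle.} The delicate point is the update case of the inductive step: the updated storage on $\mathcal{Y}$ absorbs the coded increments $(\mathbf{Q}_n^{(\sigma)})_{n\in\mathcal{Y}\setminus\mathcal{D}^{(\sigma)}}$, which are correlated with $\boldsymbol{\Delta}^{(\sigma)}$ and whose number may well exceed the security level $X^{(\sigma)}$, so one cannot simply invoke $X^{(\sigma)}$-security of a single update. The way around this is precisely that the increment and all of its codewords are generated with no knowledge of the servers' states, i.e. the block independence $(\boldsymbol{\Delta}^{(\sigma)},B)\perp(A,\mathbf{W}^{(\sigma)})$ from \eqref{def:ind}; together with the inductive hypothesis $A\perp\mathbf{W}^{(\sigma)}$ this already forces $\mathbf{W}^{(\sigma)}$ to be conditionally uniform and conditionally independent of $\boldsymbol{\Delta}^{(\sigma)}$ given $(A,B)$, after which the one-time-pad fact closes the argument. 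There is no single hard inequality here — the effort is in doing the conditional-independence bookkeeping carefully and in correctly reducing the read-slot case to the previous step.
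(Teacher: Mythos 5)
Your proof is correct, and its overall skeleton (induction on $\tau$, base case from Lemma \ref{lemma:sdindcondw}, read slots handled trivially via \eqref{def:stortranrd}--\eqref{def:wtranrd}) coincides with the paper's; the interesting divergence is in the update case of the inductive step. The paper proves that case by an entropy chain that conditions on \emph{all} coded increments $(\mathbf{Q}_n^{(t+\tau)})_{n\in[N]}$: it invokes Lemma \ref{lemma:increcover} so that $\boldsymbol{\Delta}^{(t+\tau)}$ is determined by the conditioning variables, and it needs a separate auxiliary computation (\eqref{eq:lm4qindwd1}--\eqref{eq:lm4qindwd4}) showing $(\mathbf{Q}_n^{(t+\tau)})_{n\in[N]}\perp \mathbf{W}^{(t+\tau)}+\boldsymbol{\Delta}^{(t+\tau)}$ to justify that this conditioning can only increase the mutual information. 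You instead work only with $A=(\mathbf{S}_n^{(t+\tau)})_{n\in\mathcal{Y}}$ and the restricted increments $B=(\mathbf{Q}_n^{(t+\tau)})_{n\in\mathcal{Y}\setminus\mathcal{D}^{(t+\tau)}}$, combine the block independence $(\boldsymbol{\Delta}^{(t+\tau)},B)\perp(A,\mathbf{W}^{(t+\tau)})$ from \eqref{def:ind} and \eqref{def:Rr-rec} with the induction hypothesis $A\perp\mathbf{W}^{(t+\tau)}$ to get the full factorization $\mathbf{W}^{(t+\tau)}\perp(A,B,\boldsymbol{\Delta}^{(t+\tau)})$, and then close with a conditional one-time-pad step; since $(\mathbf{S}_n^{(t+\tau+1)})_{n\in\mathcal{Y}}$ is a function of $(A,B)$ by \eqref{def:stotrans} and \eqref{eq:uddropstor}, the conclusion follows. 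This buys you a shorter argument that never needs Lemma \ref{lemma:increcover} or the auxiliary independence claim, and it isolates exactly the right reason the proof works (user obliviousness, not $X^{(t)}$-security), which you also articulate correctly; the paper's route, by contrast, stays entirely within inequality manipulations of entropies and avoids any explicit distributional reasoning. One small remark: your argument (like the paper's own step \eqref{eq:lm4qindwd4}) leans on $H(\mathbf{W}^{(t+\tau)})=L$, i.e., marginal uniformity of the stored message; this is not literally an axiom of the model but follows from uniformity of $\boldsymbol{\Delta}^{(t)}$ and its independence from $\mathbf{W}^{(t)}$ at the update in slot $t$ (and propagates forward), so you are on the same footing as the paper, though a sentence deriving it would make the write-up self-contained.
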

\begin{proof}
    Let us set up a proof by induction. In particular, for the base case, we have
    \begin{align}
        0&\leq I\left((\mathbf{S}_n^{(t+1)})_{n\in\mathcal{X}\cup\mathcal{D}^{(t)}}; \mathbf{W}^{(t+1)}\right)\label{eq:lm4-1}\\
        &\leq I\left((\mathbf{S}_n^{(t+1)})_{n\in\mathcal{X}\cup\mathcal{D}^{(t)}}, \mathbf{W}^{(t)}; \mathbf{W}^{(t+1)}\right)\label{eq:lm4-2}\\
        &=I(\mathbf{W}^{(t)}; \mathbf{W}^{(t)}+\boldsymbol{\Delta}^{(t)})+I\left((\mathbf{S}_n^{(t+1)})_{n\in\mathcal{X}\cup\mathcal{D}^{(t)}}; \mathbf{W}^{(t)}+\boldsymbol{\Delta}^{(t)}\mid \mathbf{W}^{(t)}\right)\label{eq:lm4-3}\\
        &=I\left((\mathbf{S}_n^{(t+1)})_{n\in\mathcal{X}\cup\mathcal{D}^{(t)}}; \boldsymbol{\Delta}^{(t)}\mid \mathbf{W}^{(t)}\right)\label{eq:lm4-4}\\
        &=0,\label{eq:lm4-5}
    \end{align}
    where \eqref{eq:lm4-1} follows from the non-negativity of mutual information.
    \eqref{eq:lm4-2} follows from the chain rule and non-negativity of mutual information.
    \eqref{eq:lm4-3} holds due to the chain rule of mutual information and the correctness constraint \eqref{def:up-correct}.
    \eqref{eq:lm4-4} follows from the independence constraint \eqref{def:ind}, i.e., $\boldsymbol{\Delta}^{(t)}$ is independent of $\mathbf{W}^{(t)}$, and the fact that $H(\boldsymbol{\Delta}^{(t)})=L$.
    \eqref{eq:lm4-5} holds due to Lemma \ref{lemma:sdindcondw}.

    Now for the induction step, let us assume that at time slot $t+\tau, \tau\geq 1$ we have \\$I\left((\mathbf{S}_n^{(t+\tau)})_{n\in\mathcal{X}\cup\mathcal{D}^{(t)}}; \mathbf{W}^{(t+\tau)}\right)=0$, then
    \begin{itemize}
        \item{\bf Case 1: }If the read operation is executed at time slot $t+\tau$, then according to \eqref{def:stortranrd}, we have $\mathbf{S}_n^{(t+\tau)}=\mathbf{S}_n^{(t+\tau+1)}$ for all $n\in[N]$ and $\mathbf{W}^{(t+\tau)}=\mathbf{W}^{(t+\tau+1)}$, thus \\$I\left((\mathbf{S}_n^{(t+\tau+1)})_{n\in\mathcal{X}\cup\mathcal{D}^{(t)}}; \mathbf{W}^{(t+\tau+1)}\right)=0$.
        \item{\bf Case 2: }If the write operation is executed at time slot $t+\tau$, let us define $\mathbf{Q}_n^{(t+\tau)}=\emptyset$ for all $n\in\mathcal{D}^{(t+\tau)}$, then we have
        \begin{align}
            0\leq&I\left((\mathbf{S}_n^{(t+\tau+1)})_{n\in\mathcal{X}\cup\mathcal{D}^{(t)}}; \mathbf{W}^{(t+\tau+1)}\right)\label{eq:lm4indc21}\\
            =&I\left((\mathbf{S}_n^{(t+\tau+1)})_{n\in\mathcal{X}\cup\mathcal{D}^{(t)}}; \mathbf{W}^{(t+\tau)}+\boldsymbol{\Delta}^{(t+\tau)}\right)\label{eq:lm4indc22}\\
            \leq& I\left((\mathbf{S}_n^{(t+\tau+1)})_{n\in\mathcal{X}\cup\mathcal{D}^{(t)}}; \mathbf{W}^{(t+\tau)}+\boldsymbol{\Delta}^{(t+\tau)}\mid (\mathbf{Q}_n^{(t+\tau)})_{n\in[N]}\right)\label{eq:lm4indc23}\\
            =&H\left(\mathbf{W}^{(t+\tau)}+\boldsymbol{\Delta}^{(t+\tau)}\mid (\mathbf{Q}_n^{(t+\tau)})_{n\in[N]}\right)\notag\\
            &-H\left(\mathbf{W}^{(t+\tau)}+\boldsymbol{\Delta}^{(t+\tau)}\mid (\mathbf{S}_n^{(t+\tau+1)})_{n\in\mathcal{X}\cup\mathcal{D}^{(t)}},(\mathbf{Q}_n^{(t+\tau)})_{n\in[N]}\right)\label{eq:lm4indc24}\\
            =&H\left(\mathbf{W}^{(t+\tau)}\mid (\mathbf{Q}_n^{(t+\tau)})_{n\in[N]}\right)-H\left(\mathbf{W}^{(t+\tau)}\mid (\mathbf{S}_n^{(t+\tau+1)})_{n\in\mathcal{X}\cup\mathcal{D}^{(t)}},(\mathbf{Q}_n^{(t+\tau)})_{n\in[N]}\right)\label{eq:lm4indc25}\\
            \leq& H\left(\mathbf{W}^{(t+\tau)}\mid (\mathbf{Q}_n^{(t+\tau)})_{n\in[N]}\right)-H\left(\mathbf{W}^{(t+\tau)}\mid (\mathbf{S}_n^{(t+\tau)})_{n\in\mathcal{X}\cup\mathcal{D}^{(t)}},(\mathbf{Q}_n^{(t+\tau)})_{n\in[N]}\right)\label{eq:lm4indc26}\\
            =& I\left((\mathbf{S}_n^{(t+\tau)})_{n\in\mathcal{X}\cup\mathcal{D}^{(t)}}; \mathbf{W}^{(t+\tau)}\mid (\mathbf{Q}_n^{(t+\tau)})_{n\in[N]}\right)\label{eq:lm4indc27}\\
            \leq& I\left((\mathbf{S}_n^{(t+\tau)})_{n\in\mathcal{X}\cup\mathcal{D}^{(t)}}; \mathbf{W}^{(t+\tau)}, (\mathbf{Q}_n^{(t+\tau)})_{n\in[N]}\right)\label{eq:lm4indc28}\\
            =&I\left((\mathbf{S}_n^{(t+\tau)})_{n\in\mathcal{X}\cup\mathcal{D}^{(t)}}; \mathbf{W}^{(t+\tau)}\right)+I\left((\mathbf{S}_n^{(t+\tau)})_{n\in\mathcal{X}\cup\mathcal{D}^{(t)}}; (\mathbf{Q}_n^{(t+\tau)})_{n\in[N]}\mid \mathbf{W}^{(t+\tau)}\right)\label{eq:lm4indc29}\\
            =&I\left((\mathbf{S}_n^{(t+\tau)})_{n\in\mathcal{X}\cup\mathcal{D}^{(t)}}; (\mathbf{Q}_n^{(t+\tau)})_{n\in[N]}\mid \mathbf{W}^{(t+\tau)}\right)\label{eq:lm4indc210}\\
            \leq& I\left((\mathbf{Q}_n^{(t+\tau)})_{n\in[N]}; (\mathbf{S}_n^{(t+\tau)})_{n\in\mathcal{X}\cup\mathcal{D}^{(t)}},\mathbf{W}^{(t+\tau)}\right)\label{eq:lm4indc211}\\
            =&0,\label{eq:lm4indc212}
        \end{align}
        where \eqref{eq:lm4indc21} follows from the non-negativity of mutual information.
        \eqref{eq:lm4indc22} holds due to the correctness constraint \eqref{def:up-correct}.
        \eqref{eq:lm4indc24} and \eqref{eq:lm4indc27} follow from the definition of mutual information.
        \eqref{eq:lm4indc25} holds because $\boldsymbol{\Delta}^{(t+\tau)}$ is fully determined by $(\mathbf{Q}_n^{(t+\tau)})_{n\in[N]}$ according to Lemma \ref{lemma:increcover}.
        \eqref{eq:lm4indc26} holds due to the fact that conditioning reduces entropy and according to the storage transition constraint \eqref{def:stotrans}, $(\mathbf{S}_n^{(t+\tau+1)})_{n\in\mathcal{X}\cup\mathcal{D}^{(t)}}$ is a deterministic function of $(\mathbf{S}_n^{(t+\tau)})_{n\in\mathcal{X}\cup\mathcal{D}^{(t)}}$ and  $(\mathbf{Q}_n^{(t)})_{n\in[N]}$.
        \eqref{eq:lm4indc28}, \eqref{eq:lm4indc29} and \eqref{eq:lm4indc211} follow from the chain rule and non-negativity of mutual information.
        In \eqref{eq:lm4indc210}, we use the induction hypothesis, and \eqref{eq:lm4indc212} follows from the independence constraint \eqref{def:ind}.
        Finally, \eqref{eq:lm4indc23} is justified as follows. Note that
        \begin{align}
            0\leq&I\left((\mathbf{Q}_n^{(t+\tau)})_{n\in[N]};\mathbf{W}^{(t+\tau)}+\boldsymbol{\Delta}^{(t+\tau)}\right)\notag\\
            =&H(\mathbf{W}^{(t+\tau)}+\boldsymbol{\Delta}^{(t+\tau)})-H(\mathbf{W}^{(t+\tau)}+\boldsymbol{\Delta}^{(t+\tau)}\mid (\mathbf{Q}_n^{(t+\tau)})_{n\in[N]})\label{eq:lm4qindwd1}\\
            =&H(\mathbf{W}^{(t+\tau)}+\boldsymbol{\Delta}^{(t+\tau)})-H(\mathbf{W}^{(t+\tau)}\mid (\mathbf{Q}_n^{(t+\tau)})_{n\in[N]})\label{eq:lm4qindwd2}\\
            =&H(\mathbf{W}^{(t+\tau)}+\boldsymbol{\Delta}^{(t+\tau)})-H(\mathbf{W}^{(t+\tau)})\label{eq:lm4qindwd3}\\
            \leq&L-L=0,\label{eq:lm4qindwd4}
        \end{align}
        where \eqref{eq:lm4qindwd1} is the definition of mutual information. \eqref{eq:lm4qindwd2} holds because $\boldsymbol{\Delta}^{(t+\tau)}$ is fully determined by $(\mathbf{Q}_n^{(t+\tau)})_{n\in[N]}$ according to Lemma \ref{lemma:increcover}. \eqref{eq:lm4qindwd3} follows from the independence constraint \eqref{def:ind}, and finally \eqref{eq:lm4qindwd4} holds since uniform distribution maximizes entropy.
        Therefore, $(\mathbf{Q}_n^{(t+\tau)})_{n\in[N]}$ is independent of $\mathbf{W}^{(t+\tau)}+\boldsymbol{\Delta}^{(t+\tau)}$, hence conditioning increases mutual information.
    \end{itemize}
    The proof is thus completed by induction on $\tau$.
\end{proof}

Now we are ready to finish the converse part of Theorem \ref{thm:main} in the following three proofs.

\begin{proof}{\bf (Lower bound on the update threshold $R_u^{(t)}$)}
Let us assume that the update operation is executed at time slot $t$. Then 
for all $\mathcal{R}, \mathcal{X}$ such that $\mathcal{D}^{(t)}\subset \mathcal{R}\subset[N]$, $\mathcal{X}\subset\mathcal{R}\setminus\mathcal{D}^{(t)}$, $|\mathcal{R}|=R_r, |\mathcal{X}|=X^{(t)}$, we have
    \begin{align}
        L&=H(\boldsymbol{\Delta}^{(t)})\\
        &= H(\boldsymbol{\Delta}^{(t)}\mid (\mathbf{S}_n^{(t)})_{n\in[N]})\label{eq:upthrh-1}\\
        &=I(\boldsymbol{\Delta}^{(t)};(\mathbf{S}_n^{(t+1)})_{n\in\mathcal{R}}\mid (\mathbf{S}_n^{(t)})_{n\in[N]})+H(\boldsymbol{\Delta}^{(t)}\mid (\mathbf{S}_n^{(t+1)})_{n\in\mathcal{R}},(\mathbf{S}_n^{(t)})_{n\in[N]})\label{eq:upthrh-2}\\
        &=I(\boldsymbol{\Delta}^{(t)};(\mathbf{S}_n^{(t+1)})_{n\in\mathcal{R}}\mid (\mathbf{S}_n^{(t)})_{n\in[N]})\label{eq:upthrh-3}\\
        &=I(\boldsymbol{\Delta}^{(t)};(\mathbf{S}_n^{(t+1)})_{n\in\mathcal{X}}\mid (\mathbf{S}_n^{(t)})_{n\in[N]})+I(\boldsymbol{\Delta}^{(t)};(\mathbf{S}_n^{(t+1)})_{n\in\mathcal{R}\setminus\mathcal{X}}\mid (\mathbf{S}_n^{(t)})_{n\in[N]},(\mathbf{S}_n^{(t+1)})_{n\in\mathcal{X}})\label{eq:upthrh-4}\\
        &=I(\boldsymbol{\Delta}^{(t)};(\mathbf{S}_n^{(t+1)})_{n\in\mathcal{R}\setminus\mathcal{X}}\mid (\mathbf{S}_n^{(t)})_{n\in[N]},(\mathbf{S}_n^{(t+1)})_{n\in\mathcal{X}})\label{eq:upthrh-5}\\
        &=I(\boldsymbol{\Delta}^{(t)};(\mathbf{S}_n^{(t)})_{n\in\mathcal{D}^{(t)}},(\mathbf{S}_n^{(t+1)})_{n\in\mathcal{R}\setminus\mathcal{X}\setminus \mathcal{D}^{(t)}}\mid (\mathbf{S}_n^{(t)})_{n\in[N]},(\mathbf{S}_n^{(t+1)})_{n\in\mathcal{X}})\label{eq:upthrh-6}\\
        &=I(\boldsymbol{\Delta}^{(t)};(\mathbf{S}_n^{(t+1)})_{n\in\mathcal{R}\setminus\mathcal{X}\setminus\mathcal{D}^{(t)}}\mid (\mathbf{S}_n^{(t)})_{n\in[N]},(\mathbf{S}_n^{(t+1)})_{n\in\mathcal{X}})\label{eq:upthrh-7}\\
        &\leq H\left((\mathbf{S}_n^{(t+1)})_{n\in\mathcal{R}\setminus\mathcal{X}\setminus\mathcal{D}^{(t)}}\mid (\mathbf{S}_n^{(t)})_{n\in[N]},(\mathbf{S}_n^{(t+1)})_{n\in\mathcal{X}}\right)\label{eq:upthrh-8}\\
        &\leq H\left((\mathbf{S}_n^{(t+1)})_{n\in\mathcal{R}\setminus\mathcal{X}\setminus\mathcal{D}^{(t)}}\right)\label{eq:upthrh-9}\\
        &\leq \sum_{n\in\mathcal{R}\setminus\mathcal{X}\setminus\mathcal{D}^{(t)}}H(\mathbf{S}_n^{(t+1)})\label{eq:upthrh-10}
    \end{align}
    in $q$-ary units. Steps are justified as follows. \eqref{eq:upthrh-1} holds because of the independence constraint \eqref{def:ind}. \eqref{eq:upthrh-2} follows from the definition of mutual information. \eqref{eq:upthrh-3} holds because according to the $R_r$-recoverability constraint \eqref{def:Rr-rec} and correctness constraint \eqref{def:up-correct}, $\mathbf{W}^{(t)}$ and $\mathbf{W}^{(t+1)}$ are fully determined by $(\mathbf{S}_n^{(t+1)})_{n\in\mathcal{R}}$ and $(\mathbf{S}_n^{(t)})_{n\in[N]}$, and $\boldsymbol{\Delta}^{(t)}=\mathbf{W}^{(t+1)}-\mathbf{W}^{(t)}$. \eqref{eq:upthrh-4} follows from the chain rule of mutual information. \eqref{eq:upthrh-5} holds due to Lemma \ref{lemma:condindinc}. Next, \eqref{eq:upthrh-6} follows from the fact that the storage at dropout servers $\mathcal{D}^{(t)}$ must remain untouched by the update operation according to \eqref{eq:uddropstor}. \eqref{eq:upthrh-7} holds since conditioning on $(\mathbf{S}_n^{(t)})_{n\in[N]}$, $(\mathbf{S}_n^{(t)})_{n\in\mathcal{D}^{(t)}}$ is constant. In \eqref{eq:upthrh-8}, we used the definition of mutual information and non-negativity of entropy, and in \eqref{eq:upthrh-9}, \eqref{eq:upthrh-10} we repeatedly used the fact that conditioning reduces entropy.
    
    Averaging the above inequality over all possible choices of $\mathcal{X}$ and $\mathcal{R}$, we have
    \begin{align}
        \frac{\sum_{n\in[N]\setminus\mathcal{D}^{(t)}}H(\mathbf{S}_n^{(t+1)})}{L}\geq \frac{N-|\mathcal{D}^{(t)}|}{R_r-X^{(t)}-|\mathcal{D}^{(t)}|}.
    \end{align}
    On the other hand, according to the constraint of storage cost \eqref{eq:minstor}, we have
    \begin{align}
        \frac{\sum_{n\in[N]\setminus\mathcal{D}^{(t)}}H(\mathbf{S}_n^{(t+1)})}{L}\leq \frac{N-|\mathcal{D}^{(t)}|}{K_c}.
    \end{align}
    Therefore, 
    \begin{align}
        |\mathcal{D}^{(t)}|&\leq R_r-K_c-X^{(t)}.
    \end{align}
    Note that $|\mathcal{D}^{(t)}|\in\mathbb{N}^*$, we have
    \begin{align}
        |\mathcal{D}^{(t)}|&\leq R_r-\ceil*{K_c}-X^{(t)}\label{eq:boundonD}\\
        \Rightarrow R_u^{(t)}&\geq N-R_r+\ceil*{K_c}+X^{(t)}.
    \end{align}
    This completes the proof.
\end{proof}
\begin{proof}{\bf (Lower bound on the download cost $C_r^{(t)}$)}
    Recall that we consider only the steady state of the system, thus for $t>t_0$ the user must have experienced all possible dropout states during a series of read and update operations. Since $|\mathcal{D}^{(t)}|\leq R_r-\ceil*{K_c}-X^{(t)}$ for all $t\in\mathbb{N}^*$ according to \eqref{eq:boundonD}, there must exist time slots $t_1,t_2,\cdots,t_0$ such that the users execute the update operation at these time slots and the sets $\mathcal{D}^{(t)}\cup\mathcal{X}^{(t)}, t=t_1,t_2,\cdots,t_0$ take all possible subsets of $[N]$ such that $|\mathcal{D}^{(t)}\cup\mathcal{X}^{(t)}|= R_r-\ceil*{K_c}$, where $\mathcal{X}^{(t)}$ is an arbitrary subset of $[N]$ such that $\mathcal{X}^{(t)}\cap \mathcal{D}^{(t)}=\emptyset, |\mathcal{X}^{(t)}|=X^{(t)}$. Then, according to Lemma \ref{lemma:indeinh}, for all $t>t_0$ and $\mathcal{X}\subset[N]$ such that $|\mathcal{X}|=R_r-\ceil*{K_c}$, we have 
    \begin{align}\label{eq:scxsec}
        I\left((\mathbf{S}_n^{(t)})_{n\in\mathcal{X}}; \mathbf{W}^{(t)}\right)=0.
    \end{align}
    
    At this point, the lower bound on $C_r^{(t)}$ follows from a quite standard result of threshold secret sharing, see, e.g., \cite{huang2016communication,bitar2017staircase}. However, for the sake of completeness, the following is a proof in our system of notations. Let us assume that the read operation is executed at time slot $t, t>t_0$. Then for all $\mathcal{X}\subset[N]\setminus\mathcal{D}^{(t)}$ such that $|\mathcal{X}|=R_r-\ceil*{K_c}$, we have
    \begin{align}
        L&=H(\mathbf{W}^{(t)})\\
        &=I(\mathbf{W}^{(t)};(\mathbf{A}_n^{(t)})_{n\in[N]\setminus\mathcal{D}^{(t)}})+H(\mathbf{W}^{(t)}\mid (\mathbf{A}_n^{(t)})_{n\in[N]\setminus\mathcal{D}^{(t)}})\label{eq:lbdc-1}\\
        &=I(\mathbf{W}^{(t)};(\mathbf{A}_n^{(t)})_{n\in[N]\setminus\mathcal{D}^{(t)}})\label{eq:lbdc-2}\\
        &=I(\mathbf{W}^{(t)};(\mathbf{A}_n^{(t)})_{n\in\mathcal{X}})+I(\mathbf{W}^{(t)};(\mathbf{A}_n^{(t)})_{n\in[N]\setminus\mathcal{X}\setminus\mathcal{D}^{(t)}}\mid (\mathbf{A}_n^{(t)})_{n\in\mathcal{X}})\label{eq:lbdc-3}\\
        &\leq I(\mathbf{W}^{(t)};(\mathbf{S}_n^{(t)})_{n\in\mathcal{X}})+I(\mathbf{W}^{(t)};(\mathbf{A}_n^{(t)})_{n\in[N]\setminus\mathcal{X}\setminus\mathcal{D}^{(t)}}\mid (\mathbf{A}_n^{(t)})_{n\in\mathcal{X}})\label{eq:lbdc-4}\\
        &=I(\mathbf{W}^{(t)};(\mathbf{A}_n^{(t)})_{n\in[N]\setminus\mathcal{X}\setminus\mathcal{D}^{(t)}}\mid (\mathbf{A}_n^{(t)})_{n\in\mathcal{X}})\label{eq:lbdc-5}\\
        &\leq H\left((\mathbf{A}_n^{(t)})_{n\in[N]\setminus\mathcal{X}\setminus\mathcal{D}^{(t)}}\mid (\mathbf{A}_n^{(t)})_{n\in\mathcal{X}}\right)\label{eq:lbdc-6}\\
        &\leq H\left((\mathbf{A}_n^{(t)})_{n\in[N]\setminus\mathcal{X}\setminus\mathcal{D}^{(t)}}\right)\label{eq:lbdc-7}\\
        &\leq \sum_{n\in[N]\setminus\mathcal{X}\setminus\mathcal{D}^{(t)}}H(\mathbf{A}_n^{(t)})\label{eq:lbdc-8}
    \end{align}
    in $q$-ary units, where \eqref{eq:lbdc-1} follows from the definition of mutual information. \eqref{eq:lbdc-2} holds due to the correctness constraint \eqref{def:dlcrec}. \eqref{eq:lbdc-3} follows from the chain rule of mutual information. \eqref{eq:lbdc-4} holds because $(\mathbf{A}_n^{(t)})_{n\in\mathcal{X}}$ is fully determined by $(\mathbf{S}_n^{(t)})_{n\in\mathcal{X}}$ according to \eqref{def:detmin}. \eqref{eq:lbdc-5} holds due to \eqref{eq:scxsec}. In \eqref{eq:lbdc-6}, we used the definition of mutual information and non-negativity of entropy, and in \eqref{eq:lbdc-7}, \eqref{eq:lbdc-8} we repeatedly used the fact that conditioning reduces entropy.

    Averaging the above inequality over all possible choices of $\mathcal{X}$, we have
    \begin{align}
        C_r^{(t)}=\frac{\sum_{n\in[N]\setminus\mathcal{D}^{(t)}}H(\mathbf{A}_n^{(t)})}{L}\geq \frac{N-|\mathcal{D}^{(t)}|}{N-R_c+\ceil*{K_c}-|\mathcal{D}^{(t)}|}. 
    \end{align}
    This completes the proof.
\end{proof}
\begin{proof}{\bf (Lower bound on the upload cost $C_u^{(t)}$)}
    Let us assume that the update operation is executed at time slot $t$. Then for all $\mathcal{X},\mathcal{R}$ such that $\mathcal{X}\subset\mathcal{R}\subset[N]\setminus\mathcal{D}^{(t)}$ and $|\mathcal{X}|=X^{(t)}, |\mathcal{R}|=R_r-|\mathcal{D}^{(t)}|$, we have
    \begin{align}
        L&=H(\boldsymbol{\Delta}^{(t)})\\
        &=I(\boldsymbol{\Delta}^{(t)};(\mathbf{Q}_n^{(t)})_{n\in\mathcal{R}})+H(\boldsymbol{\Delta}^{(t)}\mid (\mathbf{Q}_n^{(t)})_{n\in\mathcal{R}})\label{eq:lbuc-1}\\
        &=I(\boldsymbol{\Delta}^{(t)};(\mathbf{Q}_n^{(t)})_{n\in\mathcal{R}})\label{eq:lbuc-2}\\
        &=I(\boldsymbol{\Delta}^{(t)};(\mathbf{Q}_n^{(t)})_{n\in\mathcal{X}})+I(\boldsymbol{\Delta}^{(t)};(\mathbf{Q}_n^{(t)})_{n\in\mathcal{R}\setminus\mathcal{X}}\mid (\mathbf{Q}_n^{(t)})_{n\in\mathcal{X}})\label{eq:lbuc-3}\\
        &=I(\boldsymbol{\Delta}^{(t)};(\mathbf{Q}_n^{(t)})_{n\in\mathcal{R}\setminus\mathcal{X}}\mid (\mathbf{Q}_n^{(t)})_{n\in\mathcal{X}})\label{eq:lbuc-4}\\
        &\leq H\left((\mathbf{Q}_n^{(t)})_{n\in\mathcal{R}\setminus\mathcal{X}}\mid (\mathbf{Q}_n^{(t)})_{n\in\mathcal{X}}\right)\label{eq:lbuc-5}\\
        &\leq H\left((\mathbf{Q}_n^{(t)})_{n\in\mathcal{R}\setminus\mathcal{X}}\right)\label{eq:lbuc-6}\\
        &\leq \sum_{n\in\mathcal{R}\setminus\mathcal{X}} H(\mathbf{Q}_n^{(t)})\label{eq:lbuc-7}
    \end{align}
    in $q$-ary units, where \eqref{eq:lbuc-1} follows from the definition of mutual information. \eqref{eq:lbuc-2} holds due to the Lemma \ref{lemma:increcover}. \eqref{eq:lbuc-3} follows from the chain rule of mutual information. \eqref{eq:lbuc-4} holds due to the $X^{(t)}$-security constraint \eqref{def:Xsec}.  In \eqref{eq:lbuc-5}, we used the definition of mutual information and non-negativity of entropy, and in \eqref{eq:lbuc-6}, \eqref{eq:lbuc-7} we repeatedly used the fact that conditioning reduces entropy.

    Averaging over all possible choices of $\mathcal{R}$ and $\mathcal{X}$, we have
    \begin{align}
        C_u^{(t)}=\frac{\sum_{n\in[N]\setminus\mathcal{D}^{(t)}}H(\mathbf{Q}_n^{(t)})}{L}\geq\frac{N-|\mathcal{D}^{(t)}|}{R_r-X^{(t)}-|\mathcal{D}^{(t)}|}.
    \end{align}
    This completes the proof.
\end{proof}

\section{Proof of Theorem \ref{thm:main}: Achievability}\label{sec:achiv}
Throughout this section, we only consider the setting where $K_c\in\mathbb{N}^*$. Otherwise, it suffices to construct the achievability scheme for the setting where the storage factor is $\ceil*{K_c}$ instead, which trivially satisfy the constraint of storage cost as $1/\ceil*{K_c}\leq 1/K_c$. To make the presentation of the general scheme more accessible, let us start with a motivating example.
\subsection{Motivating Example}\label{sec:exam1}
\subsubsection{Construction of the Storage}
Consider the setting where $N=6, R_r=4, K_c=2$, i.e., the collection of the storage at all of the $N$ servers forms an $(K_c=2, R_r=4, N=6)$-coded distributed storage at any time slot $t, t\in\mathbb{N}$. Let $x_1,x_2,\cdots,x_6,f_1,f_2,\cdots,f_6$ be a total of $12$ distinct elements from a finite field $\mathbb{F}_q, q\geq 12$, and let us set $L=12$, i.e., at any time slot $t$, the message $\mathbf{W}^{(t)}=[W_1^{(t)},W_2^{(t)},\cdots,W_{12}^{(t)}]$ consists of $L=12$ symbols from the finite field $\mathbb{F}_q$. The Cauchy matrix generated by $x_1,x_2,\cdots,x_6,f_1,f_2,\cdots,f_6$ is defined as 
\begin{align}
\mathbf{C}=
\begin{bmatrix}
\frac{1}{x_1-f_1}&\frac{1}{x_1-f_2}&\cdots&\frac{1}{x_1-f_6}\\
\frac{1}{x_2-f_1}&\frac{1}{x_2-f_2}&\cdots&\frac{1}{x_2-f_6}\\
\vdots&\vdots&\cdots&\vdots\\
\frac{1}{x_6-f_1}&\frac{1}{x_6-f_2}&\cdots&\frac{1}{x_6-f_6}
\end{bmatrix}.
\end{align}
For all $t\in\mathbb{N}$, let $Z^{(t)}_1,Z^{(t)}_2,\cdots,Z^{(t)}_{12}$ be symbols from the finite field $\mathbb{F}_q$. In particular, for $t=0$, $Z^{(0)}_1,Z^{(0)}_2,\cdots,Z^{(0)}_{12}$ are uniformly i.i.d. over $\mathbb{F}_q$, independent of the (initial) message $\mathbf{W}^{(0)}$. For all $t\in\mathbb{N}$, let us define
\begin{align}
\mathbf{M}^{(t)}=\left[
\begin{NiceArray}{ccc|c|cc}
W^{(t)}_1 & W^{(t)}_2 & W^{(t)}_3 & \Block[transparent, fill=Plum!20,rounded-corners]{3-1}{} {Z^{(t)}_1}  & \Block[transparent, fill=Cyan!20,rounded-corners]{2-2}{} {Z^{(t)}_4}  & {Z^{(t)}_5} \\
W^{(t)}_4 & W^{(t)}_5 & W^{(t)}_6 & {Z^{(t)}_2}  & {Z^{(t)}_6}  & {Z^{(t)}_8}\\
W^{(t)}_7 & W^{(t)}_8 & W^{(t)}_9 & {Z^{(t)}_3}  & {Z^{(t)}_9}  & {Z^{(t)}_{10}} \\
W^{(t)}_{10} & W^{(t)}_{11} & W^{(t)}_{12} & {Z^{(t)}_7}  & {Z^{(t)}_{11}}  & {Z^{(t)}_{12}}\\
\Block[transparent, fill=Plum!20,rounded-corners]{1-3}{}
{Z^{(t)}_1}  & {Z^{(t)}_2}  & {Z^{(t)}_{3}}  &\Block[transparent, fill=Cyan!20,rounded-corners]{1-1}{} {Z^{(t)}_8}  & 0 & 0\\
\Block[transparent, fill=Cyan!20,rounded-corners]{1-3}{}{Z^{(t)}_4}  & {Z^{(t)}_5}  & {Z^{(t)}_6}  & 0 & 0 & 0
\end{NiceArray}\right].\label{eq:ex-init}
\end{align}
Then for our RDCDS scheme, at any time slot $n\in\mathbb{N}$, the storage at Server $n$, denoted as $\mathbf{S}_n^{(t)}$, is
\begin{align}\label{eq:nstor0}
    \mathbf{S}_n^{(t)} = \left(\mathbf{C}\mathbf{M}^{(t)}\right)(n,:),
\end{align}
i.e., the $n^{th}$ row of $\mathbf{C}\mathbf{M}^{(t)}$. Recall that the storage at time slot $t=0$, i.e., $\mathbf{S}_n^{(0)}, n\in[6]$, is initialized {\it a priori}, thus the global coordinator is responsible for the generation of the initial message $\mathbf{W}^{(0)}$, the initial noise symbols $Z^{(0)}_1,Z^{(0)}_2,\cdots,Z^{(0)}_{12}$ and the initial storage at the $N$ servers. Besides, the constraint of minimal storage cost is satisfied, as each server stores a total of $6$ $q$-ary symbols, $H(\mathbf{S}_n^{(t)})=12/2=L/K_c$.

\subsubsection{Execution of the Read Operation}
\paragraph{Example Case 1}
Assume that the user wishes to execute the read operation at time slot $t$, and $\mathcal{D}^{(t)}=\{6\}$, i.e., Server $6$ drops out at this point. To recover the message $\mathbf{W}^{(t)}$, the user downloads the following symbols from available servers.
\begin{align}
    \mathbf{A}_n^{(t)}=\mathbf{S}_n^{(t)}([4])^\mathtt{T}, n\in[5].
\end{align}
Recall that according to the construction of the secure storage, the downloaded symbols can be written in the following matrix form
\begin{align}
&\left[
\mathbf{A}^{(t)}_1~
\mathbf{A}^{(t)}_2~
\mathbf{A}^{(t)}_3~
\mathbf{A}^{(t)}_4~
\mathbf{A}^{(t)}_5
\right]^\mathtt{T}\notag\\
=&\left(\mathbf{C}\mathbf{M}^{(t)}\right)([5],[4])\\
=&\begin{bmatrix}
\frac{1}{x_1-f_1}&\frac{1}{x_1-f_2}&\cdots&\frac{1}{x_1-f_6}\\
\frac{1}{x_2-f_1}&\frac{1}{x_2-f_2}&\cdots&\frac{1}{x_2-f_6}\\
\frac{1}{x_3-f_1}&\frac{1}{x_3-f_2}&\cdots&\frac{1}{x_3-f_6}\\
\frac{1}{x_4-f_1}&\frac{1}{x_4-f_2}&\cdots&\frac{1}{x_4-f_6}\\
\frac{1}{x_5-f_1}&\frac{1}{x_5-f_2}&\cdots&\frac{1}{x_5-f_6}
\end{bmatrix}
\left[ \begin{NiceArray}{ccc|c}
W^{(t)}_1 & W^{(t)}_2 & W^{(t)}_3 & {Z^{(t)}_1}\\
W^{(t)}_4 & W^{(t)}_5 & W^{(t)}_6 & {Z^{(t)}_2}\\
W^{(t)}_7 & W^{(t)}_8 & W^{(t)}_9 & {Z^{(t)}_3}\\
W^{(t)}_{10} & W^{(t)}_{11} & W^{(t)}_{12} & {Z^{(t)}_7} \\
{Z^{(t)}_1}  & {Z^{(t)}_2}  & {Z^{(t)}_{3}}  & {Z^{(t)}_8}\\
{Z^{(t)}_4}  & {Z^{(t)}_5}  & {Z^{(t)}_6}  & 0 
\end{NiceArray}\right].
\end{align}
Therefore, it is evident to see that the $12$ symbols of the desired message $\mathbf{W}^{(t)}$, i.e., $W_1^{(t)},W_2^{(t)},\cdots,W_{12}^{(t)}$, are recoverable by applying a successive interference cancellation decoding strategy. In particular, note that the square Cauchy matrix on the RHS of \eqref{eq:exrecs1} is invertible, we can first decode the following symbols by solving the following linear system
\begin{align}\label{eq:exrecs1}
\left[
{Z^{(t)}_1}~
{Z^{(t)}_2}~
{Z^{(t)}_3}~
{Z^{(t)}_7}~
{Z^{(t)}_8}~
\right]^\mathtt{T}=
\mathbf{C}([5],[5])^{-1}
\left[
\mathbf{A}^{(t)}_1(4)~
\mathbf{A}^{(t)}_2(4)~
\mathbf{A}^{(t)}_3(4)~
\mathbf{A}^{(t)}_4(4)~
\mathbf{A}^{(t)}_5(4)
\right]^\mathtt{T}.
\end{align}
Then, subtracting the decoded symbols $Z^{(t)}_1,Z^{(t)}_2,Z^{(t)}_3$ from $\mathbf{A}^{(t)}_n(1),\mathbf{A}^{(t)}_n(2)$ and $\mathbf{A}^{(t)}_n(3)$ for all $n\in[5]$, the message is thus recoverable.
\begin{align}
\begin{bmatrix}
W^{(t)}_1 & W^{(t)}_2 & W^{(t)}_3\\
W^{(t)}_4 & W^{(t)}_5 & W^{(t)}_6\\
W^{(t)}_7 & W^{(t)}_8 & W^{(t)}_9\\
W^{(t)}_{10} & W^{(t)}_{11} & W^{(t)}_{12}\\
{Z^{(t)}_4}  & {Z^{(t)}_5}  & {Z^{(t)}_{6}}
\end{bmatrix}
=&\mathbf{C}([5],\{1,2,3,4,6\})^{-1}\notag\\
&\times\begin{bmatrix}
\mathbf{A}^{(t)}_1(1)-\frac{1}{x_1-f_5}{Z^{(t)}_{1}}  
&\mathbf{A}^{(t)}_1(2)-\frac{1}{x_1-f_5}{Z^{(t)}_{2}}  
&\mathbf{A}^{(t)}_1(3)-\frac{1}{x_1-f_5}{Z^{(t)}_{3}}\\
\mathbf{A}^{(t)}_2(1)-\frac{1}{x_2-f_5}{Z^{(t)}_{1}}  
&\mathbf{A}^{(t)}_2(2)-\frac{1}{x_2-f_5}{Z^{(t)}_{2}} 
&\mathbf{A}^{(t)}_2(3)-\frac{1}{x_2-f_5}{Z^{(t)}_{3}} \\
\mathbf{A}^{(t)}_3(1)-\frac{1}{x_3-f_5}{Z^{(t)}_{1}}  
&\mathbf{A}^{(t)}_3(2)-\frac{1}{x_3-f_5}{Z^{(t)}_{2}} 
&\mathbf{A}^{(t)}_3(3)-\frac{1}{x_3-f_5}{Z^{(t)}_{3}} \\
\mathbf{A}^{(t)}_4(1)-\frac{1}{x_4-f_5}{Z^{(t)}_{1}}  
& \mathbf{A}^{(t)}_4(2)-\frac{1}{x_4-f_5}{Z^{(t)}_{2}} 
& \mathbf{A}^{(t)}_4(3)-\frac{1}{x_4-f_5}{Z^{(t)}_{3}} \\
\mathbf{A}^{(t)}_5(1)-\frac{1}{x_5-f_5}{Z^{(t)}_{1}} 
& \mathbf{A}^{(t)}_5(2)-\frac{1}{x_5-f_5}{Z^{(t)}_{2}} 
& \mathbf{A}^{(t)}_5(3)-\frac{1}{x_5-f_5}{Z^{(t)}_{3}}
\end{bmatrix}.
\end{align}
Since a total of $12$ symbols of the desired message is recovered from a total of $5\times 4=20$ downloaded symbols, the normalized download cost is $20/12=5/3$, which matches the converse bound $(N-|\mathcal{D}^{(t)}|)/(N-R_r+K_c-|\mathcal{D}^{(t)}|)=5/3$.

\paragraph{Example Case 2}
Let us consider one more case where at time slot $t$, we have $\mathcal{D}^{(t)}=\{3,6\}$, i.e., Server $3$ and Server $6$ drop out, and the user wishes to recover the message $\mathbf{W}^{(t)}$. To this end, the user downloads the following symbols from available servers $[N]\setminus\mathcal{D}^{(t)}$.
\begin{align}
    \mathbf{A}_n^{(t)}=(\mathbf{S}_n^{(t)})^\mathtt{T}, n\in[N]\setminus\mathcal{D}^{(t)}.
\end{align}
Similarly, since the downloaded symbols at this point can be represented in the following form,
\begin{align}
\left[
\mathbf{A}^{(t)}_1~
\mathbf{A}^{(t)}_2~
\mathbf{A}^{(t)}_4~
\mathbf{A}^{(t)}_5
\right]^\mathtt{T}
=&\mathbf{C}([N]\setminus\mathcal{D}^{(t)}, :)\mathbf{M}^{(t)}\\
=&\begin{bmatrix}
\frac{1}{x_1-f_1}&\frac{1}{x_1-f_2}&\cdots&\frac{1}{x_1-f_6}\\
\frac{1}{x_2-f_1}&\frac{1}{x_2-f_2}&\cdots&\frac{1}{x_2-f_6}\\
\frac{1}{x_4-f_1}&\frac{1}{x_4-f_2}&\cdots&\frac{1}{x_4-f_6}\\
\frac{1}{x_5-f_1}&\frac{1}{x_5-f_2}&\cdots&\frac{1}{x_5-f_6}
\end{bmatrix}
\left[ \begin{NiceArray}{ccc|c|cc}
W^{(t)}_1 & W^{(t)}_2 & W^{(t)}_3 & {Z^{(t)}_1}  & {Z^{(t)}_4}  & {Z^{(t)}_5} \\
W^{(t)}_4 & W^{(t)}_5 & W^{(t)}_6 & {Z^{(t)}_2}  & {Z^{(t)}_6}  & {Z^{(t)}_8}\\
W^{(t)}_7 & W^{(t)}_8 & W^{(t)}_9 & {Z^{(t)}_3}  & {Z^{(t)}_9}  & {Z^{(t)}_{10}} \\
W^{(t)}_{10} & W^{(t)}_{11} & W^{(t)}_{12} & {Z^{(t)}_7}  & {Z^{(t)}_{11}}  & {Z^{(t)}_{12}}\\
{Z^{(t)}_1}  & {Z^{(t)}_2}  & {Z^{(t)}_{3}}  & {Z^{(t)}_8}  & 0 & 0\\
{Z^{(t)}_4}  & {Z^{(t)}_5}  & {Z^{(t)}_6}  & 0 & 0 & 0
\end{NiceArray}\right],
\end{align}
the desired symbols can be recovered via a successive interference cancellation decoding strategy, omitted here for brevity. In terms of the communication efficiency, we now have
\begin{align}
    C_r^{(t)}=\frac{4\times 6}{12}=2,
\end{align}
which also matches the converse bound $(N-|\mathcal{D}^{(t)}|)/(N-R_r+K_c-|\mathcal{D}^{(t)}|)=2$.

\subsubsection{Execution of the Update Operation}
\paragraph{Example Case 1}
Let us assume that the user wishes to update the message to the distributed servers with the generated increment $\mathbf{\Delta}^{(t)}=[\Delta^{(t)}_1,\Delta^{(t)}_2,\cdots,\Delta^{(t)}_{12}]$ at time slot $t$. Let us say $\mathcal{D}^{(t)}=\{5\}$ and $X^{(t)}=0$, i.e., Server $5$ drops out at this time, and no security guarantee is required for this update. To construct the coded increment for available servers, let us define
\begin{align}\label{eq:ex-updM1}
\dot{\mathbf{M}}^{(t)}=
\left[ \begin{NiceArray}{ccc|c|cc}
\Delta^{(t)}_1 & \Delta^{(t)}_2 & \Delta^{(t)}_3 &\Block[transparent, fill=Plum!20,rounded-corners]{3-1}{} {H^{(t)}_1} &\Block[transparent, fill=Cyan!20,rounded-corners]{2-2}{} 0 & 0 \\
\Delta^{(t)}_4 & \Delta^{(t)}_5 & \Delta^{(t)}_6 & {H^{(t)}_2}& 0 & 0 \\
\Delta^{(t)}_7 & \Delta^{(t)}_8 & \Delta^{(t)}_9 & {H^{(t)}_3} & 0 & 0 \\
\Delta^{(t)}_{10} & \Delta^{(t)}_{11} & \Delta^{(t)}_{12} & H^{(t)}_4 & 0 & 0 \\
\Block[transparent, fill=Plum!20,rounded-corners]{1-3}{}{H^{(t)}_1} & {H^{(t)}_2} & {H^{(t)}_3}  &\Block[transparent, fill=Cyan!20,rounded-corners]{1-1}{0} & 0 & 0 \\
\Block[transparent, fill=Cyan!20,rounded-corners]{1-3}{}{0}  & 0 & {0} & {0} & 0 & 0 
\end{NiceArray}\right],
\end{align}
where  
\begin{align}
    &H^{(t)}_1=-\left(x_5-f_5\right)\left(\frac{1}{x_5-f_1}\Delta^{(t)}_1+\frac{1}{x_5-f_2}\Delta^{(t)}_4+\frac{1}{x_5-f_3}\Delta^{(t)}_7+\frac{1}{x_5-f_4}\Delta^{(t)}_{10}\right),\\
    &H^{(t)}_2=-\left(x_5-f_5\right)\left(\frac{1}{x_5-f_1}\Delta^{(t)}_2+\frac{1}{x_5-f_2}\Delta^{(t)}_5+\frac{1}{x_5-f_3}\Delta^{(t)}_8+\frac{1}{x_5-f_4}\Delta^{(t)}_{11}\right),\\
    &H^{(t)}_3=-\left(x_5-f_5\right)\left(\frac{1}{x_5-f_1}\Delta^{(t)}_3+\frac{1}{x_5-f_2}\Delta^{(t)}_6+\frac{1}{x_5-f_3}\Delta^{(t)}_9+\frac{1}{x_5-f_4}\Delta^{(t)}_{12}\right),
\end{align}
and
\begin{align}
    H^{(t)}_4=-\left(x_5-f_4\right)\left(\frac{1}{x_5-f_1}H^{(t)}_1+\frac{1}{x_5-f_2}H^{(t)}_2+\frac{1}{x_5-f_3}H^{(t)}_3\right).
\end{align}
Then, the coded increment $\mathbf{Q}^{(t)}_n, n\in[N]\setminus\mathcal{D}^{(t)}$ is constructed as follows
\begin{align}
    \mathbf{Q}^{(t)}_n = \left(\mathbf{C}\dot{\mathbf{M}}^{(t)}\right)(n, :),
\end{align}
i.e., the $n^{th}$ row of $\mathbf{C}\dot{\mathbf{M}}^{(t)}$. Note that since $|\mathcal{D}^{(t)}|$ is viewed as a constant during the time slot $t$ and $\mathbf{Q}^{(t)}_n(\{5,6\})$ must be zeros, it suffices to upload the first $4$ elements of $\mathbf{Q}^{(t)}_n$ to each of the available servers.

Upon receiving the coded increment, Server $n, n\in[N]\setminus\mathcal{D}^{(t)}$ updates its storage according to the following equation
\begin{align}
    \mathbf{S}_n^{(t+1)}=&\mathbf{S}^{(t)}_n+\mathbf{Q}^{(t)}_n.
\end{align}
Recall that for all $n\in\mathcal{D}^{(t)}$, $\mathbf{S}_n^{(t+1)}=\mathbf{S}^{(t)}_n$ by definition \eqref{eq:uddropstor}. To see the correctness of our construction, it suffices to show that the updated storage $\mathbf{S}_n^{(t+1)}, n\in[N]$, i.e., the storage at time slot $t+1$, is of the same form as in \eqref{eq:nstor0}. The key observation in this regard is that, according to the definition of $H_1^{(t)}, H_2^{(t)}, H_3^{(t)}, H_4^{(t)}$, it is guaranteed that
\begin{align}
    \mathbf{C}(\mathcal{D}^{(t)},:)\dot{\mathbf{M}}^{(t)}=\mathbf{0}.
\end{align}
For example,
\begin{align}\label{eq:examc1H}
    &\left(\mathbf{C}(\mathcal{D}^{(t)},:)\dot{\mathbf{M}}^{(t)}\right)(1,1)\notag\\
    =&\frac{1}{x_5-f_1}\Delta^{(t)}_1+\frac{1}{x_5-f_2}\Delta^{(t)}_4+\frac{1}{x_5-f_3}\Delta^{(t)}_7+\frac{1}{x_5-f_4}\Delta^{(t)}_{10}+\frac{1}{x_5-f_5}H_1^{(t)}\\
    =&0.
\end{align}
Therefore, even though the dropout servers cannot be updated, i.e., $\mathbf{S}_n^{(t+1)}=\mathbf{S}^{(t)}_n$ for all $n\in\mathcal{D}^{(t)}$, at time slot $t+1$, the storage at Server $n, n\in[N]$ can be written uniformly as follows
\begin{align}
\mathbf{S}_n^{(t+1)}
=&\left(\mathbf{C}\mathbf{M}^{(t)}\right)(n,:)+\left(\mathbf{C}\dot{\mathbf{M}}^{(t)}\right)(n,:)\\
=&\mathbf{C}(n,:)\left(
\left[ \begin{NiceArray}{ccc|c|cc}
W^{(t)}_1 & W^{(t)}_2 & W^{(t)}_3 & \Block[transparent, fill=Plum!20,rounded-corners]{3-1}{} {Z^{(t)}_1}  & \Block[transparent, fill=Cyan!20,rounded-corners]{2-2}{} {Z^{(t)}_4}  & {Z^{(t)}_5} \\
W^{(t)}_4 & W^{(t)}_5 & W^{(t)}_6 & {Z^{(t)}_2}  & {Z^{(t)}_6}  & {Z^{(t)}_8}\\
W^{(t)}_7 & W^{(t)}_8 & W^{(t)}_9 & {Z^{(t)}_3}  & {Z^{(t)}_9}  & {Z^{(t)}_{10}} \\
W^{(t)}_{10} & W^{(t)}_{11} & W^{(t)}_{12} & {Z^{(t)}_7}  & {Z^{(t)}_{11}}  & {Z^{(t)}_{12}}\\
\Block[transparent, fill=Plum!20,rounded-corners]{1-3}{}
{Z^{(t)}_1}  & {Z^{(t)}_2}  & {Z^{(t)}_{3}}  &\Block[transparent, fill=Cyan!20,rounded-corners]{1-1}{} {Z^{(t)}_8}  & 0 & 0\\
\Block[transparent, fill=Cyan!20,rounded-corners]{1-3}{}{Z^{(t)}_4}  & {Z^{(t)}_5}  & {Z^{(t)}_6}  & 0 & 0 & 0
\end{NiceArray}\right]+
\left[ \begin{NiceArray}{ccc|c|cc}
\Delta^{(t)}_1 & \Delta^{(t)}_2 & \Delta^{(t)}_3 &\Block[transparent, fill=Plum!20,rounded-corners]{3-1}{} {H^{(t)}_1} &\Block[transparent, fill=Cyan!20,rounded-corners]{2-2}{} 0 & 0 \\
\Delta^{(t)}_4 & \Delta^{(t)}_5 & \Delta^{(t)}_6 & {H^{(t)}_2}& 0 & 0 \\
\Delta^{(t)}_7 & \Delta^{(t)}_8 & \Delta^{(t)}_9 & {H^{(t)}_3} & 0 & 0 \\
\Delta^{(t)}_{10} & \Delta^{(t)}_{11} & \Delta^{(t)}_{12} & H^{(t)}_4 & 0 & 0 \\
\Block[transparent, fill=Plum!20,rounded-corners]{1-3}{}{H^{(t)}_1} & {H^{(t)}_2} & {H^{(t)}_3}  &\Block[transparent, fill=Cyan!20,rounded-corners]{1-1}{0} & 0 & 0 \\
\Block[transparent, fill=Cyan!20,rounded-corners]{1-3}{}{0}  & 0 & {0} & {0} & 0 & 0 
\end{NiceArray}\right]\right)\\
=&\notag\mathbf{C}(n,:)
\left[ \begin{NiceArray}{ccc|c|cc}
W^{(t)}_1+\Delta^{(t)}_1 & W^{(t)}_2+\Delta^{(t)}_2 & W^{(t)}_3+\Delta^{(t)}_3 &\Block[transparent, fill=Plum!20,rounded-corners]{3-1}{} {Z^{(t)}_1+H^{(t)}_1}  &\Block[transparent, fill=Cyan!20,rounded-corners]{2-2}{} {Z^{(t)}_4}  & {Z^{(t)}_5} \\
W^{(t)}_4+\Delta^{(t)}_4 & W^{(t)}_5+\Delta^{(t)}_5 & W^{(t)}_6+\Delta^{(t)}_6 & {Z^{(t)}_2+H^{(t)}_2}  & {Z^{(t)}_6}  & {Z^{(t)}_8}\\
W^{(t)}_7+\Delta^{(t)}_7 & W^{(t)}_8+\Delta^{(t)}_8 & W^{(t)}_9+\Delta^{(t)}_9 & {Z^{(t)}_3+H^{(t)}_3}  & {Z^{(t)}_9}  & {Z^{(t)}_{10}} \\
W^{(t)}_{10}+\Delta^{(t)}_{10} & W^{(t)}_{11}+\Delta^{(t)}_{11} & W^{(t)}_{12}+\Delta^{(t)}_{12} & {Z^{(t)}_7+H^{(t)}_4}  & {Z^{(t)}_{11}}  & {Z^{(t)}_{12}}\\
\Block[transparent, fill=Plum!20,rounded-corners]{1-3}{}{Z^{(t)}_1+H^{(t)}_1}  & {Z^{(t)}_2+H^{(t)}_2}  & {Z^{(t)}_{3}+H^{(t)}_3}  &\Block[transparent, fill=Cyan!20,rounded-corners]{1-1}{} {Z^{(t)}_8}  & 0 & 0\\
\Block[transparent, fill=Cyan!20,rounded-corners]{1-3}{}{Z^{(t)}_4}  & {Z^{(t)}_5}  & {Z^{(t)}_6}  & 0 & 0 & 0
\end{NiceArray}\right]\\
=&\mathbf{C}(n,:)
\left[ \begin{NiceArray}{ccc|c|cc}
W^{(t+1)}_1 & W^{(t+1)}_2 & W^{(t+1)}_3 &\Block[transparent, fill=Plum!20,rounded-corners]{3-1}{} {Z^{(t+1)}_1}  &\Block[transparent, fill=Cyan!20,rounded-corners]{2-2}{} {Z^{(t+1)}_4}  & {Z^{(t+1)}_5} \\
W^{(t+1)}_4 & W^{(t+1)}_5 & W^{(t+1)}_6 & {Z^{(t+1)}_2}  & {Z^{(t+1)}_6}  & {Z^{(t+1)}_8}\\
W^{(t+1)}_7 & W^{(t+1)}_8 & W^{(t+1)}_9 & {Z^{(t+1)}_3}  & {Z^{(t+1)}_9}  & {Z^{(t+1)}_{10}} \\
W^{(t+1)}_{10} & W^{(t+1)}_{11} & W^{(t+1)}_{12} & {Z^{(t+1)}_7}  & {Z^{(t+1)}_{11}}  & {Z^{(t+1)}_{12}}\\
\Block[transparent, fill=Plum!20,rounded-corners]{1-3}{}{Z^{(t+1)}_1}  & {Z^{(t+1)}_2}  & {Z^{(t+1)}_{3}}  &\Block[transparent, fill=Cyan!20,rounded-corners]{1-1}{} {Z^{(t+1)}_8}  & 0 & 0\\
\Block[transparent, fill=Cyan!20,rounded-corners]{1-3}{}{Z^{(t+1)}_4}  & {Z^{(t+1)}_5}  & {Z^{(t+1)}_6}  & 0 & 0 & 0
\end{NiceArray}\right]\\
=&\mathbf{C}(n,:)\mathbf{M}^{(t+1)},
\end{align}
i.e., the storage at time slot $t+1$ is indeed in accordance with \eqref{eq:nstor0}, where the following recurrence relation holds
\begin{align}
    Z_i^{(t+1)}&=Z_i^{(t)}+H_{i}^{(t)}, &i&=1,2,3\\
    Z_7^{(t+1)}&=Z_7^{(t)}+H_{4}^{(t)},&&\\
    Z_i^{(t+1)}&=Z_i^{(t)}, &i&=4,5,6,8,9,10,11,12.
\end{align}
Finally, we can calculate that the normalized upload cost is 
\begin{align}
    C_u^{(t)}=\frac{5\times 4}{12}=\frac{5}{3},
\end{align}
which achieves the converse bound $(N-|\mathcal{D}^{(t)}|)/(R_r-X^{(t)}-|\mathcal{D}^{(t)}|)=5/3$.

\paragraph{Example Case 2}
Let us consider another update operation case where $\mathcal{D}^{(t)}=\{5\}$ and $X^{(t)}=1$, i.e., Server $5$ drops out at this time, and any single server can not reveal any information about the increment $\mathbf{\Delta}^{(t)}$. In order to construct the coded increment, let us define
\begin{align}\label{eq:ex-updM2}
\dot{\mathbf{M}}^{(t)}=
\left[ \begin{NiceArray}{ccc|c|cc}
\Delta^{(t)}_1 & \Delta^{(t)}_2 & \Delta^{(t)}_3 &\Block[transparent, fill=Plum!20,rounded-corners]{3-1}{} {\dot{Z}^{(t)}_1} &\Block[transparent, fill=Cyan!20,rounded-corners]{2-2}{} {H^{(t)}_1} & {H^{(t)}_2}\\
\Delta^{(t)}_4 & \Delta^{(t)}_5 & \Delta^{(t)}_6 & {\dot{Z}^{(t)}_2} & {H^{(t)}_3} & {H^{(t)}_4}\\
\Delta^{(t)}_7 & \Delta^{(t)}_8 & \Delta^{(t)}_9 & {\dot{Z}^{(t)}_3} & {\dot{Z}^{(t)}_9}  & {\dot{Z}^{(t)}_{10}}\\
\Delta^{(t)}_{10} & \Delta^{(t)}_{11} & \Delta^{(t)}_{12} & {\dot{Z}^{(t)}_7}  & {H^{(t)}_5}  & {H^{(t)}_6}\\
\Block[transparent, fill=Plum!20,rounded-corners]{1-3}{}{\dot{Z}^{(t)}_1}  & {\dot{Z}^{(t)}_2} & {\dot{Z}^{(t)}_3} &\Block[transparent, fill=Cyan!20,rounded-corners]{1-1}{} {H^{(t)}_4} & 0 & 0\\
\Block[transparent, fill=Cyan!20,rounded-corners]{1-3}{}{H^{(t)}_1} & {H^{(t)}_2} & {H^{(t)}_3}  & 0 & 0 & 0
\end{NiceArray}\right],
\end{align}
where $\dot{Z}^{(t)}_1,\dot{Z}^{(t)}_2,\dot{Z}^{(t)}_3,\dot{Z}^{(t)}_7,\dot{Z}^{(t)}_9,\dot{Z}^{(t)}_{10}$ are uniformly i.i.d. symbols from $\mathbb{F}_q$, independent of the increment $\mathbf{\Delta}^{(t)}$, used to guarantee $X^{(t)}=1$-security, and
\begin{align}
    &H^{(t)}_1=-\left(x_5-f_6\right)\left(\frac{1}{x_5-f_1}\Delta^{(t)}_1+\frac{1}{x_5-f_2}\Delta^{(t)}_4+\frac{1}{x_5-f_3}\Delta^{(t)}_7+\frac{1}{x_5-f_4}\Delta^{(t)}_{10}+\frac{1}{x_5-f_5}\dot{Z}^{(t)}_1\right),\\
    &H^{(t)}_2=-\left(x_5-f_6\right)\left(\frac{1}{x_5-f_1}\Delta^{(t)}_2+\frac{1}{x_5-f_2}\Delta^{(t)}_5+\frac{1}{x_5-f_3}\Delta^{(t)}_8+\frac{1}{x_5-f_4}\Delta^{(t)}_{11}+\frac{1}{x_5-f_5}\dot{Z}^{(t)}_2\right),\\
    &H^{(t)}_3=-\left(x_5-f_6\right)\left(\frac{1}{x_5-f_1}\Delta^{(t)}_3+\frac{1}{x_5-f_2}\Delta^{(t)}_6+\frac{1}{x_5-f_3}\Delta^{(t)}_9+\frac{1}{x_5-f_4}\Delta^{(t)}_{12}+\frac{1}{x_5-f_5}\dot{Z}^{(t)}_3\right),
\end{align}
and
\begin{align}
    H^{(t)}_4=-\left(x_5-f_5\right)\left(\frac{1}{x_5-f_1}\dot{Z}^{(t)}_1+\frac{1}{x_5-f_2}\dot{Z}^{(t)}_2+\frac{1}{x_5-f_3}\dot{Z}^{(t)}_3+\frac{1}{x_5-f_4}\dot{Z}^{(t)}_7\right),
\end{align}
and
\begin{align}
    &H^{(t)}_5=-\left(x_5-f_4\right)\left(\frac{1}{x_5-f_1}H^{(t)}_1+\frac{1}{x_5-f_2}H^{(t)}_3+\frac{1}{x_5-f_3}\dot{Z}^{(t)}_9\right),\\
    &H^{(t)}_6=-\left(x_5-f_4\right)\left(\frac{1}{x_5-f_1}H^{(t)}_2+\frac{1}{x_5-f_2}H^{(t)}_4+\frac{1}{x_5-f_3}\dot{Z}^{(t)}_{10}\right).
\end{align}
The construction of the coded increment $\mathbf{Q}^{(t)}_n, n\in[N]\setminus\mathcal{D}^{(t)}$ is thus 
\begin{align}
    \mathbf{Q}^{(t)}_n = \left(\mathbf{C}\dot{\mathbf{M}}^{(t)}\right)(n, :).
\end{align}
Similarly, upon receiving the coded increment, Server $n, n\in[N]\setminus\mathcal{D}^{(t)}$ updates its storage according to the following equation
\begin{align}
    \mathbf{S}_n^{(t+1)}=&\mathbf{S}^{(t)}_n+\mathbf{Q}^{(t)}_n.
\end{align}
While for all $n\in\mathcal{D}^{(t)}$, $\mathbf{S}_n^{(t+1)}=\mathbf{S}^{(t)}_n$. Since for this construction, it is also guaranteed that 
\begin{align}\label{eq:examc2H}
    \mathbf{C}(\mathcal{D}^{(t)},:)\dot{\mathbf{M}}^{(t)}=\mathbf{0},
\end{align}
at time slot $t+1$, the storage at Server $n, n\in[N]$ can be written uniformly as follows
\begin{align}
&\mathbf{S}_n^{(t+1)}\notag\\
=&\left(\mathbf{C}\mathbf{M}^{(t)}\right)(n,:)+\left(\mathbf{C}\dot{\mathbf{M}}^{(t)}\right)(n,:)\\
=&\mathbf{C}(n,:)\left(
\left[ \begin{NiceArray}{ccc|c|cc}
W^{(t)}_1 & W^{(t)}_2 & W^{(t)}_3 & \Block[transparent, fill=Plum!20,rounded-corners]{3-1}{} {Z^{(t)}_1}  & \Block[transparent, fill=Cyan!20,rounded-corners]{2-2}{} {Z^{(t)}_4}  & {Z^{(t)}_5} \\
W^{(t)}_4 & W^{(t)}_5 & W^{(t)}_6 & {Z^{(t)}_2}  & {Z^{(t)}_6}  & {Z^{(t)}_8}\\
W^{(t)}_7 & W^{(t)}_8 & W^{(t)}_9 & {Z^{(t)}_3}  & {Z^{(t)}_9}  & {Z^{(t)}_{10}} \\
W^{(t)}_{10} & W^{(t)}_{11} & W^{(t)}_{12} & {Z^{(t)}_7}  & {Z^{(t)}_{11}}  & {Z^{(t)}_{12}}\\
\Block[transparent, fill=Plum!20,rounded-corners]{1-3}{}
{Z^{(t)}_1}  & {Z^{(t)}_2}  & {Z^{(t)}_{3}}  &\Block[transparent, fill=Cyan!20,rounded-corners]{1-1}{} {Z^{(t)}_8}  & 0 & 0\\
\Block[transparent, fill=Cyan!20,rounded-corners]{1-3}{}{Z^{(t)}_4}  & {Z^{(t)}_5}  & {Z^{(t)}_6}  & 0 & 0 & 0
\end{NiceArray}\right]+
\left[ \begin{NiceArray}{ccc|c|cc}
\Delta^{(t)}_1 & \Delta^{(t)}_2 & \Delta^{(t)}_3 &\Block[transparent, fill=Plum!20,rounded-corners]{3-1}{} {\dot{Z}^{(t)}_1} &\Block[transparent, fill=Cyan!20,rounded-corners]{2-2}{} {H^{(t)}_1} & {H^{(t)}_2}\\
\Delta^{(t)}_4 & \Delta^{(t)}_5 & \Delta^{(t)}_6 & {\dot{Z}^{(t)}_2} & {H^{(t)}_3} & {H^{(t)}_4}\\
\Delta^{(t)}_7 & \Delta^{(t)}_8 & \Delta^{(t)}_9 & {\dot{Z}^{(t)}_3} & {\dot{Z}^{(t)}_9}  & {\dot{Z}^{(t)}_{10}}\\
\Delta^{(t)}_{10} & \Delta^{(t)}_{11} & \Delta^{(t)}_{12} & {\dot{Z}^{(t)}_7}  & {H^{(t)}_5}  & {H^{(t)}_6}\\
\Block[transparent, fill=Plum!20,rounded-corners]{1-3}{}{\dot{Z}^{(t)}_1}  & {\dot{Z}^{(t)}_2} & {\dot{Z}^{(t)}_3} &\Block[transparent, fill=Cyan!20,rounded-corners]{1-1}{} {H^{(t)}_4} & 0 & 0\\
\Block[transparent, fill=Cyan!20,rounded-corners]{1-3}{}{H^{(t)}_1} & {H^{(t)}_2} & {H^{(t)}_3}  & 0 & 0 & 0
\end{NiceArray}\right]\right)\\
=&\notag\mathbf{C}(n,:)
\left[ \begin{NiceArray}{ccc|c|cc}
W^{(t)}_1+\Delta^{(t)}_1 & W^{(t)}_2+\Delta^{(t)}_2 & W^{(t)}_3+\Delta^{(t)}_3 &\Block[transparent, fill=Plum!20,rounded-corners]{3-1}{} {Z^{(t)}_1+\dot{Z}^{(t)}_1}  &\Block[transparent, fill=Cyan!20,rounded-corners]{2-2}{} {Z^{(t)}_4+H^{(t)}_1}  & {Z^{(t)}_5+H^{(t)}_2} \\
W^{(t)}_4+\Delta^{(t)}_4 & W^{(t)}_5+\Delta^{(t)}_5 & W^{(t)}_6+\Delta^{(t)}_6 & {Z^{(t)}_2+\dot{Z}^{(t)}_2}  & {Z^{(t)}_6+H_3}  & {Z^{(t)}_8+H^{(t)}_4}\\
W^{(t)}_7+\Delta^{(t)}_7 & W^{(t)}_8+\Delta^{(t)}_8 & W^{(t)}_9+\Delta^{(t)}_9 & {Z^{(t)}_3+\dot{Z}^{(t)}_3}  & {Z^{(t)}_9+\dot{Z}^{(t)}_9}  & {Z^{(t)}_{10}+\dot{Z}^{(t)}_{10}} \\
W^{(t)}_{10}+\Delta^{(t)}_{10} & W^{(t)}_{11}+\Delta^{(t)}_{11} & W^{(t)}_{12}+\Delta^{(t)}_{12} & {Z^{(t)}_7+\dot{Z}^{(t)}_7}  & {Z^{(t)}_{11}+H^{(t)}_5}  & {Z^{(t)}_{12}+H^{(t)}_6}\\
\Block[transparent, fill=Plum!20,rounded-corners]{1-3}{}{Z^{(t)}_1+\dot{Z}^{(t)}_1}  & {Z^{(t)}_2+\dot{Z}^{(t)}_2}  & {Z^{(t)}_{3}+\dot{Z}^{(t)}_3}  &\Block[transparent, fill=Cyan!20,rounded-corners]{1-1}{} {Z^{(t)}_8+H^{(t)}_4}  & 0 & 0\\
\Block[transparent, fill=Cyan!20,rounded-corners]{1-3}{}{Z^{(t)}_4+H^{(t)}_1}  & {Z^{(t)}_5+H^{(t)}_2}  & {Z^{(t)}_6+H^{(t)}_3}  & 0 & 0 & 0
\end{NiceArray}\right]\\
=&\mathbf{C}(n,:)
\left[ \begin{NiceArray}{ccc|c|cc}
W^{(t+1)}_1 & W^{(t+1)}_2 & W^{(t+1)}_3 & \Block[transparent, fill=Plum!20,rounded-corners]{3-1}{} {Z^{(t+1)}_1}  & \Block[transparent, fill=Cyan!20,rounded-corners]{2-2}{} {Z^{(t+1)}_4}  & {Z^{(t+1)}_5} \\
W^{(t+1)}_4 & W^{(t+1)}_5 & W^{(t+1)}_6 & {Z^{(t+1)}_2}  & {Z^{(t+1)}_6}  & {Z^{(t+1)}_8}\\
W^{(t+1)}_7 & W^{(t+1)}_8 & W^{(t+1)}_9 & {Z^{(t+1)}_3}  & {Z^{(t+1)}_9}  & {Z^{(t+1)}_{10}} \\
W^{(t+1)}_{10} & W^{(t+1)}_{11} & W^{(t+1)}_{12} & {Z^{(t+1)}_7}  & {Z^{(t+1)}_{11}}  & {Z^{(t+1)}_{12}}\\
\Block[transparent, fill=Plum!20,rounded-corners]{1-3}{}
{Z^{(t+1)}_1}  & {Z^{(t+1)}_2}  & {Z^{(t+ 1)}_{3}}  &\Block[transparent, fill=Cyan!20,rounded-corners]{1-1}{} {Z^{(t+1)}_8}  & 0 & 0\\
\Block[transparent, fill=Cyan!20,rounded-corners]{1-3}{}{Z^{(t+1)}_4}  & {Z^{(t+1)}_5}  & {Z^{(t+1)}_6}  & 0 & 0 & 0
\end{NiceArray}\right]\\
=&\mathbf{C}(n,:)\mathbf{M}^{(t+1)}.
\end{align}
Now it is obvious that the storage at time slot $t+1$ is in accordance with \eqref{eq:nstor0} and the following recurrence relation holds
\begin{align}
    Z_i^{(t+1)}&=Z_i^{(t)}+\dot{Z}^{(t)}_i, &i&=1,2,3,7,9,10\\
    Z_i^{(t+1)}&=Z_i^{(t)}+H_{i-3}^{(t)}, &i&=4,5,6\\
    Z_8^{(t+1)}&=Z_8^{(t)}+H_{4}^{(t)},\\
    Z_i^{(t+1)}&=Z_i^{(t)}+H_{i-6}^{(t)}, &i&=11,12.
\end{align}
For this example case, the normalized upload cost is 
\begin{align}
    C_u^{(t)}=\frac{5\times 6}{12}=5/2,
\end{align}
which also achieves the converse bound $(N-|\mathcal{D}^{(t)}|)/(R_r-X^{(t)}-|\mathcal{D}^{(t)}|)=5/2$.

\subsection{The General Scheme}
First, let us define a set of notations that is needed in the construction of our achievability scheme as follows.
\begin{align}
\begin{array}{lll}
     G= N-R_r+1,& & \\
     \alpha_i= N-R_r+K_c+1-i,& \beta_i= N+1-i,& \forall i\in[G],\\
     \gamma_1=L/\alpha_1,& \gamma_i= L/(\alpha_{i-1}\alpha_i),& \forall i\in[2:G],\\
     \lambda_0= 0,& \lambda_i= L/\alpha_i,& \forall i\in[G].
\end{array}
\end{align}
Also, let us set the length of message $L={\rm lcm}(\alpha_1,\alpha_2,\cdots,\alpha_{G})$. Note that for all $i\in[G]$, $\gamma_1+\gamma_2+\cdots+\gamma_i=\lambda_{i}$.
Let $x_1,x_2,\cdots,x_N,f_1,f_2,\cdots,f_{N}$ be a total of $2N$ distinct elements from a finite field $\mathbb{F}_q, q\geq 2N$.

Before the general achievability proof, we first present the following algorithm $\FuncSty{SCGen}$ that outputs a matrix $\mathbf{M}$ given the input symbols $\mathbf{W}, \mathbf{Z}_1, \mathbf{Z}_2, \cdots, \mathbf{Z}_G$ as per a staircase structure. The generation of the staircase structure lies at the heart of our achievability scheme.

\begin{algorithm}
\DontPrintSemicolon
\caption{Generation of the Staircase Structure}\label{alg:mat}
\KwIn{$\mathbf{W}\in\mathbb{F}_q^{L},$ $\mathbf{Z}_i\in\mathbb{F}_q^{(R_r-K_c)\times\gamma_i},i\in[G]$}
\KwOut{$\mathbf{M}=\left[\mathbf{M}_1,\mathbf{M}_2,\cdots,\mathbf{M}_G\right]$}

\SetKwFunction{FMain}{SCGen}
    \SetKwProg{Fn}{Function}{:}{}
    \Fn{\FMain{$\mathbf{W}, (\mathbf{Z}_i)_{i\in[G]}$}}{
        \ForEach{$i\in[G]$}{
            \eIf{ $i=1$}{
                $\mathbf{M}_1([\alpha_1],:)\gets\FuncSty{Reshape}(\mathbf{W},\alpha_1,\gamma_1)$ \Comment{Reshape the vector into an $\alpha_1$ by $\gamma_1$ matrix}\;
                $\mathbf{M}_1([\alpha_1+1:N],:)\gets\mathbf{Z}_1$\;
            }{
                $\mathbf{D}_{i-1}\gets\FuncSty{Reshape}(\left[\mathbf{M}_1(R_r+i-1,:),\cdots,\mathbf{M}_j(R_r+i-j,:),\cdots,\mathbf{M}_{i-1}(R_r+1,:)\right],\alpha_i,\gamma_i)$\; 
                $\mathbf{M}_i([\alpha_i],:)\gets\mathbf{D}_{i-1}$\;
                $\mathbf{M}_i([\alpha_i+1:\beta_i],:)\gets\mathbf{Z}_{i}$\;
                $\mathbf{M}_i([\beta_i+1:N],:)\gets\mathbf{0}_{(N-\beta_i)\times \gamma_i}$\;
            }
        } 
        \textbf{return} $\mathbf{M}=\left[\mathbf{M}_1,\mathbf{M}_2,\cdots,\mathbf{M}_G\right]$\;
    }
    \textbf{End Function}
\end{algorithm}

The structure of $\mathbf{M}_1,\mathbf{M}_2,\cdots,\mathbf{M}_G$ in Algorithm \ref{alg:mat} is illustrated in Figure \ref{fig:Mi}, where for each $\mathbf{D}_i, i\in[G-1]$, the elements of $\mathbf{M}_1,\mathbf{M}_2,\cdots,\mathbf{M}_{i}$ that it replicates are shaded in the same color.

\begin{figure}[!h]
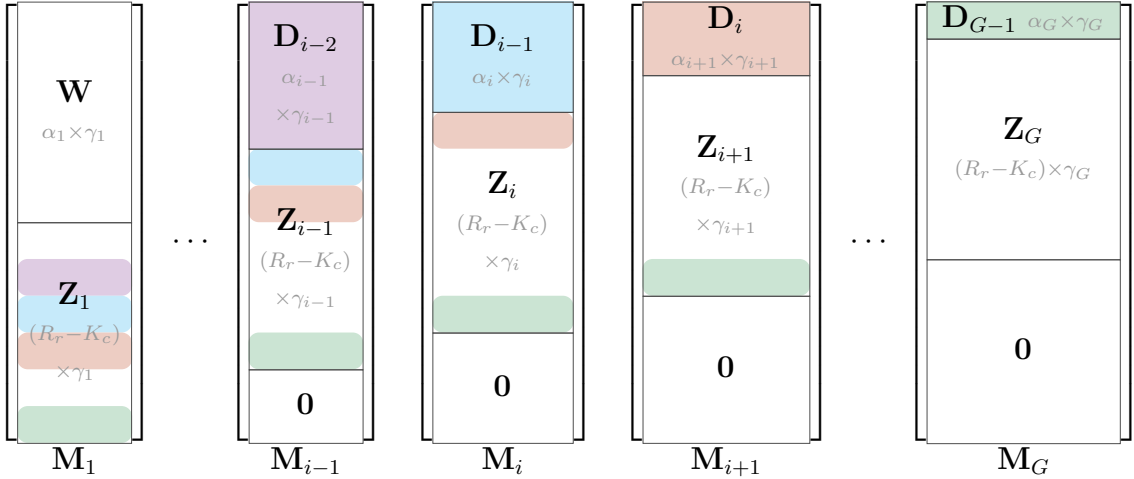

\centering
    \begin{align}
        \begin{array}{c}
        \left[\begin{NiceArray}{cccc}
        \Block[transparent, line-width=0.3pt, draw=black!70]{6-4}{\mathbf{W}\\ \textcolor{black!40}{\scriptstyle\alpha_1\times\gamma_1}}
        &&\textcolor{white}{\mathbf{00}}&\\
        &&&\\
        &&&\\
        &&&\\
        &&&\\
        &&&\\ 
        \Block[transparent, line-width=0.3pt, draw=black!70]{6-4}{\mathbf{Z}_1\\ \textcolor{black!40}{\scriptstyle (R_r-K_c)}\\ \textcolor{black!40}{\scriptstyle\times\gamma_1}}
        &&&\\
        \Block[transparent, fill=Plum!20,rounded-corners]{1-4}{}
        &&&\\
        \Block[transparent, fill=Cyan!20,rounded-corners]{1-4}{}
        &&&\\
        \Block[transparent, fill=Maroon!20,rounded-corners]{1-4}{}
        &&&\\
        &&&\\
        \Block[transparent, fill=ForestGreen!20,rounded-corners]{1-4}{}
        &&&
        \end{NiceArray}\right]\\
        \mathbf{M}_1
        \end{array}\cdots
        \begin{array}{c}
        \left[\begin{NiceArray}{cccc}
        \Block[transparent, fill=Plum!20, draw=black!70]{4-4}{\mathbf{D}_{i-2}\\ \textcolor{black!40}{\scriptstyle\alpha_{i-1}}\\ \textcolor{black!40}{\scriptstyle\times\gamma_{i-1}}}
        &&&\\
        &&&\\
        &&&\\
        &&&\\
        \Block[transparent, line-width=0.3pt, draw=black!70]{6-4}{\mathbf{Z}_{i-1}\\ \textcolor{black!40}{\scriptstyle (R_r-K_c)}\\ \textcolor{black!40}{\scriptstyle\times\gamma_{i-1}}}
        \Block[transparent, fill=Cyan!20,rounded-corners]{1-4}{}
        &&&\\
        \Block[transparent, fill=Maroon!20,rounded-corners]{1-4}{}
        &&&\\
        &&&\\
        &&&\\
        &&&\\
        \Block[transparent, fill=ForestGreen!20,rounded-corners]{1-4}{}
        &&&\\
        \Block[transparent, line-width=0.3pt, draw=black!70]{2-4}{\mathbf{0}}
        &&&\\
        \textcolor{white}{\mathbf{00}}&&&
        \end{NiceArray}\right]\\
        \mathbf{M}_{i-1}
        \end{array}
        \begin{array}{c}
        \left[\begin{NiceArray}{ccccc}
        \Block[transparent, fill=Cyan!20, draw=black!70]{3-5}{\mathbf{D}_{i-1}\\ \textcolor{black!40}{\scriptstyle\alpha_{i}\times\gamma_{i}}}
        &&&&\\
        &&&&\\
        &&&&\\
        \Block[transparent, line-width=0.3pt, draw=black!70]{6-5}{\mathbf{Z}_{i}\\ \textcolor{black!40}{\scriptstyle (R_r-K_c)}\\ \textcolor{black!40}{\scriptstyle\times\gamma_{i}}}
        \Block[transparent, fill=Maroon!20,rounded-corners]{1-5}{}
        &&&&\\
        &&&&\\
        &&&&\\
        &&&&\\
        &&&&\\
        \Block[transparent, fill=ForestGreen!20,rounded-corners]{1-5}{}
        &&&&\\
        \Block[transparent, line-width=0.3pt, draw=black!70]{3-5}{\mathbf{0}}
        &&&&\\
        &&&&\\
        \textcolor{white}{\mathbf{00}}&&&&
        \end{NiceArray}\right]\\
        \mathbf{M}_{i}
        \end{array}
        \begin{array}{c}
        \left[\begin{NiceArray}{cccccc}
        \Block[transparent, fill=Maroon!20, draw=black!70]{2-6}{\mathbf{D}_{i}\\ \textcolor{black!40}{\scriptstyle\alpha_{i+1}\times\gamma_{i+1}}}
        &&&&&\\
        &&&&&\\
        \Block[transparent, line-width=0.3pt, draw=black!70]{6-6}{\mathbf{Z}_{i+1}\\ \textcolor{black!40}{\scriptstyle (R_r-K_c)}\\ \textcolor{black!40}{\scriptstyle\times\gamma_{i+1}}}
        &&&&&\\
        &&&&&\\
        &&&&&\\
        &&&&&\\
        &&&&&\\
        \Block[transparent, fill=ForestGreen!20,rounded-corners]{1-6}{}
        &&&&&\\
        \Block[transparent, line-width=0.3pt, draw=black!70]{4-6}{\mathbf{0}}
        &&&&&\\
        &&&&&\\
        &&&&&\\
        \textcolor{white}{\mathbf{00}}&&&&&
        \end{NiceArray}\right]\\
        \mathbf{M}_{i+1}
        \end{array}\cdots
        \begin{array}{c}
        \left[\begin{NiceArray}{ccccccc}
        \Block[transparent, fill=ForestGreen!20, draw=black!70]{1-7}{\mathbf{D}_{G-1}~ \textcolor{black!40}{\scriptstyle \alpha_{G}\times\gamma_{G}}}
        &&&&&&\\
        \Block[transparent, line-width=0.3pt, draw=black!70]{6-7}{\mathbf{Z}_{G}\\ \textcolor{black!40}{\scriptstyle (R_r-K_c)} \textcolor{black!40}{\scriptstyle\times\gamma_{G}}}
        &&&&&&\\
        &&&&&&\\
        &&&&&&\\
        &&&&&&\\
        &&&&&&\\
        &&&&&&\\
         \Block[transparent, line-width=0.3pt, draw=black!70]{5-7}{\mathbf{0}}
        &&&&&&\\
        &&&&&&\\
        &&&&&&\\
        &&&&&&\\
        \textcolor{white}{\mathbf{00}}&&&&&&
        \end{NiceArray}\right]\\
        \mathbf{M}_{G}
        \end{array}\notag
    \end{align}
    \caption{The structure of $\mathbf{M}_1,\mathbf{M}_2,\cdots,\mathbf{M}_G$ as per Algorithm \ref{alg:mat}. For each $\mathbf{D}_i, i=1,2,\cdots,G-1$, it replicates the elements of $\mathbf{M}_1,\mathbf{M}_2,\cdots,\mathbf{M}_{i}$, shaded in the same color. Note that the block $\mathbf{W}$ illustrated in $\mathbf{M}_1$ is the reshaped version of the message vector $\mathbf{W}$.}
    \label{fig:Mi}
\end{figure}

According to the algorithm and Figure \ref{fig:Mi}, the following propositions are straightforward.
\begin{proposition}\label{prop:copy}
    For all $i\in[G],j\in[R_r+1:\beta_i-1],k\in[j+1:\beta_i]$, there exist a total of $(k-j+1)\gamma_i$ elements in the matrices $(\mathbf{M}_{i'})_{i'\in[i+j-R_r:i+k-R_r]}$ that replicate all elements in the matrix $\mathbf{M}_{i}([j:k],:)$.
\end{proposition}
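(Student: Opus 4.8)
The plan is a direct index-chasing argument driven entirely by the definition of the blocks $\mathbf{D}_{i-1}$ in Algorithm~\ref{alg:mat}: there is essentially no analytic content, so I will just organize the bookkeeping carefully. First I would isolate the single structural fact that does all the work. The line $\mathbf{D}_{i'-1}\gets\FuncSty{Reshape}\big([\mathbf{M}_1(R_r+i'-1,:),\cdots,\mathbf{M}_{j}(R_r+i'-j,:),\cdots,\mathbf{M}_{i'-1}(R_r+1,:)],\alpha_{i'},\gamma_{i'}\big)$, followed by $\mathbf{M}_{i'}([\alpha_{i'}],:)\gets\mathbf{D}_{i'-1}$, shows that for every $i'\in[2:G]$ and every $j\in[i'-1]$, all $\gamma_{j}$ entries of the row $\mathbf{M}_{j}(R_r+i'-j,:)$ occur inside $\mathbf{M}_{i'}$ (up to a reshape, which is a bijection on entries). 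Re-parametrizing with the first index held fixed and $m:=i'-j$, this reads: \emph{for every $m\ge 1$ with $j+m\le G$, all $\gamma_j$ entries of $\mathbf{M}_j(R_r+m,:)$ are replicated in $\mathbf{M}_{j+m}$}. Figure~\ref{fig:Mi} makes this visible as the color-shaded replication; this single statement is everything needed.

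Next I would apply it. Fix $i\in[G]$, $j\in[R_r+1:\beta_i-1]$, $k\in[j+1:\beta_i]$, and instantiate the structural fact with its ``$j$'' set equal to $i$. For each row index $\ell\in[j:k]$ of $\mathbf{M}_i$, write $\ell=R_r+m$ with $m=\ell-R_r$. From $\ell\ge j\ge R_r+1$ one gets $m\ge 1$, and from $\ell\le k\le\beta_i=N+1-i$ one gets $i+m=i+\ell-R_r\le N+1-R_r=G$; hence the structural fact applies and all $\gamma_i$ entries of $\mathbf{M}_i([\ell],:)$ are replicated inside $\mathbf{M}_{i+\ell-R_r}$. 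As $\ell$ ranges over $[j:k]$, the target index $i+\ell-R_r$ ranges over exactly $[i+j-R_r:i+k-R_r]$, and since $i+j-R_r\ge i+1\ge 2$ every target index lies in $[2:G]$, so the $\mathbf{D}$-based branch of the algorithm (rather than the $i'=1$ base case) indeed governs each of them. Therefore all entries of $\mathbf{M}_i([j:k],:)$ are replicated among the matrices $(\mathbf{M}_{i'})_{i'\in[i+j-R_r:i+k-R_r]}$.

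Finally, the count. Distinct $\ell\in[j:k]$ produce distinct target matrices (within each $\mathbf{M}_{i'}$ the block $\mathbf{D}_{i'-1}$ receives exactly one row from $\mathbf{M}_i$), so the replicas occupy $k-j+1$ disjoint groups of $\gamma_i$ entries each; since $\mathbf{M}_i([j:k],:)$ itself has $(k-j+1)\gamma_i$ entries, this accounts for all of them and gives the asserted total of $(k-j+1)\gamma_i$ elements. I do not anticipate a genuine obstacle here — the statement is pure bookkeeping — but the one place that demands care is the two nested index shifts: the local recursion index inside $\FuncSty{SCGen}$ versus the fixed outer index $i$, and the further shift $m=\ell-R_r$ from ``absolute row of $\mathbf{M}_i$'' to ``which later matrix that row lands in''. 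Accordingly I would verify the endpoints $\ell=j$ and $\ell=k$ explicitly and check the two range conditions $m\ge 1$ and $i+m\le G$ against the hypotheses $j\ge R_r+1$ and $k\le\beta_i$, which is precisely what forces every target index into $[2:G]$.
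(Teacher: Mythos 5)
Your argument is correct and is exactly the index-chasing the paper has in mind: the paper offers no explicit proof, asserting Proposition~\ref{prop:copy} as ``straightforward'' from Algorithm~\ref{alg:mat} and Figure~\ref{fig:Mi}, and your write-up simply makes that bookkeeping precise via the row $\mathbf{M}_i(R_r+m,:)\mapsto\mathbf{M}_{i+m}$ replication inside $\mathbf{D}_{i+m-1}$ together with the endpoint checks $m\ge 1$ and $i+m\le G$. The verification of the range conditions against $j\ge R_r+1$ and $k\le\beta_i$, and the observation that each target matrix receives exactly one row of $\mathbf{M}_i$ (giving the exact count $(k-j+1)\gamma_i$), are all accurate.
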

\begin{proposition}\label{prop:ssim}
    For all $i\in[G-R_r+K_c:G]$, if $\mathbf{M}=\FuncSty{SCGen}(\mathbf{W}, (\mathbf{Z}_{i'})_{i'\in[G]})$, where
    \begin{align}
        &\mathbf{Z}_{i'}=\begin{bmatrix}
            \dot{\mathbf{Z}}_{i'}\\
            \mathbf{0}_{(G-i)\times \gamma_{i'}}
        \end{bmatrix},~{i'}\in[i],\\
        &\mathbf{Z}_{i'}=\mathbf{0}_{(R_r-K_c)\times \gamma_{i'}},~{i'}\in[i+1:G],
    \end{align}
    and for all ${i'}\in[i]$, $\dot{\mathbf{Z}}_{i'}\in\mathbb{F}_q^{(R_r-K_c-G+i)\times\gamma_{i'}}$, then we have $\mathbf{M}(:,[\lambda_{i}+1:\lambda_{G}])=\mathbf{0}_{N\times(\lambda_G-\lambda_i)}$.
\end{proposition}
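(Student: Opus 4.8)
The plan is to first rephrase the statement as a claim about the individual blocks. Since $\mathbf{M}=[\mathbf{M}_1,\dots,\mathbf{M}_G]$ and each $\mathbf{M}_k$ has $\gamma_k$ columns, block $\mathbf{M}_k$ occupies columns $[\lambda_{k-1}+1:\lambda_k]$ of $\mathbf{M}$ (using $\lambda_k=\gamma_1+\cdots+\gamma_k$), so the desired identity $\mathbf{M}(:,[\lambda_i+1:\lambda_G])=\mathbf{0}_{N\times(\lambda_G-\lambda_i)}$ is nothing but $\mathbf{M}_{i+1}=\mathbf{M}_{i+2}=\cdots=\mathbf{M}_G=\mathbf{0}$. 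I would prove this by forward induction on the block index $k$ from $i+1$ up to $G$, reading off from Algorithm~\ref{alg:mat} (see also Figure~\ref{fig:Mi}) that, for $k\ge 2$, $\mathbf{M}_k$ consists of the ``carry'' block $\mathbf{D}_{k-1}$ in rows $[1:\alpha_k]$, the noise block $\mathbf{Z}_k$ in rows $[\alpha_k+1:\beta_k]$, and an explicit zero block in rows $[\beta_k+1:N]$, where $\mathbf{D}_{k-1}$ is a reshaping of the rows $\mathbf{M}_j(R_r+k-j,:)$, $j\in[k-1]$.

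The technical heart will be a preliminary observation about the input blocks $\mathbf{M}_1,\dots,\mathbf{M}_i$ under the prescribed noise pattern. For $j\in[i]$, the block $\mathbf{Z}_j$ has $R_r-K_c=\beta_j-\alpha_j$ rows and sits in rows $[\alpha_j+1:\beta_j]$ of $\mathbf{M}_j$; by hypothesis its bottom $G-i$ rows are zero, and since the hypothesis $i\in[G-R_r+K_c:G]$ gives $G-i\le R_r-K_c$, that bottom block is well-defined and sits inside the noise region (it does not overlap the data rows $[1:\alpha_j]$). Combining it with the zero block in rows $[\beta_j+1:N]$ (empty when $j=1$), and substituting $\beta_j=N+1-j$ and $G=N-R_r+1$, I would conclude that rows $[R_r+i+1-j:N]$ of $\mathbf{M}_j$ vanish, for every $j\in[i]$. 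This is exactly the point at which the constraint on $i$ is consumed, and I will flag it explicitly.

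Given this, the induction is brief. Base case $k=i+1$: the rows making up $\mathbf{D}_i$ are $\mathbf{M}_j(R_r+i+1-j,:)$ for $j\in[i]$, each of which lies in the vanishing tail just described, so $\mathbf{D}_i=\mathbf{0}$; combined with $\mathbf{Z}_{i+1}=\mathbf{0}$ (as $i+1\in[i+1:G]$) and the zero block, this yields $\mathbf{M}_{i+1}=\mathbf{0}$. Inductive step: assuming $\mathbf{M}_{i+1}=\cdots=\mathbf{M}_k=\mathbf{0}$ for some $k$ with $i+1\le k\le G-1$, I would look at the rows forming $\mathbf{D}_k$, namely $\mathbf{M}_j(R_r+k+1-j,:)$ for $j\in[k]$. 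For $j\in[i]$, since $k+1\ge i+1$ we have $R_r+k+1-j\ge R_r+i+1-j$, so this row lies in the vanishing tail of $\mathbf{M}_j$, and $R_r+k+1-j\le R_r+k\le N$ because $k\le G-1$, so it is a legitimate row index. For $j\in[i+1:k]$ the entire block $\mathbf{M}_j$ is zero by the induction hypothesis. Hence $\mathbf{D}_k=\mathbf{0}$, and together with $\mathbf{Z}_{k+1}=\mathbf{0}$ and the zero block this gives $\mathbf{M}_{k+1}=\mathbf{0}$, closing the induction. (The case $i=G$ is vacuous, since then the range $[\lambda_i+1:\lambda_G]$ is empty.)

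I expect the only real friction to be the index bookkeeping: tracking which row of which $\mathbf{M}_j$ is folded into each $\mathbf{D}$, checking that every such index stays in $[1:N]$, and pinning down that the single inequality $G-i\le R_r-K_c$, i.e.\ precisely $i\ge G-R_r+K_c$, is exactly what keeps the zero tail of $\mathbf{M}_j$ from starting above its first noise row $\alpha_j+1$, so that every row the carry blocks read off is already forced to zero. None of this is conceptually deep, but an off-by-one would break the argument, so I would write out the substitutions $\beta_j=N+1-j$ and $G=N-R_r+1$ with care.
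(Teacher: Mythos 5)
Your proof is correct and is essentially the verification the paper leaves implicit: the paper states Proposition~\ref{prop:ssim} without proof (declaring it straightforward from Algorithm~\ref{alg:mat} and Figure~\ref{fig:Mi}), and your forward induction on the block index, with the key computation that the zero tail of $\mathbf{Z}_j$ occupies rows $[R_r+i+1-j:\beta_j]$ of $\mathbf{M}_j$ (valid precisely because $i\geq G-R_r+K_c$, i.e., $G-i\leq R_r-K_c$), so that every row fed into $\mathbf{D}_i,\mathbf{D}_{i+1},\dots,\mathbf{D}_{G-1}$ is already zero, is exactly the structural argument the paper relies on.
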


\begin{proposition}\label{prop:add}
    If $\mathbf{M}=\FuncSty{SCGen}(\mathbf{W}, (\mathbf{Z}_i)_{i\in[G]})$ and $\mathbf{M}'=\FuncSty{SCGen}(\mathbf{W}', (\mathbf{Z}'_i)_{i\in[G]})$, then $\mathbf{M}+\mathbf{M}'=\FuncSty{SCGen}(\mathbf{W}+\mathbf{W}', (\mathbf{Z}_i+\mathbf{Z}'_i)_{i\in[G]})$.
\end{proposition}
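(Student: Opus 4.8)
The plan is to establish the claim by strong induction on the block index $i\in[G]$: I will show that the $i$-th block output by the run $\FuncSty{SCGen}(\mathbf{W}+\mathbf{W}',(\mathbf{Z}_j+\mathbf{Z}'_j)_{j\in[G]})$ is exactly $\mathbf{M}_i+\mathbf{M}'_i$, where $\mathbf{M}_i$ and $\mathbf{M}'_i$ denote the $i$-th blocks of the runs on $(\mathbf{W},(\mathbf{Z}_j)_{j})$ and $(\mathbf{W}',(\mathbf{Z}'_j)_{j})$ respectively. The conceptual point is that every primitive appearing in Algorithm~\ref{alg:mat}---the reshaping map $\FuncSty{Reshape}(\cdot,a,b)$ for fixed target dimensions $a,b$, selection of a fixed set of rows, horizontal concatenation of rows, vertical stacking of blocks, and zero-padding---is an $\mathbb{F}_q$-linear operation whose behaviour is determined solely by the fixed system parameters $N$, $\alpha_i$, $\beta_i$, $\gamma_i$, and not by the numerical values of the symbols. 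Hence all three runs traverse the same control flow and assign the same shapes, and the only thing to verify is that the contents add.

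For the base case $i=1$, in each run $\mathbf{M}_1([\alpha_1],:)$ is set to $\FuncSty{Reshape}(\cdot,\alpha_1,\gamma_1)$ of the first argument and $\mathbf{M}_1([\alpha_1+1:N],:)$ is set to the first noise block. Since $\FuncSty{Reshape}$ with fixed dimensions is just a fixed reindexing of entries, it is additive, so $\FuncSty{Reshape}(\mathbf{W}+\mathbf{W}',\alpha_1,\gamma_1)=\FuncSty{Reshape}(\mathbf{W},\alpha_1,\gamma_1)+\FuncSty{Reshape}(\mathbf{W}',\alpha_1,\gamma_1)$, and since vertical stacking is additive, the block for the summed input equals $\mathbf{M}_1+\mathbf{M}'_1$.

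For the inductive step, fix $i\geq 2$ and assume the claim for all $j\in[i-1]$. In the run on the summed input, the intermediate matrix built at step $i$ is
\[
\mathbf{D}_{i-1}=\FuncSty{Reshape}\big(\big[(\mathbf{M}_1{+}\mathbf{M}'_1)(R_r{+}i{-}1,:),\ldots,(\mathbf{M}_{i-1}{+}\mathbf{M}'_{i-1})(R_r{+}1,:)\big],\alpha_i,\gamma_i\big),
\]
where the induction hypothesis has already been substituted for the blocks $\mathbf{M}_1,\ldots,\mathbf{M}_{i-1}$---all of strictly smaller index, which is precisely why the induction on $i$ goes through. By additivity of row selection, concatenation and $\FuncSty{Reshape}$, this equals $\mathbf{D}_{i-1}+\mathbf{D}'_{i-1}$, the intermediates of the two individual runs. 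The block $\mathbf{M}_i$ of the summed run then has top $\alpha_i$ rows $\mathbf{D}_{i-1}+\mathbf{D}'_{i-1}$, middle $R_r-K_c$ rows $\mathbf{Z}_i+\mathbf{Z}'_i$, and bottom $N-\beta_i$ rows $\mathbf{0}=\mathbf{0}+\mathbf{0}$, hence equals $\mathbf{M}_i+\mathbf{M}'_i$. This closes the induction, and concatenating the blocks yields $\FuncSty{SCGen}(\mathbf{W}+\mathbf{W}',(\mathbf{Z}_i+\mathbf{Z}'_i)_{i\in[G]})=[\mathbf{M}_1+\mathbf{M}'_1,\ldots,\mathbf{M}_G+\mathbf{M}'_G]=\mathbf{M}+\mathbf{M}'$.

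There is no real obstacle here; the statement is essentially the observation that $\FuncSty{SCGen}$ is an $\mathbb{F}_q$-linear map. The only thing one must be slightly careful about is confirming that the three runs are structurally identical---that the dimensions $\alpha_i,\beta_i,\gamma_i$ and all index ranges in every assignment are functions of the fixed parameters alone---and that $\mathbf{D}_{i-1}$ reads only rows of the already-constructed blocks $\mathbf{M}_1,\ldots,\mathbf{M}_{i-1}$, so that the inductive hypothesis is applicable when it is invoked.
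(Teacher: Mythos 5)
Your induction on the block index is correct and complete: additivity of $\FuncSty{Reshape}$, row selection, and stacking, together with the fact that $\mathbf{D}_{i-1}$ depends only on the previously constructed blocks, is exactly the content of the claim. The paper offers no explicit proof (it declares Propositions \ref{prop:copy}--\ref{prop:add} straightforward from Algorithm \ref{alg:mat}), and your argument is precisely the linearity-of-$\FuncSty{SCGen}$ reasoning the paper implicitly relies on, just written out carefully.
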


\subsubsection{Construction of the Storage}
Recall that the $N$ distributed servers must form an $(K_c, R_r, N)$-secure storage at any time slot $t, t\in\mathbb{N}$. To this end, let the message $\mathbf{W}^{(t)}=[W_1^{(t)},W_2^{(t)},\cdots,W_L^{(t)}]$ consist of $L$ symbols from the finite field $\mathbb{F}_q$. Let us also define the Cauchy matrix generated by $x_1,x_2,\cdots,x_N,f_1,f_2,\cdots,f_{N}$ as follows
\begin{align}\label{eq:ccmat}
\mathbf{C}=
\begin{bmatrix}
\frac{1}{x_1-f_1}&\frac{1}{x_1-f_2}&\cdots&\frac{1}{x_1-f_{N}}\\
\frac{1}{x_2-f_1}&\frac{1}{x_2-f_2}&\cdots&\frac{1}{x_2-f_{N}}\\
\vdots&\vdots&\cdots&\vdots\\
\frac{1}{x_N-f_1}&\frac{1}{x_N-f_2}&\cdots&\frac{1}{x_N-f_{N}}
\end{bmatrix}.
\end{align}
Let $\mathbf{M}^{(t)}=\left[\mathbf{M}^{(t)}_1,\mathbf{M}^{(t)}_2,\cdots,\mathbf{M}^{(t)}_G\right]=\FuncSty{SCGen}(\mathbf{W}^{(t)}, (\mathbf{Z}^{(t)}_i)_{i\in[G]})$, 
where for all $i\in[G]$, $\mathbf{M}^{(t)}_i\in\mathbb{F}_q^{N\times \gamma_i}$, and $\mathbf{Z}^{(t)}_i\in\mathbb{F}_q^{(R_r-K_c)\times\gamma_i}$. Besides, for $t=0$, let $(\mathbf{Z}^{(0)}_i)_{i\in[G]}$ be uniformly i.i.d. over $\mathbb{F}_q^{(R_r-K_c)\times\gamma_1\times\gamma_2\times\cdots\times\gamma_G}$, independent of the (initial) message $\mathbf{W}^{(0)}$. Then for our RDCDS scheme, at any time slot $t\in\mathbb{N}$, the storage at Server $n$, denoted as $\mathbf{S}_n^{(t)}$, is
\begin{align}\label{eq:nstor}
    \mathbf{S}_n^{(t)} = \left(\mathbf{C}\mathbf{M}^{(t)}\right)(n,:),
\end{align}
i.e., the $n^{th}$ row of $\mathbf{C}\mathbf{M}^{(t)}$. Similarly, the storage at time slot $t=0$, i.e., $\mathbf{S}_n^{(0)}, n\in[N]$, is initialized {\it a priori} by, e.g., the global coordinator, who is responsible for the generation of the initial message $\mathbf{W}^{(0)}$, the initial noise symbols $(\mathbf{Z}^{(0)}_i)_{i\in[G]}$ and the initial storage at the $N$ servers according to the construction. Note that each server stores a total of $\lambda_G=L/\alpha_G=L/K_c$ $q$-ary symbols, so the normalized storage cost is $H(\mathbf{S}_n^{(t)})/L=1/K_c$, which meets the storage cost constraint \eqref{eq:minstor}.

\subsubsection{Execution of the Read Operation}
Assume that the user wishes to execute the read operation at time slot $t$.  
Let $\overline{\mathcal{D}}^{(t)}=[N]\setminus\mathcal{D}^{(t)}$ and $J^{(t)}=N+1-|\overline{\mathcal{D}}^{(t)}|$. Recall that according to the definition of $\beta_i$'s, we have $\beta_{J^{(t)}}=|\overline{\mathcal{D}}^{(t)}|$.
To recover the message $\mathbf{W}^{(t)}$, the user downloads the following symbols from available servers.
\begin{align}
    \mathbf{A}_n^{(t)}=\mathbf{S}_n^{(t)}([\lambda_{J^{(t)}}]), n\in\overline{\mathcal{D}}^{(t)}.
\end{align}

At this point, the user is able to recover the message from the downloads via a successive interference cancellation decoding strategy, as depicted below. Note that according to the construction of the storage, the downloaded symbols can be written in the following matrix form
\begin{align}
\begin{bmatrix}
\mathbf{A}^{(t)}_{\overline{\mathcal{D}}^{(t)}(1)}\\
\mathbf{A}^{(t)}_{\overline{\mathcal{D}}^{(t)}(2)}\\
\vdots\\
\mathbf{A}^{(t)}_{\overline{\mathcal{D}}^{(t)}(|\overline{\mathcal{D}}^{(t)}|)}
\end{bmatrix}
=&\left(\mathbf{C}\mathbf{M}^{(t)}\right)(\overline{\mathcal{D}}^{(t)},[\lambda_{J^{(t)}}])\\
=&\mathbf{C}(\overline{\mathcal{D}}^{(t)},:)\mathbf{M}^{(t)}(:,[\lambda_{J^{(t)}}])\\
=&\mathbf{C}(\overline{\mathcal{D}}^{(t)},:)\left[\mathbf{M}^{(t)}_1,\mathbf{M}^{(t)}_2,\cdots,\mathbf{M}^{(t)}_{J^{(t)}}\right]
\end{align}

Since for all $i\in[J^{(t)}]$, $\mathbf{M}^{(t)}_i([\beta_i+1:N],:)$ are zeros according to Algorithm \ref{alg:mat}, for all $i\in[J^{(t)}]$, we have
\begin{align}
\begin{bmatrix}
\mathbf{A}^{(t)}_{\overline{\mathcal{D}}^{(t)}(1)}([\lambda_{i-1}+1:\lambda_{i}])\\
\mathbf{A}^{(t)}_{\overline{\mathcal{D}}^{(t)}(2)}([\lambda_{i-1}+1:\lambda_{i}])\\
\vdots\\
\mathbf{A}^{(t)}_{\overline{\mathcal{D}}^{(t)}(|\overline{\mathcal{D}}^{(t)}|)}([\lambda_{i-1}+1:\lambda_{i}])
\end{bmatrix}=
    \mathbf{C}(\overline{\mathcal{D}}^{(t)},[\beta_{i}])\mathbf{M}^{(t)}_i([\beta_{i}],:)
\end{align}

Because the constants $x_1,x_2,\cdots,x_N,f_1,f_2,\cdots,f_{N}$ are distinct, all square submatrices of \eqref{eq:ccmat} are of full rank. Thus $\mathbf{M}^{(t)}_{J^{(t)}}([\beta_{J^{(t)}}],:)$ is resolvable by inverting the Cauchy matrix on the RHS of \eqref{eq:decMJ}.
\begin{align}\label{eq:decMJ}
\begin{bmatrix}
\mathbf{A}^{(t)}_{\overline{\mathcal{D}}^{(t)}(1)}([\lambda_{J^{(t)}-1}+1:\lambda_{J^{(t)}}])\\
\mathbf{A}^{(t)}_{\overline{\mathcal{D}}^{(t)}(2)}([\lambda_{J^{(t)}-1}+1:\lambda_{J^{(t)}}])\\
\vdots\\
\mathbf{A}^{(t)}_{\overline{\mathcal{D}}^{(t)}(|\overline{\mathcal{D}}^{(t)}|)}([\lambda_{J^{(t)}-1}+1:\lambda_{J^{(t)}}])
\end{bmatrix}=
    \mathbf{C}(\overline{\mathcal{D}}^{(t)},[\beta_{J^{(t)}}])\mathbf{M}^{(t)}_{J^{(t)}}([\beta_{J^{(t)}}],:).
\end{align}

Then, according to Proposition \ref{prop:copy}, for all $i\in\{J^{(t)}-1,J^{(t)}-2,\cdots,1\}$, if $(\mathbf{M}^{(t)}_{i'})_{i'\in[i+1:J^{(t)}]}$ are decoded, then the elements of $\mathbf{M}^{(t)}_i([R_r+1:R_r+J^{(t)}-i],:)$ are also recovered. Recall that $|[\beta_i]\setminus[R_r+1:R_r+J^{(t)}-i]|=\beta_i-(J^{(t)}-i)=|\overline{\mathcal{D}}^{(t)}|$, therefore $\mathbf{M}^{(t)}_i([\beta_i]\setminus[R_r+1:R_r+J^{(t)}-i],:)$ can be decoded by subtracting the recovered symbols and inverting the square Cauchy matrix on the RHS of \eqref{eq:decMi} as follows.
\begin{align}\label{eq:decMi}
&\begin{bmatrix}
\mathbf{A}^{(t)}_{\overline{\mathcal{D}}^{(t)}(1)}([\lambda_{i-1}+1:\lambda_{i}])\notag\\
\mathbf{A}^{(t)}_{\overline{\mathcal{D}}^{(t)}(2)}([\lambda_{i-1}+1:\lambda_{i}])\\
\vdots\\
\mathbf{A}^{(t)}_{\overline{\mathcal{D}}^{(t)}(|\overline{\mathcal{D}}^{(t)}|)}([\lambda_{i-1}+1:\lambda_{i}])
\end{bmatrix}\notag\\
&-\mathbf{C}(\overline{\mathcal{D}}^{(t)},[R_r+1:R_r+J^{(t)}-i])\mathbf{M}^{(t)}_i([R_r+1:R_r+J^{(t)}-i],:)\notag\\
    =&\mathbf{C}(\overline{\mathcal{D}}^{(t)},[\beta_i]\setminus[R_r+1:R_r+J^{(t)}-i])
    \mathbf{M}^{(t)}_i([\beta_i]\setminus[R_r+1:R_r+J^{(t)}-i],:).
\end{align}
In other words, $\mathbf{M}_i^{(t)}$ is recovered at this point. Now it is evident to see that $\mathbf{M}_{J^{(t)}}^{(t)}, \mathbf{M}_{J^{(t)}-1}^{(t)},\cdots, \mathbf{M}_{1}^{(t)}$ are recoverable by induction on $i$, from which the message $\mathbf{W}^{(t)}$ is decodable. Finally, since a total of $L$ symbols of the desired message are recovered from a total of $|\overline{\mathcal{D}}^{(t)}|\times \lambda_{J^{(t)}}$ downloaded symbols, the normalized download cost is 
\begin{align}
    C^{(t)}_r=\frac{|\overline{\mathcal{D}}^{(t)}|\lambda_{J^{(t)}}}{L}
    =\frac{N-|\mathcal{D}^{(t)}|}{N-R_r+K_c-|\mathcal{D}^{(t)}|},
\end{align}
which matches the converse bound.

\subsubsection{Execution of the Update Operation}
Let us assume that the user wishes to update the message to the distributed servers with the generated increment $\mathbf{\Delta}^{(t)}=[\Delta^{(t)}_1,\Delta^{(t)}_2,\cdots,\Delta^{(t)}_L]$ at time slot $t$. To this end, let us define $G^{(t)}=N-2R_r+K_c+X^{(t)}+|\mathcal{D}^{(t)}|+1$, and thus according to the definition of $\alpha_i$'s, we have $\alpha_{G^{(t)}}+X^{(t)}+|\mathcal{D}^{(t)}|=R_r$. Let $\dot{\mathbf{M}}^{(t)}=\left[\dot{\mathbf{M}}^{(t)}_1,\cdots,\dot{\mathbf{M}}^{(t)}_{G}\right]=\FuncSty{SCGen}(\bm{\Delta}^{(t)}, (\dot{\mathbf{Z}}^{(t)}_i)_{i\in[G]})$, where for all $i\in[G]$, $\dot{\mathbf{M}}^{(t)}_i\in\mathbb{F}_q^{N\times\gamma_i}$. Besides, $\dot{\mathbf{Z}}^{(t)}_1, \dot{\mathbf{Z}}^{(t)}_2, \cdots, \dot{\mathbf{Z}}^{(t)}_G$ are defined as follows
\begin{align}
\dot{\mathbf{Z}}^{(t)}_i&=
    \begin{bmatrix}
        \ddot{\mathbf{Z}}^{(t)}_{i}\\
        \mathbf{H}^{(t)}_i\\
        \mathbf{0}_{(R_r-K_c-X^{(t)}-|\mathcal{D}^{(t)}|)\times\gamma_i}
    \end{bmatrix},~\forall i\in[G^{(t)}]\label{eq:dzt1}\\
    \dot{\mathbf{Z}}^{(t)}_i&=\mathbf{0}_{(R_r-K_c)\times\gamma_i},~\forall i\in[G^{(t)}+1:G],\label{eq:dzt2}
\end{align}
where $(\ddot{\mathbf{Z}}^{(t)}_i)_{i\in[G^{(t)}]}$ are uniformly i.i.d. over $\mathbb{F}_q^{X^{(t)}\times\gamma_1\times\gamma_2\times\cdots\times \gamma_{G^{(t)}}}$ and independent of the increment $\boldsymbol{\Delta}^{(t)}$, and $\mathbf{H}^{(t)}_{1}, \mathbf{H}^{(t)}_{2}, \cdots, \mathbf{H}^{(t)}_{G^{(t)}}$ are to be constructed such that $\left(\mathbf{C}\dot{\mathbf{M}}^{(t)}\right)(n,:)=\mathbf{0}_{1\times \lambda_G}$ for all $n\in\mathcal{D}^{(t)}$. The proof of the existence of such $\mathbf{H}^{(t)}_{1}, \mathbf{H}^{(t)}_{2}, \cdots, \mathbf{H}^{(t)}_{G^{(t)}}$ is deferred to the end of this subsection. Now we are ready to construct the coded increment $\mathbf{Q}^{(t)}_n, n\in[N]\setminus\mathcal{D}^{(t)}$ as follows
\begin{align}
    \mathbf{Q}^{(t)}_n &= \left(\mathbf{C}\dot{\mathbf{M}}^{(t)}\right)(n,:),
\end{align}
i.e.,  the $n^{th}$ row of $\mathbf{C}\dot{\mathbf{M}}^{(t)}$. Note that according to \eqref{eq:dzt1}, \eqref{eq:dzt2} and Proposition \ref{prop:ssim}, for all $n\in[N]$, $(\mathbf{C}\dot{\mathbf{M}}^{(t)})(n,[\lambda_{G^{(t)}}+1:\lambda_G])=\mathbf{0}_{1\times (\lambda_G-\lambda_{G^{(t)}})}$. Besides, since $|\mathcal{D}^{(t)}|$ is regarded as a constant during the time slot $t$, it suffices to upload the first $\lambda_{G^{(t)}}$ elements of $\mathbf{Q}^{(t)}_n$ to each of the available servers, and the normalized upload cost can be calculated as follows,
\begin{align}
C^{(t)}_u=\frac{(N-|\mathcal{D}^{(t)}|) \lambda_{G^{(t)}}}{L}=\frac{N-|\mathcal{D}^{(t)}|}{R_r-X^{(t)}-|\mathcal{D}^{(t)}|},
\end{align}
which matches the converse bound. $X^{(t)}$-security follows from the fact that the information symbols are protected by the $\text{MDS}(N,X^{(t)})$-coded noise symbols.

Upon receiving the coded increment, Server $n, n\in[N]\setminus\mathcal{D}^{(t)}$ updates its storage according to the following equation
\begin{align}
    \mathbf{S}_n^{(t+1)}=&\mathbf{S}^{(t)}_n+\mathbf{Q}^{(t)}_n.
\end{align}
Now let us complete the correctness proof by showing that the updated storage $\mathbf{S}_n^{(t+1)}, n\in[N]$ is of the same form as in \eqref{eq:nstor}. Recall that for all $n\in\mathcal{D}^{(t)}$, $\mathbf{S}_n^{(t+1)}=\mathbf{S}^{(t)}_n$ by definition \eqref{eq:uddropstor}, and according to the definition of $\mathbf{H}_i^{(t)},\ i\in[G^{(t)}]$, it is guaranteed that
\begin{align}
    \left(\mathbf{C}\dot{\mathbf{M}}^{(t)}\right)(\mathcal{D}^{(t)},:)=\mathbf{0},
\end{align}
Therefore, the updated storage at Server $n, n\in[N]$ can be written uniformly as follows
\begin{align}
\mathbf{S}^{(t+1)}_n =&\left(\mathbf{C}\left(\mathbf{M}^{(t)}+\dot{\mathbf{M}}^{(t)}\right)\right)(n,:).
\end{align}
The correctness is thus obvious according to Proposition \ref{prop:add}, i.e., $\mathbf{M}^{(t+1)}=\mathbf{M}^{(t)}+\dot{\mathbf{M}}^{(t)}=\FuncSty{SCGen}(\mathbf{W}^{(t+1)}, (\mathbf{Z}^{(t+1)}_i)_{i\in[G]})$, where the following recursion relation holds
\begin{align}
    \mathbf{W}^{(t+1)}=&\mathbf{W}^{(t)}+\bm{\Delta}^{(t)},\\
    \mathbf{Z}^{(t+1)}_i=&\mathbf{Z}^{(t)}_i+\dot{\mathbf{Z}}^{(t)}_i,~ i\in[G],\\
    \mathbf{Z}^{(t+1)}_i=&\mathbf{Z}^{(t)}_i,~ i\in[G^{(t)}+1:G].
\end{align}

Finally, let us show the existence of $\mathbf{H}^{(t)}_{1}, \mathbf{H}^{(t)}_{2}, \cdots, \mathbf{H}^{(t)}_{G^{(t)}}$. According to the construction, it suffices to show the existence of $\mathbf{H}^{(t)}_{1}, \mathbf{H}^{(t)}_{2}, \cdots, \mathbf{H}^{(t)}_{G^{(t)}}$ such that $\left(\mathbf{C}\dot{\mathbf{M}}_{i}^{(t)}\right)(\mathcal{D}^{(t)},:)=\mathbf{0}$ for all $i\in[G^{(t)}]$. First, let us construct $\mathbf{H}^{(t)}_{1}$ as follows.
\begin{align}
    \mathbf{H}^{(t)}_{1}=-&\mathbf{C}(\mathcal{D}^{(t)},[\alpha_1+X^{(t)}+1:\alpha_1+X^{(t)}+|\mathcal{D}^{(t)}|])^{-1}\notag\\
    &\times\left(\mathbf{C}(\mathcal{D}^{(t)},[\alpha_1])\dot{\mathbf{M}}^{(t)}_1([\alpha_1],:)
    +\mathbf{C}(\mathcal{D}^{(t)},[\alpha_1+1:\alpha_1+X^{(t)}])\ddot{\mathbf{Z}}^{(t)}_{1}\right),
\end{align}
where according to Algorithm \ref{alg:mat}, instead of a circular definition, $\dot{\mathbf{M}}^{(t)}_1([\alpha_1],:)$ indeed represents the reshaped version of the increment vector $\boldsymbol{\Delta}^{(t)}$.

It is evident to see that $\left(\mathbf{C}\dot{\mathbf{M}}_1^{(t)}\right)(\mathcal{D}^{(t)},:)=\mathbf{0}$. Now if $\mathbf{H}^{(t)}_{1},\mathbf{H}^{(t)}_{2},\cdots,\mathbf{H}^{(t)}_{i-1}$ are constructed such that for all $i'=1,2,\cdots,i-1$,
\begin{align}
    \left(\mathbf{C}\dot{\mathbf{M}}_{i'}^{(t)}\right)(\mathcal{D}^{(t)},:)=\mathbf{0},
\end{align}
then let us construct $\mathbf{H}^{(t)}_{i}$ as follows.
\begin{align}
    \mathbf{H}^{(t)}_{i}=-&\mathbf{C}(\mathcal{D}^{(t)},[\alpha_i+X^{(t)}+1:\alpha_i+X^{(t)}+|\mathcal{D}^{(t)}|])^{-1}\notag\\
    &\times\left(\mathbf{C}(\mathcal{D}^{(t)},[\alpha_i])\dot{\mathbf{M}}^{(t)}_i([\alpha_i],:)
    +\mathbf{C}(\mathcal{D}^{(t)},[\alpha_i+1:\alpha_i+X^{(t)}])\ddot{\mathbf{Z}}^{(t)}_{i}\right),
\end{align}
where similarly, according to Algorithm \ref{alg:mat}, $\dot{\mathbf{M}}^{(t)}_i([\alpha_i],:)$ represents the reshaped version of the vector $\left[\dot{\mathbf{M}}^{(t)}_1(R_r+i-1,:),\dot{\mathbf{M}}^{(t)}_2(R_r+i-2,:),\cdots,\dot{\mathbf{M}}^{(t)}_{i-1}(R_r+1,:)\right]$, which is given as $\dot{\mathbf{M}}_1^{(t)},\dot{\mathbf{M}}_2^{(t)},\cdots,\dot{\mathbf{M}}_{i-1}^{(t)}$ are now fixed.
According to our construction, we can see that
\begin{align}
    &\left(\mathbf{C}\dot{\mathbf{M}}_i^{(t)}\right)(\mathcal{D}^{(t)},:)\notag\\
    =&\mathbf{C}(\mathcal{D}^{(t)},[\beta_i])\dot{\mathbf{M}}_i^{(t)}([\beta_i],:)\\
    =&\mathbf{C}(\mathcal{D}^{(t)},[\alpha_i])\dot{\mathbf{M}}^{(t)}_i([\alpha_i],:)
    +\mathbf{C}(\mathcal{D}^{(t)},[\alpha_i+1:\alpha_i+X^{(t)}])\ddot{\mathbf{Z}}^{(t)}_{i}\notag\\
    &+\mathbf{C}(\mathcal{D}^{(t)},[\alpha_i+X^{(t)}+1:\alpha_i+X^{(t)}+|\mathcal{D}^{(t)}|])\mathbf{H}^{(t)}_{i}\\
    =&\mathbf{0}.
\end{align}
In other words, the constructed $\mathbf{H}^{(t)}_{1},\mathbf{H}^{(t)}_{2},\cdots,\mathbf{H}^{(t)}_{i}$ now guarantee that for all $i'=1,2,\cdots, i$, $\left(\mathbf{C}\dot{\mathbf{M}}_{i'}^{(t)}\right)(\mathcal{D}^{(t)},:)=\mathbf{0}$. The existence of $\mathbf{H}^{(t)}_{1}, \mathbf{H}^{(t)}_{2}, \cdots, \mathbf{H}^{(t)}_{G^{(t)}}$ is thus concluded by induction on $i$.

\section{Conclusion}\label{sec:conclu}
In this paper, we investigated the problem of robust dynamic coded distributed storage (RDCDS) and established the fundamental limits on the minimum number of available servers required for feasible updates, the minimum download cost for reads, and the minimum upload cost for updates. The results are particularly relevant given the growing demand for robust and efficient read and update functionalities in emerging coded distributed storage applications. An immediate objective for future work is to determine the fundamental limits of {\it private} read and update operations. Additionally, exploring the RDCDS problem under heterogeneous storage and recovery set constraints presents a promising research direction.

\bibliography{ref}
\bibliographystyle{IEEEtran}
\end{document}